\documentclass[sigconf, nonacm]{acmart}
\usepackage{hyperref}

\usepackage{cleveref}
\usepackage{etoolbox}
\newtoggle{fullversion}
\toggletrue{fullversion}
\usepackage{adjustbox}
\usepackage{caption}
\usepackage{scalerel}
\usepackage{graphicx}
\usepackage{xcolor}
\usepackage{tcolorbox}
\usepackage{amsmath,amsfonts}
\usepackage{ifsym}
\usepackage{mdframed}
\usepackage{listings}
\usepackage{wrapfig}

\renewcommand{\baselinestretch}{0.985}

\makeatletter
\let\c@lofdepth\relax
\let\c@lotdepth\relax
\makeatother
\usepackage{subfigure}
\tcbset{before={\par\pagebreak[0]\smallskip\parindent=0pt},after={\par\noindent}}
\usepackage{tree-dvips}
\usepackage{multirow}

\usepackage{qtree}
\newcommand{\sys}{\textsc{BLIP}\xspace}
\usepackage[ruled,linesnumbered,vlined]{algorithm2e}
\usepackage{algorithmic}
\usepackage{textcomp}
\usepackage{tikz}

\usetikzlibrary{positioning, shapes.geometric, arrows}
\usetikzlibrary{shapes, arrows.meta, fit, shapes.geometric, fit, backgrounds}
\definecolor{codegreen}{rgb}{0,0.6,0} 
\definecolor{darkblue}{rgb}{0.0, 0.0, 0.5}
\definecolor{darkblueilp}{RGB}{0, 0, 175}
\definecolor{darkred}{rgb}{0.5, 0.0, 0.0}
\definecolor{darkgreen}{rgb}{0.0, 0.5, 0.0}
\definecolor{lightgray}{gray}{0.95}
\newcommand{\appendixtext}[1]{#1}

\tcbuselibrary{listings, skins}
\usepackage[disable]{todonotes}
\usepackage[T1]{fontenc}
\usepackage{listings}

\newcommand{\topic}[1]{\vspace{.5pt} \noindent{\bf #1.}}

\newif\ifshowblock

\newcommand{\vldb}[1]{#1}
\newcommand{\sigmod}[1]{}

\newcommand{\techreport}[1]{#1}
\newcommand{\trs}[1]{#1\ignorespaces}
\newcommand{\papertext}[1]{}
\newcommand{\rone}[1]{{\color{black}#1}}
\newcommand{\rtwo}[1]{{\color{black}#1}}
\newcommand{\rthree}[1]{{\color{black}#1}}
\newcommand{\rmix}[1]{{\color{black}#1}}
\showblocktrue

\newcommand{\cover}[1]{}

\definecolor{myForestGreen}{RGB}{34,139,34}

\usepackage[normalem]{ulem}
\newcommand{\delete}[1]{\sout{}}
\newcommand{\cradd}[1]{\textcolor{black}{#1}}

\newcommand{\sep}[1]{\textcolor{red}{Sep: #1}}

\newcommand{\edit}[1]{\textcolor{black}{#1}}
 \newcommand{\anonymous}[2]{#2} 
\newcommand{\revised}[1]{\textcolor{black}{#1}}

\usepackage{xcolor}
\usepackage{tcolorbox}

\usepackage[normalem]{ulem}

\usepackage{tikz}

\newcounter{definition}
\newenvironment{definition}[1][]{\refstepcounter{definition}\par\smallskip\textsc{Definition~\thedefinition.\ #1}}{\smallskip}

\newcounter{problem}
\newenvironment{problem}[1][]{\refstepcounter{problem}\par\smallskip\textsc{Problem~\theproblem.\ #1}}{\smallskip}

\newcounter{assumption}

\newcommand{\code}[1]{\texttt{\small #1}}

\newcounter{example}
\newenvironment{example}[1][]{\refstepcounter{example}\par\smallskip\textsc{Example~\theexample.\ #1}}{$\square$\smallskip}

\newcounter{theorem}
\newenvironment{theorem}[1][]{\refstepcounter{theorem}\par\smallskip\textsc{Theorem~\thetheorem.\ #1}}{\smallskip}

\newcounter{proposition}

\usepackage{enumitem}
\newcommand{\squishlist}{
	\begin{list}{$\bullet$}
		{
			\setlength{\itemsep}{0pt}
			\setlength{\parsep}{1pt}
			\setlength{\topsep}{1pt}
			\setlength{\partopsep}{0pt}
			\setlength{\leftmargin}{1.5em}
			\setlength{\labelwidth}{1em}
			\setlength{\labelsep}{0.5em} } }
	
\newcommand{\squishend}{\end{list}}

\newcommand{\squishlistnum}{
	\begin{enumerate}
		{
			\setlength{\itemsep}{0pt}
			\setlength{\parsep}{1pt}
			\setlength{\topsep}{1pt}
			\setlength{\partopsep}{0pt}
			\setlength{\leftmargin}{1.5em}
			\setlength{\labelwidth}{1em}
			\setlength{\labelsep}{0.5em}
		}
}

\newcommand{\squishendnum}{\end{enumerate}}
\usepackage{multirow}

\usepackage{amsthm}
\usepackage{mathrsfs}

\newcommand\vldbdoi{10.14778/3836663.3836668}
\newcommand\vldbpages{2992 - 3005}
\newcommand\vldbvolume{19}
\newcommand\vldbissue{11}
\newcommand\vldbyear{2026}
\newcommand\vldbauthors{\authors}
\newcommand\vldbtitle{\shorttitle} 
\newcommand\vldbavailabilityurl{https://github.com/yiminl18/BLIP.git}
\newcommand\vldbpagestyle{empty}

\lstdefinestyle{SQLStyle}{
  language=SQL,
  showspaces=false,
  basicstyle=\ttfamily\footnotesize,
  commentstyle=\color{gray},
  mathescape=true,
  numbers=none,
  escapeinside={^}{^},
  captionpos=b,
  mathescape=false,
}
\newcommand{\mypython}[1]{%
  \begin{tcolorbox}[
    colback=gray!10,
    colframe=black,
    boxrule=0.5pt,
    arc=2pt,
    left=5pt,
    right=5pt,
    top=2pt,
    bottom=2pt
  ]
  \small\ttfamily #1
  \end{tcolorbox}
}

\NewDocumentCommand{\sidequote}{m +m O{black}}{%
 \begin{tcolorbox}[
enhanced,
breakable,
frame hidden,
interior hidden,
sharp corners,
borderline west={2.5pt}{0pt}{gray!50},
left=15pt,
right=0pt,
top=0pt,
bottom=0pt,
coltext=#3 
]
{\normalcolor\small\sffamily\bfseries #2}\par\medskip 
#1 
\end{tcolorbox}
}
 
\AtBeginDocument{%
  }

\setcopyright{none}
\renewcommand\footnotetextcopyrightpermission[1]{}

\begin{document}

\cover{\papertext{\title{Cover Letter for Revised Submission: Bolt-on, Verifiable Provenance for LLM-Powered Data Processing}}}

\cover{\papertext{\maketitle}}

\cover{\papertext{We thank the meta-reviewer and reviewers for their helpful feedback. We have carefully revised our paper as suggested. We first respond to meta-reviewer comments by providing a summary of changes to the paper,  followed by our responses to individual comments by reviewers. In the revised paper, we indicate changes made in response to reviewer R1 using \rone{blue}, to reviewer R2 using \rtwo{green}, and to reviewer R3 using \rthree{orange}. Changes addressing common comments raised from multiple reviewers are marked with \rmix{pink}. }}

\cover{\papertext{\input{metareviewer}}
\papertext{\input{R1}}
\papertext{\input{R2}}
\papertext{\input{R3}}}

\cover{\newpage
\setcounter{section}{0}
\setcounter{figure}{0}}

\title{Bolt-on, Verifiable Provenance for LLM-Powered Data Processing}



\author{Yiming Lin}
\affiliation{%
  \institution{University of California, Berkeley}
}
\email{yiminglin@berkeley.edu}

\author{Sepanta Zeighami}
\affiliation{%
  \institution{University of California, Berkeley}
}
\email{zeighami@berkeley.edu}

\author{Aditya G. Parameswaran}
\affiliation{%
  \institution{University of California, Berkeley}
}
\email{adityagp@berkeley.edu}


\begin{abstract}
Large Language Models (LLMs) are powerful tools for processing data. However, LLMs are also complex black-boxes, returning answers to queries on data, without any indication for where the answer came from or whether it is trustworthy. We introduce the notion of {\em provenance} for data processing with LLMs. While existing heuristics (such as embedding similarity or directly asking an LLM) could provide some hints for where the answer was derived, they provide no guarantees that the answer can be derived using the identified provenance, and indeed, are often incorrect. Instead, we propose the notion of {\em verifiable provenance} wherein we identify a subset of the input text that reproduces the same (or equivalent) answer as that on the complete text, and introduce the notion of {\em minimality}, where the verifiable provenance is as small as possible. To identify such a provenance, a naive solution would require checking all possible subsets of the source data with the LLM, which is prohibitively expensive. We present \sys, a bolt-on framework for efficiently inferring a small-sized verifiable provenance for any LLM-powered data processing task, with any LLM. As part of \sys, we introduce eight strategies, each guaranteed to find a minimal verifiable provenance, as well as an adaptive strategy that combines their strengths to reduce cost further. We further extend \sys to produce multiple minimal verifiable provenances. \revised{Experiments on \rthree{seven} datasets show that the provenance generated by \sys is always guaranteed to reproduce the answer—achieving over 30\% higher accuracy than the best-performing baseline with a comparable provenance size. Moreover, \sys incurs a low cost, comparable to the original query on the original data.   }
\end{abstract}

\maketitle

\pagestyle{\vldbpagestyle}
\begingroup\small\noindent\raggedright\textbf{PVLDB Reference Format:}\\
\vldbauthors. \vldbtitle. PVLDB, \vldbvolume(\vldbissue): \vldbpages, \vldbyear.\\
\href{https://doi.org/\vldbdoi}{doi:\vldbdoi}
\endgroup
\begingroup
\renewcommand\thefootnote{}\footnote{\noindent
This work is licensed under the Creative Commons BY-NC-ND 4.0 International License. Visit \url{https://creativecommons.org/licenses/by-nc-nd/4.0/} to view a copy of this license. For any use beyond those covered by this license, obtain permission by emailing \href{mailto:info@vldb.org}{info@vldb.org}. Copyright is held by the owner/author(s). Publication rights licensed to the VLDB Endowment. \\
\raggedright Proceedings of the VLDB Endowment, Vol. \vldbvolume, No. \vldbissue\ %
ISSN 2150-8097. \\
\href{https://doi.org/\vldbdoi}{doi:\vldbdoi} \\
}\addtocounter{footnote}{-1}\endgroup

\ifdefempty{\vldbavailabilityurl}{}{
\vspace{.3cm}
\begingroup\small\noindent\raggedright\textbf{PVLDB Artifact Availability:}\\
The source code, data, and/or other artifacts have been made available at \url{\vldbavailabilityurl}.
\endgroup
}

\vspace{-2mm}
\section{Introduction}
\label{sec:introduction}
Large Language Models (LLMs) have shown impressive capabilities
in understanding, processing, and generating data, 
driving a wide range of data processing tasks, including data extraction, 
transformation, cleaning, and enrichment, as well as question-answering~\cite{zhao2024chat2data, huang2024transform, naeem2024retclean, buss2023generating, zhu2024autotqa} over structured~\cite{jin2022survey,nan2022fetaqa} and unstructured data~\cite{zhuang2023toolqa}. 
\rone{Recent declarative systems~\cite{shankar2024docetl,lin2024towards,patel2024lotus,liu2024declarative,wang2025unify,jo2024thalamusdb,satriani2025logical,sun2025quest,patel2024semantic,russo2025abacus}}, 
as well as SQL extensions to
data warehouses~\cite{databricks-llm,snowflake-llm, duckdb-llm,alloydb-llm} 
enable LLM-powered operators over textual columns, 
supporting similar tasks such as extraction, cleaning, and synthesis.
Most such LLM-powered data processing tasks or operators 
boil down to using an LLM 
to synthesize an answer for a given question 
from source data provided in-context, 
which could include 
structured or unstructured data. 
For example, data extraction can be framed as 
a natural language question, with the source data (documents or tables) provided in context. Or,  
in tabular question answering (TableQA)~\cite{jin2022survey,nan2022fetaqa}, 
the question is placed in the prompt, with
the relevant tables provided in context. 
In the following,
we refer to this context or source data as {\em text}, 
since both structured and unstructured data are converted 
into a textual form for use as context, comprising
one or more {\em sentences},
and to the data processing task or operator, expressed in natural language, as a {\em question}. 


When using LLMs for data processing, 
 it is {\em not clear how the LLM
response may have been derived, and if
the response can be trusted}.
Although LLM capabilities continue to improve, they are far from perfect~\cite{zhang2024benchmarking,liu2024lost}, often with accuracy less than 75\% in domains such as law, medicine, and finance~\cite{liang2023holistic, guha2023legalbench, islam2023financebench, arora2025healthbench, liu2024large}. 
Gaining human trust in how answers
are produced and whether those answers are correct can be critical, 
especially when a wrong answer may lead to serious consequences. 
Consider a real-world use case: 
analysis of police use-of-force records, 
\anonymous{provided by our collaborator X\footnote{Name omitted for anonymity.}}{from the Police Records Access project, a collaboration that we are part of, along with various journalism and legal  partners~\cite{clean} to build the first-ever statewide
use-of-force database\footnote{Available at: \url{https://clean.sfchronicle.com} or \url{https://clean.latimes.com}}}.  
As in Figure~\ref{fig:example}, journalists aim to extract
names of officers involved in the use of force, from a 35-page document, and repeat this for 
over 3500 cases in the project, spanning over 1.5M pages~\cite{clean}.    
The LLM \code{gpt-4o-mini} returns the answer {\em Officer Anderson, Maya}. 
Even if the LLM has high overall accuracy,
that does not suffice to determine correctness for this specific query, 
where errors are especially likely
with large context sizes ~\cite{liu2024lost,bai2023longbench}. 
To ensure correctness,
our journalism collaborators 
often manually review each text document (of dozens of pages).
Here, an engineer on the project
developed a custom keyword search approach to identify potentially relevant
text portions for the answer (e.g., mentions of ``Officer'', ``Captain''), 
reducing review time by 7$\times$, but introducing many false positives and negatives. 
Overall, 
as the text size increases, 
verifying LLM answers becomes increasingly challenging. 
Ensuring human trust in answers is critical across domains, including analysis over medical, legal, financial, scientific, governmental, and journalistic documents. 
Effective mechanisms for verifying answer correctness 
are key to ensuring  usability of LLM-powered applications.



\begin{figure}[tb]
    \centering
    \includegraphics[width=0.9\linewidth]{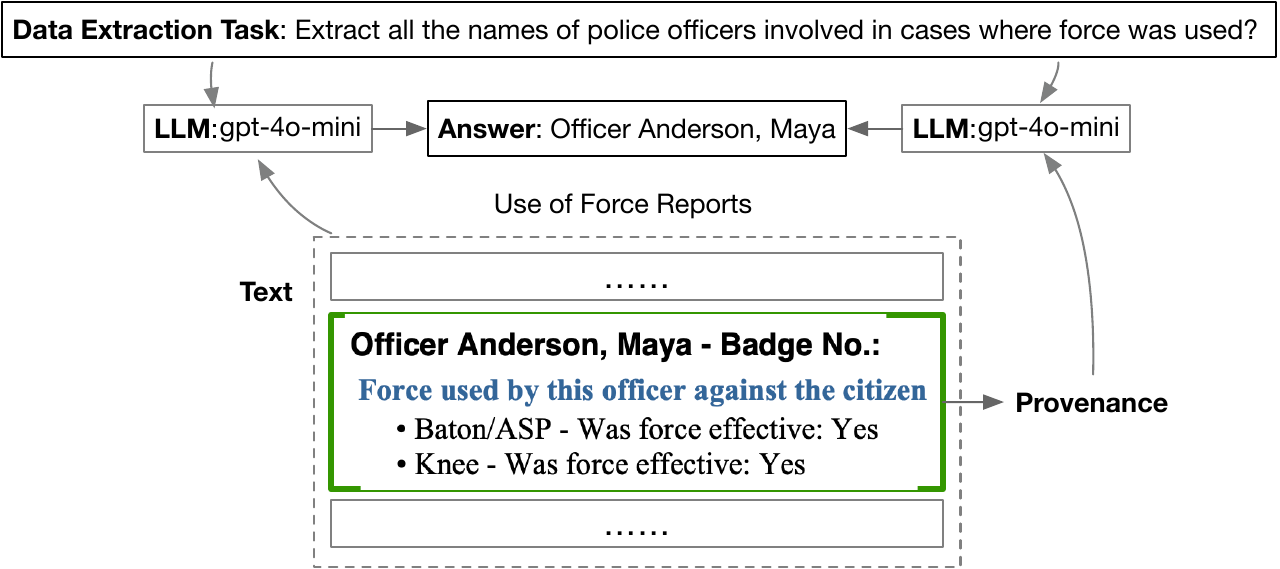}
    \vspace{-1em}
    \caption{\small Verifiable Provenance in Police Use of Force Records. (Actual values in police records have been replaced for privacy.)}  
    \vspace{-1em}
    \label{fig:example}
\end{figure}


In this paper, we revisit 
a traditional database concept in the context of verification of LLM-generated answers: {\em provenance},
which, for now, we will informally refer to as a text portion 
from which the answer may have been derived. 
In the use of force analysis in Figure~\ref{fig:example}, 
if supporting {\em evidence} is provided, 
such as a fragment of the document that mentions the use of force types and 
their associations with the officer’s name, 
users can understand how the answer may have been generated 
and better assess its correctness.

\topic{Current approaches to infer provenance are not reliable}  
To infer the provenance of an answer, one could leverage 
\revised{retrieval approaches by} identifying the top-$k$ sentences in the given text 
with the highest relevance scores to the question~\cite{sankararaman2024provenance,gao2023enabling,cohen2024contextcite}. 
These scores can be computed post hoc,
using embedding similarities~\cite{gao2023enabling} or next-token probabilities~\cite{cohen2024contextcite}.
However, these approaches do not guarantee
that the provenance is sufficient to reproduce the answer. 
\rthree{Table~\ref{tab:agreement} shows the accuracy of one such retrieval method based on embedding similarity, i.e., via RAG~\cite{lewis2020retrieval},  across five workloads: \code{Qasper}~\cite{dasigi2021dataset}, \code{NL\_DEV}~\cite{kwiatkowski2019natural},  \code{HotpotQA}~\cite{yang2018hotpotqa},  \code{CUAD}~\cite{hendrycks2021cuad}, and \code{PubMedQA}~\cite{jin2019pubmedqa}, each with 500 question-document pairs.} We vary the size of retrieved context (i.e., the inferred provenance) between 1\%, 5\%, and 10\% of the full text. 
To assess whether the retrieved context can reproduce the answer, we ask the same question to the same LLM (gpt-4o) with this context, and report accuracy as the fraction of cases where the resulting answer agrees with the one generated from the full text.\footnote{If the answers aren't identical, we use LLM-as-a-judge~\cite{gu2024survey,zheng2023judging} to determine accuracy, which is known to be effective at identifying semantically equivalent answers.} Unfortunately, {\em even when the retrieved context includes 10\% of the full text, accuracy is below 0.8}, making it neither reliable nor effective for human review.



\begin{table}[bt]
\small
\centering
\scalebox{0.76}{
\begin{tabular}{l|l|c|c|c|c|c}
\hline
\textbf{Method} & \textbf{Configuration} & \textbf{Qasper} & \textbf{NL\_DEV} & \textbf{HotpotQA} & \textbf{\rthree{CUAD}} & \textbf{\rthree{PubMedQA}} \\ \hline
\multirow{3}{*}{\textbf{Retrieval}} 
& \textbf{RAG-1\%} & 0.19 & 0.17 & 0.12 & \rthree{0.23} & \rthree{0.09} \\ 
& \textbf{RAG-5\%} & 0.46 & 0.41 & 0.35 & \rthree{0.39} & \rthree{0.24} \\ 
& \textbf{RAG-10\%} & 0.78 & 0.69 & 0.55  & \rthree{0.62} & \rthree{0.49}\\ \hline
\multirow{3}{*}{\textbf{LLM}} 
& \textbf{gpt-4o-mini} & 0.62 & 0.68 & 0.65 & \rthree{0.59} & \rthree{0.71}\\ 
& \textbf{gpt-4o} & 0.65 & 0.77 & 0.79 & \rthree{0.71} & \rthree{0.76} \\ 
& \textbf{gemini-2-flash} & 0.41 & 0.37 & 0.61 & \rthree{0.55} & \rthree{0.68}\\ \hline
\end{tabular}}
\caption{\small \revised{Accuracy of retrieval-based and LLM-generated provenance.}}
\label{tab:agreement} 
\vspace{-3.5em}
\end{table}




Another method is to directly
ask the LLM to identify the provenance of its answer
from the text, 
by prompting it to provide the line numbers
of the relevant sentences (by prefixing each sentence  with line numbers), 
as listed \papertext{in \vldb{~\cite{blip}.}} \techreport{in \code{LLM-provenance-prompt} below.} 
\revised{Unfortunately, as shown in Table~\ref{tab:agreement},} \code{gpt-4o-mini}, \code{gpt-4o}, and \code{gemini-2-flash} return provenance that reproduces the answer in only 65\%, 74\%, and 53\% of the cases, respectively---showing that LLMs are unable to identify the provenance 
that can reproduce the same (or equivalent) answer. \revised{When a provenance fails to reproduce the answer (e.g., those generated by retrieval or LLMs), human verification defaults to reading the entire document, significantly increasing human effort. 
} 

\techreport{
\noindent\underline{\code{LLM-provenance-prompt}}: \textcolor{blue}{[Question]}, \textcolor{blue}{[Answer]}, and \textcolor{blue}{[Context]} are placeholders for question $Q$, answer $A$, and the full text $T$.  }

\techreport{
\mypython{
\small 
    LLM-Provenance-Prompt: Given the following question: \textcolor{blue}{[Question]}, the corresponding answers are: \textcolor{blue}{[Answer]}. Your task is to extract the set of sentences from the provided context that contribute to generating these answers. Identify the most relevant sentences that support the given answers. Do not add explanations. Only return a list of sentence IDs. Do not return any words. The context is as follows: \textcolor{blue}{[Context]} }
}

Finally, other approaches~\cite{lee2025llm} 
require access to the model to compute attribution scores
for tokens based on training data but rely on access to training data, which 
is unavailable for black-box LLMs (e.g., OpenAI models) and frequently withheld for open-source models (e.g., Llama3). 
They also don't apply when operating on user-provided text, which is not part of the training data. 


\topic{Verifiable provenance}
We use insights from the previous experiment
to introduce the concept of {\em verifiable provenance} over text  
for LLM-powered data processing tasks.
Instead of arbitrarily selecting a set of sentences 
based on their relevance to the question and answer, 
we require the provenance to be {\em verifiable}, 
i.e., the provenance is a subsequence of the text (e.g., green box in Figure~\ref{fig:example}) 
that can {\em reproduce the same (or equivalent) 
answer as the one obtained from the full text when queried using the same LLM}. 
In the example in Figure~\ref{fig:example}, 
the provenance in the green box 
is a subsequence of the original text, 
and asking the same question on just this provenance with the same LLM \code{gpt-4o-mini} 
leads to the same answer, {\em Officer Anderson, Maya}. 
Instead, the results in Table~\ref{tab:agreement} 
show that roughly {\em half of the LLM-generated provenance is not verifiable}. 

\topic{Conciseness of verifiable provenance and its challenges}
To help users easily assess answer correctness, 
the verifiable provenance must additionally be {\em small-sized}. 
Otherwise, the original text could itself serve as a provenance, but reviewing this is difficult and time-consuming.  
Identifying a verifiable  provenance that is also concise is nontrivial. 
One naive approach would involve considering all subsets of $k \in [0, n]$ sentences 
of the text with $n$ sentences \revised{and verifying whether the enumerated subset produces an equivalent answer with the LLM, which would scale as $2^n$. }
Another approach would 
start from the given text as the initial provenance, and 
iteratively remove a sentence at a time  
if the remaining still produces the same answer 
when using the same LLM---until no further sentences can be removed.
However, this approach incurs significant cost and latency, 
as it requires a number of LLM calls 
proportional to the number of sentences in the text. 
This baseline, on the Qasper dataset  with 9,000 tokens on average, 
takes $>$2 hours, incurring  $>$100$\times$ the 
cost of using an LLM for answering the question on the original text.

\topic{Merely a \sys to identify verifiable, concise provenance}  
We introduce \sys\footnote{Short for {\bf B}olt-on {\bf L}LM ver{\bf I}fiable {\bf P}rovenance.},
{\em a provenance discovery framework comprising 
a family of intuitive strategies to efficiently infer small-sized verifiable provenance}. 
We define the notion of {\em minimality} of verifiable provenance, 
where removing any sentence from the provenance 
results in a different answer.  
We further define
the notions of {\em strong} and {\em weak monotonicity} to
characterize the types of tasks for which
minimal provenance
provides additional guarantees, while
empirically showing that those tasks
dominate our workloads---and contrast
these notions  to those in the provenance
literature~\cite{cheney2009provenance,buneman2001and,green2007provenance,glavic2021data}.

As part of \sys, 
we present a two-phase approach that is {\em guaranteed to find a minimal verifiable provenance}. 
The first phase aggressively returns a small-sized verifiable provenance 
that may not be minimal.  
In this phase, we explore four strategies, 
including ranking text portions
based on their relevance to the question and answer, 
as well as different enumeration orders. 
The second phase takes the provenance returned from the first phase and  guarantees a minimal provenance. This two-phase approach, though intuitive, exhaustively explores 10 strategy combinations, each producing a minimal provenance. 
Using a cost model that we introduce, 
we present a theoretical cost comparison, and propose an adaptive strategy  
that integrates their best features. 
We further employ KV-caching~\cite{cai2024pyramidkv,liu2024minicache,adnan2024keyformer} to optimize each individual strategy to minimize inference cost. 
Finally, we extend \sys to support the discovery of multiple minimal
provenances rather than just one, to support use cases
where multiple pieces of evidence are helpful. 



Overall, \sys can be applied in a bolt-on manner, 
is compatible with any LLM model, 
and offers a theoretical guarantee 
of minimal provenance at a low cost.  Our contributions include:
\begin{itemize}[leftmargin=*]
    \item We {\em \revised{are the first} to introduce the notion of verifiable provenance for LLM-powered data processing}, outline key properties, including verifiability, minimality, and monotonicity, and 
    characterize the family of tasks where verifiable provenance applies.  
    \item We conduct an exhaustive exploration of intuitive  but effective strategies to efficiently infer minimal provenance at low cost.   
    \item We present a theoretical analysis of correctness and cost of the strategies, and use it to develop an adaptive strategy. 
    \item We perform comprehensive evaluations across \rthree{seven} datasets, demonstrating that it is possible to infer minimal provenance with  {\bf guaranteed 100\% accuracy}, achieving \textbf{over 30\%} higher accuracy than the best-performing baseline, at a low cost—comparable to that of performing the original task. 
\end{itemize}
\noindent We define the notion of provenance and present an  overview of \sys in Section~\ref{sec:def}. We then present our approaches for finding a single minimal provenance and top-$k$ minimal provenances in Section~\ref{sec:1-provenance} and  Section~\ref{sec:k-provenance}, respectively. We  evaluate \sys in Section~\ref{sec:exp}. 



\section{Problem Definition and Overview} 
\label{sec:def}
\revised{We begin by formally defining the concept of provenance (Section~\ref{sec:prov-def}) and outlining our cost model (Section~\ref{subsec:cost-model}), followed by a description of our problem statement (Section~\ref{subsec:problem-def}). We further define the applicable tasks (Section~\ref{subsec:strong-weak}) and questions (Section~\ref{subsec:database-prov-relationship}) for which our provenance is useful, and discuss the impact of LLM behavior (e.g., non-determinism) on provenance (Section~\ref{subsec:llm-uncertainty}). 
We finally present an overview of \sys (Section~\ref{subsec:overview}).  }

\vspace{-1mm}
\subsection{Provenance Definition} \label{sec:prov-def}
Consider a data processing task generally represented as a natural language {\em question} $Q$. 
Let $L$ be the Large Language Model (LLM), and $T$ be the input {\em text}, 
i.e., source data to process.  
$T$ consists of a sequence of sentences, 
i.e., $T = \langle s_1, s_2, \ldots, s_n \rangle$, 
where $i$ is the {\em index} of sentence $s_i$. 
Let $A$ be the {\em response} (or answer) 
to question $Q$ in text $T$ using LLM $L$, 
denoted as $L(T, Q) = A$. 
Let $P \sqsubseteq T$ be a subsequence of $T$, i.e., 
$P = \langle s_{i_1},s_{i_2},...,s_{i_m} \rangle$, where $1 \leq i_1 < ... < i_m \leq n$. 

\begin{definition}
    [Verifiable Provenance.]~\label{def:provenance} Given text $T$, question $Q$, and LLM $L$, where $L(T, Q) = A$, we define the subsequence $P\sqsubseteq T$ to be a {\em verifiable provenance} if $L(P, Q) = A'$, where $I(A, A') = \text{True}$. 
\end{definition}
\vspace{-3mm}

\noindent The indicator function $I(A, A')$ returns {\em True} if $A$ and $A'$ are lexically identical or semantically equivalent. 
For settings where $A$ 
is drawn from a closed domain (e.g., true/false or zip code), 
users can opt for $I(A, A')$ as $A$=$A'$. Otherwise, 
we leverage an LLM as a judge~\cite{gu2024survey,zheng2023judging} to evaluate
semantic equivalence \revised{(e.g., ``2012'' and ``year 2012''  judged equivalent)}.  
This approach effectively handles equivalent but non-identical LLM-generated responses. \rmix{To show this, we conducted an evaluation on 2000 text pairs from the GLUE benchmark~\cite{wang2018glue}, where each pair contains two text snippets with a label indicating  equivalence. We evaluated LLM-as-a-judge using five LLMs with different capabilities and four prompts, including a handwritten prompt and an LLM-rewritten variant, each with and without examples. \papertext{Details of prompts and experimental setup can be found in ~\cite{blip}}. \techreport{Details of prompts and experimental setup can be found below.} 
As shown in Table~\ref{tab:llm-as-a-judge}, LLM-as-a-judge is effective with around 94\% accuracy on average across models.} 
\techreport{The LLM used to evaluate $I(A, A')$ does not 
need to be the one used to answer $Q$ on $T$. }


\techreport{\rmix{\topic{\code{response\_equal\_human}}: \textcolor{blue}{[sentence\_1]} and \textcolor{blue}{[sentence\_2]} are placeholders for two test sentences. 
\mypython{
\small 
    llm\_equal\_human: Are the following two sentences semantically equivalent? Please respond with True if they are, and False if they are not. Sentence 1: \textcolor{blue}{[sentence\_1]}. Sentence 2: \textcolor{blue}{[sentence\_2]}. }}}

\techreport{\rmix{\topic{\code{response\_equal\_human\_example}}: \textcolor{blue}{[sentence\_1]} and \textcolor{blue}{[sentence\_2]} are placeholders for two test sentences. 
\mypython{
\small 
    llm\_equal\_human\_example: Are the following two sentences semantically equivalent? Please respond with True if they are, and False if they are not. Examples:
    Example 1:
    Sentence 1: The cat is sleeping on the sofa.
    Sentence 2: A cat is lying on the couch asleep.
    Answer: True
    Example 2:
    Sentence 1: The company reported a profit of 2 million dollars.
    Sentence 2: The company reported a loss of 2 million dollars.
    Answer: False 
Sentence 1: \textcolor{blue}{[sentence\_1]}. Sentence 2: \textcolor{blue}{[sentence\_2]}. }}}

\techreport{\rmix{\topic{\code{response\_equal\_llm}}: \textcolor{blue}{[sentence\_1]} and \textcolor{blue}{[sentence\_2]} are placeholders for two test sentences. 
\mypython{
\small 
    llm\_equal\_2: Determine whether the following two sentences are semantically equivalent. Return True if they are, and False otherwise. Are the following two sentences semantically equivalent? 
Two sentences are semantically equivalent if they express the same meaning, even if the wording is different. 
Minor paraphrasing, reordering, or synonym replacement does NOT change equivalence. 
Differences in factual content, numerical values, negation, or core claims mean they are NOT equivalent. 
Sentence 1: \textcolor{blue}{[sentence\_1]}. Sentence 2: \textcolor{blue}{[sentence\_2]}. } } }

\techreport{\rmix{\topic{\code{response\_equal\_llm\_example}}: \textcolor{blue}{[sentence\_1]} and \textcolor{blue}{[sentence\_2]} are placeholders for two test sentences. 
\mypython{
\small 
    llm\_equal\_2: Determine whether the following two sentences are semantically equivalent. Return True if they are, and False otherwise. Are the following two sentences semantically equivalent? 
Two sentences are semantically equivalent if they express the same meaning, even if the wording is different. 
Minor paraphrasing, reordering, or synonym replacement does NOT change equivalence. 
Differences in factual content, numerical values, negation, or core claims mean they are NOT equivalent. Examples: 

Example 1:
Sentence 1: She did not attend the meeting.
Sentence 2: She skipped the meeting.
Answer: True

Example 2:
Sentence 1: The experiment was conducted in 2020.
Sentence 2: The experiment was conducted in 2021.
Answer: False 
Sentence 1: \textcolor{blue}{[sentence\_1]}. Sentence 2: \textcolor{blue}{[sentence\_2]}. } } }


Intuitively, any subsequence of $T$ that returns (lexically or semantically) equivalent answer to $A$ using the same LLM is considered a verifiable provenance. 
When obvious from  context,  we refer to a
verifiable provenance simply as {\em provenance}. 
In practice, users prefer the provenance to be small (i.e., with fewer sentences) for easier interpretation. Thus, we define the notion of a {\em minimal provenance}. 

\begin{definition}
    [Minimal Provenance.]~\label{def:minimal-provenance} A provenance $P$ is {\em minimal}, if $\forall s\in P, I(L(P\setminus s, Q), A) = False$. 
\end{definition}

\noindent That is, a provenance $P$ is minimal if removing any sentence from $P$ results in an answer that is not equivalent to $A$. 

\begin{definition}
    [Strictly-Minimal Provenance.]~\label{def:strict-minimal-provenance} A provenance $P$ is {\em strictly minimal}, if $\nexists P' \sqsubset P$,  $I(L(P', Q), A) = True$.
\end{definition}

\noindent A provenance $P$ is strictly minimal if there does not exist any subsequence of $P$ that is a provenance. Any strictly minimal provenance is also minimal, by definition. 
In practice, a minimal provenance is often sufficient, as a small-sized, rather than strictly minimal, provenance is typically adequate
for interpreting answers. 
As we will show shortly, under a commonly 
observed assumption for LLMs, 
a minimal provenance is equivalent to a strictly minimal one.

\begin{table}[bt]
\small 
\centering
\scalebox{0.7}{
\begin{tabular}{l|c|c|c|c|c}
\hline
 & \textbf{gpt-4o} & \textbf{gpt-4o-mini} & \textbf{gpt-5} & \textbf{gpt-5-mini} & \textbf{gemini-2-flash} \\ \hline
{\em response\_equal\_human} & 96.3\% & 91.2\% & 94.2\% & 90.4\% & 94.7\% \\ \hline 
{\em response\_equal\_human\_example} & 96.6\% & 90.8\% & 95.1\% & 90.9\% & 94.3\% \\ \hline 
{\em response\_equal\_llm} & 96.8\% & 90.8\% & 94.7\% & 91.1\% & 95.2\% \\ \hline 
{\em response\_equal\_llm\_example} & 96.4\% & 91.5\% & 93.4\% & 91.6\% & 95.7\% \\ \hline 
\end{tabular}}
\caption{\rmix{\small Effectiveness of LLM-as-a-judge.}} 
\label{tab:llm-as-a-judge}
\vspace{-3em}
\end{table}

\vspace{-1mm}
\subsection{Cost Model and KV-Cache} 
\label{subsec:cost-model}
Let $PT = \langle Q, T \rangle$ denote the prompt formed by concatenating the question $Q$ with the text $T$. We use $L(T, Q)$ and $L(PT)$ interchangeably throughout the paper when it does not lead to confusion. Let $c_{PT}$ represent the monetary cost of a single LLM invocation of $L$ with input $PT$, returning a response $L(PT)$. We have:
\begin{equation}
    c_{PT} = |PT|\cdot c_{in} + |L(PT)|\cdot c_{out}
\end{equation}
\noindent Here, $|\cdot|$ denotes the number of tokens, and $c_{in}$, $c_{out}$ refer to the unit cost of model $L$ for input and output tokens, respectively. For example, when $L$ is \code{gpt-4o-mini}, $c_{in}$ is \$$\frac{0.15}{10^6}$ and $c_{out}$ is \$$\frac{0.6}{10^6}$.

If two prompts $PT_i$ and $PT_j$ share a prefix, denoted $Pre_{i,j}$, then $Pre_{i,j}$  is considered cached tokens, employing the {\em KV-cache}~\cite{cai2024pyramidkv,liu2024minicache,adnan2024keyformer}. For example, with \code{gpt-4o-mini} and \code{gemini-2-flash}, cached tokens are 2$\times$ and 4$\times$ cheaper than uncached tokens, respectively.
In particular, if the prompt $PT_i$ has already been computed using $L$, then the cost of  inference on $PT_j$, denoted $c_{PT_j \mid PT_i}$, is
\begin{equation}
    c_{PT_j | PT_i} = \frac{|Pre_{i,j}|\cdot c_{in}}{f_L} + |PT_j \setminus Pre_{i,j}|\cdot c_{in} + |L(PT_j)|\cdot c_{out}
\end{equation}
\noindent Here, $f_L$ is the cost reduction factor for cached tokens in model $L$. For example, $f_{\text{gpt-4o-mini}} = 2$, while $f_{\text{gemini-2-flash}} = 4$.

\begin{table}[bt]
\scriptsize
\centering
\scalebox{0.9}{
\begin{tabular}{c|c||c|c|c|c}
\hline
\textbf{LLMs} & \textbf{Task} & \textbf{Qasper} & \textbf{NL\_DEV} & \textbf{HotpotQA} & \textbf{Average} \\ \hline
\multirow{2}{*}{\textbf{gpt-4o-mini}} 
 & SM & 97\% & 87\% & 88\%  & 90.7\% \\ 
 & WM & 99\% & 96\% & 98\%  & 97.7\% \\ \hline
\multirow{2}{*}{\textbf{gpt-4o}} 
 & SM & 98\% & 93\% & 91\%  & 94.0\% \\ 
 & WM & 100\% & 98\% & 98\%  & 98.7\% \\ \hline
\multirow{2}{*}{\textbf{gemini-2-flash}} 
 & SM & 95\% & 88\% & 81\%  & 88.0\% \\ 
 & WM & 97\% & 96\% & 97\%  & 96.7\% \\ \hline
\end{tabular}}
\caption{\small Percentage of Strongly-Monotonic (SM) and Weakly-Monotonic (WM) Tasks, with per‑row averages.}
\label{tab:tasks}
\vspace{-4em}
\end{table}

\subsection{Problem Definitions}
\label{subsec:problem-def} 

We are now ready to state our problem more formally:
\begin{problem}
    [Provenance Inference.]~\label{def:1-provenance} 
    Given text $T$, question $Q$, LLM $L$ and answer $A=L(T,Q)$, identify a minimal provenance. 
\end{problem}

\noindent The minimal provenance  
in Definition~\ref{def:minimal-provenance} for $\langle Q, A \rangle$ 
within text $T$ may not be unique. 
In our example in Figure~\ref{fig:example}, 
the same officers may be listed multiple times, in the internal affairs report, 
\trs{court hearings report,} as well as incident summary. 
Let $MP = \{mp_1, \dots, mp_n\}$ be the set of distinct minimal provenances within $T$.  
Although in most cases, identifying a single minimal provenance $mp_i \in MP$
is sufficient to provide evidence for human verification,
there are scenarios where multiple minimal provenances
may be helpful, as we will outline in Section~\ref{sec:k-provenance}.
We therefore define the {\em $k$-Provenance Inference} problem below. 
\begin{problem}
    [k-Provenance Inference.]~\label{def:k-provenance} Given text $T$, question $Q$, LLM $L$, and answer $A = L(T,Q)$, identify $k$ distinct minimal provenances,  $\{mp_{i1},mp_{i2},...,mp_{ik}\}$, where $\forall ij, mp_{ij} \in MP$. 
\end{problem}

\vspace{-2mm}
\subsection{Strong and Weak Monotonicity}\label{subsec:strong-weak}
As described previously, our goal, depending on the problem,
is to find one or $k$ minimal provenances.
There is a danger, however, of getting stuck 
at a ``local minimum'', where the provenance $P$ returned is minimal,
but there is a subsequence $P' \sqsubset P$ that is also a provenance---a scenario 
that is addressed by strict minimality.
Here, we introduce two properties and empirically study 
their adherence for our tasks.
We denote a task by the triple $\langle Q,T,L\rangle$.
\begin{definition}
    [Strongly Monotonic Tasks.]~\label{def:strong-mono} A task $\langle Q,T,L\rangle$ with $L(T, Q) = A$, is  {\em strongly monotonic} if for any provenance $P\sqsubset T$, $\forall P'$ where $\forall P\sqsubset P' \sqsubset T$, $P'$ is also a provenance. 
\end{definition}


\noindent A task is {\em strongly monotonic} 
if any superset of any provenance is also a provenance. 
While this may appear restrictive, 
we empirically show in Table~\ref{tab:tasks} 
that strongly monotonic tasks are still common, around 90\% on average
across real workloads with multiple LLMs. (Experimental setup is introduced later.) 
We next introduce the
notion of {\em weakly monotonic} tasks. 

\begin{definition}[Weakly Monotonic Tasks.]
\label{def:weak-mono}
A task $\langle Q,T,L\rangle$ with $L(T,Q)=A$ is {\em weakly monotonic} 
if for any provenance $P \sqsubset T$, where $|P| = i$ and $|T| = n$,
there is a sequence of provenances $P_{i+1}, P_{i+2}, ..., P_{n-1}$
such that $|P_{i+k}| = i+k$, and $P \sqsubset P_{i+1}\sqsubset  P_{i+2}\sqsubset  ...\sqsubset  P_{n-1}\sqsubset T$.
\end{definition}

\begin{figure}[tb]
    \centering
    \includegraphics[width=0.5\linewidth]{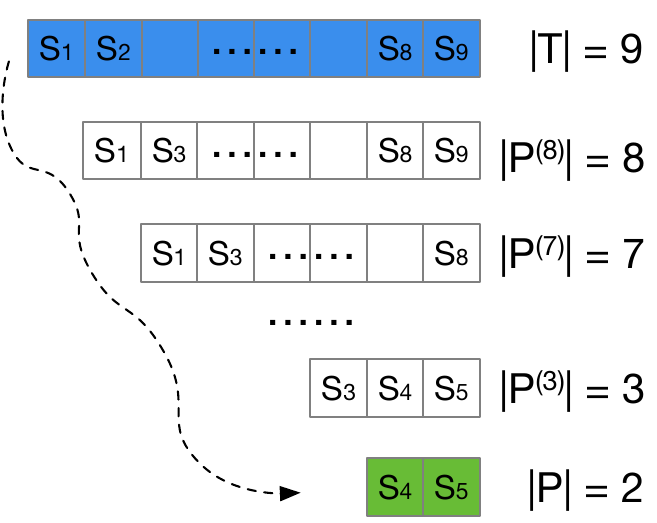}
    \vspace{-1em}
    \caption{\small Weak Monotonicity. }  
    \vspace{-1.8em}
    \label{fig:weak-mono}
\end{figure}
\noindent Instead of requiring that every
superset $P'$ of $P$ is a provenance
as in strong monotonicity, as shown in Figure~\ref{fig:weak-mono}, 
a weakly monotonic task simply requires a {\em chain of provenances}, 
covering every distinct size from $|P|$ to $|T|$, where each larger 
provenance is a superset of smaller ones. 
Weak monotonicity is useful  
because it implies the existence of a {\em path} 
to a minimal 
provenance (e.g., $\langle s_4, s_5\rangle$) 
from  $T$ (e.g., $\langle s_1, \dots, s_9\rangle$). 
We find that {\em weakly monotonic tasks are common,
with a prevalence of {\bf nearly over 95\%} across
workloads and LLMs}. 

We finally connect our notion of weak monotonicity
with the notion of minimal provenance, showing that, for weakly monotonic tasks,
a minimal provenance is also strictly minimal.
\begin{theorem}~\label{theo:minimal-eq}
    For a weakly monotonic task $\langle Q,T,L\rangle$, 
    all minimal provenances $P\sqsubseteq T$ are also strictly minimal. 
\end{theorem}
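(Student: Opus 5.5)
The proof goes by contradiction. Let $P \sqsubseteq T$ be a minimal provenance for the weakly monotonic task $\langle Q,T,L\rangle$, and suppose $P$ is not strictly minimal. Then, by Definition~\ref{def:strict-minimal-provenance}, there is a proper subsequence $P' \sqsubset P$ with $I(L(P',Q),A)=True$, and we may write $|P'| = i < |P| = m$. The plan is to use weak monotonicity to build a chain of provenances that climbs from $P'$ up to $P$ one sentence at a time. The last link below $P$ is then a provenance $P_{m-1} \sqsubset P$ with $|P_{m-1}| = m-1$. A subsequence of $P$ with exactly one fewer sentence must equal $P \setminus s$ for some $s \in P$. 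This gives $I(L(P\setminus s, Q), A) = True$, contradicting Definition~\ref{def:minimal-provenance}.

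The key steps, in order, are as follows.
\begin{enumerate}
\item Note that $P' \sqsubset P \sqsubseteq T$, so $P'$ is a provenance strictly contained in $T$ and Definition~\ref{def:weak-mono} applies to it.
\item Obtain a chain $P' \sqsubset P_{i+1} \sqsubset \cdots$ of provenances whose sizes increase by exactly one at each step.
\item Locate the element of size $m-1$ and check that it lies inside $P$.
\item Conclude by the counting argument above: a size-$(m-1)$ subsequence of the size-$m$ sequence $P$ is obtained by deleting a single sentence.
\end{enumerate}
If $m = n$, i.e.\ $P = T$, the chain given literally by Definition~\ref{def:weak-mono} ends at $T = P$. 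Its element $P_{n-1}$ is then a provenance of size $n-1$ inside $T$, and the argument closes immediately.

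The main obstacle is step 3 when $P \sqsubset T$ is proper. Definition~\ref{def:weak-mono} only guarantees \emph{some} chain from $P'$ to $T$, and nothing forces that chain to pass through subsequences of $P$. Its size-$(m-1)$ element could contain sentences outside $P$. To handle this, I would invoke weak monotonicity for the restricted task $\langle Q,P,L\rangle$, reading the definition, as the surrounding discussion of a ``path'' to a minimal provenance intends, as applying to every sub-text on the path as well. Two checks make this legitimate.
\begin{itemize}
\item The answer on $P$ is $L(P,Q)=A''$ with $I(A,A'')=True$. Treating $I$ as an equivalence relation (it is lexical identity or semantic equivalence), a subsequence of $P$ is a provenance for $\langle Q,P,L\rangle$ exactly when it is a provenance for $\langle Q,T,L\rangle$.
\item Applying Definition~\ref{def:weak-mono} to $\langle Q,P,L\rangle$ and $P'$ then yields a chain $P' \sqsubset P_{i+1} \sqsubset \cdots \sqsubset P_{m-1} \sqsubset P$ lying entirely inside $P$. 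Each element of this chain is also a provenance for the original task.
\end{itemize}
Step 4 then finishes the proof. I would state this hereditary reading of weak monotonicity explicitly as the assumption under which the theorem holds, since the literal chain-to-$T$ condition alone does not pin the chain inside $P$.
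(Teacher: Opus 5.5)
Your argument is the same as the paper's. You assume a provenance $P' \sqsubset P$, get from weak monotonicity a chain of provenances $P' \sqsubset P_{i+1} \sqsubset \cdots \sqsubset P_{|P|-1} \sqsubset P$, and observe that the size-$(|P|-1)$ link must be $P \setminus s$ for some $s \in P$, which contradicts minimality.

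The one difference is at your step 3. The paper simply asserts that weak monotonicity of $\langle Q,T,L\rangle$ yields a chain ending at $P$. You correctly notice that Definition~\ref{def:weak-mono} only guarantees a chain ending at $T$, and you make the hereditary reading an explicit hypothesis.

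That caution is necessary, not pedantic. Take $T=\langle s_1,s_2,s_3,s_4\rangle$ whose only provenances are $\langle s_1\rangle$, $\langle s_1,s_4\rangle$, $\langle s_1,s_2,s_4\rangle$, $\langle s_1,s_2,s_3\rangle$ and $T$. This is the paper's own non-monotonic example from the quality analysis, with two sets added.
\begin{itemize}
\item Every proper provenance has the chain the literal definition requires, e.g.\ $\langle s_1\rangle \sqsubset \langle s_1,s_4\rangle \sqsubset \langle s_1,s_2,s_4\rangle \sqsubset T$.
\item Yet $\langle s_1,s_2,s_3\rangle$ is minimal but not strictly minimal.
\end{itemize}
So the theorem as literally stated is false. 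It needs exactly the strengthening you state, which the paper's proof uses silently. Your separate handling of the case $P=T$, where the literal definition does suffice, is also correct.

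A small simplification: state the hypothesis directly as ``any two nested provenances of $\langle Q,T,L\rangle$ are joined by a chain of provenances whose sizes increase by one.'' That avoids routing through the restricted task $\langle Q,P,L\rangle$ and the extra assumption that $I$ is symmetric and transitive.
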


\noindent \papertext{The proof  
can be found in \vldb{\cite{blip}.}} \techreport{The proof  
can be found below. } 
\rmix{As we will see later, all our solutions for Problem~\ref{def:1-provenance}
are guaranteed to return a minimal provenance for
{\em any} task. } 
Theorem~\ref{theo:minimal-eq} {\bf \em additionally guarantees such minimal provenance is  
also strictly minimal}, 
as long as the task is weakly monotonic--- the dominant
class of tasks in our workload.

\techreport{\begin{proof}
    A strictly minimal provenance is also a minimal provenance by definition. Next, we prove that for a weakly monotonic task $\langle Q, T, L\rangle$, a minimal provenance $P \sqsubseteq T$ is also a strictly minimal provenance. Assume there exists a $P' \sqsubset P$ such that $P'$ is a provenance. We will prove that such a $P'$ does not exist, thereby establishing the correctness of the theorem by contradiction. Let $|P| = n$ and $|P'|=i$. Since $P$ is a minimal provenance, $\forall s \in P$, $P \setminus s$ is not a provenance. Thus, $|P'|$ must be less than $n - 1$. Since the task $\langle Q, T, L\rangle$ is weakly monotonic, there must exist a set of provenances from $P$ to $P'$, i.e., $P' \sqsubset P_{i+1} \sqsubset \dots \sqsubset P_{n-1} \sqsubset P$, where provenance $P_{i+1}$ has size $|i+1|$.  Since $\forall s \in P$, $P \setminus s$ is not a provenance, $\nexists P_{n-1} \sqsubset P$, which contradicts the weak monotonicity property. This completes the proof by contradiction. 
\end{proof}}

\rtwo{So far we have defined strongly and weakly monotonic tasks. We now briefly describe the experimental setup for the results in Table~\ref{tab:tasks}\papertext{, with details in our technical report~\cite{blip}.}\techreport{.} } 

\noindent\rtwo{\textbf{Monotonicity testing protocol.} To compute the percentage of strongly monotonic tasks, for each document-question pair, we begin with a minimal provenance $mp$. We then randomly sample 100 supersets of $mp$, with sizes uniformly distributed between $|mp| + 1$ and $|T| - 1$. We compute the percentage of supersets that reproduce an answer equivalent to $A$, and average this value across all document-question pairs for each dataset. }

\rtwo{To compute the percentage of weakly monotonic tasks, we similarly start from a $mp$. Let $|mp| = i$ and $|T| = n$. Let $\mathcal{P} = \{ P_{i+1}, P_{i+2}, ..., \\ P_{n-1} \}$ be a set of provenances such that $|P_{i+k}| = i+k$, and $P \sqsubset P_{i+1}\sqsubset  P_{i+2}\sqsubset  ...\sqsubset  P_{n-1}\sqsubset T$.  We perform a greedy exhaustive search to construct $\mathcal{P}$. Starting from $mp$, we enumerate candidates for $P_{i+1}$ and stop once one reproduces an answer equivalent to $A$. We then search for $P_{i+2}$, a superset of $P_{i+1}$, and continue this process. If no superset of $P_{i+1}$ with a size of $|i+2|$ yields a valid provenance, we backtrack to explore alternative candidates for $P_{i+1}$.  This continues until $\mathcal{P}$ is discovered\papertext{.} } \techreport{or if not, we treated this as a non-weakly-monotonic task, and we terminate it once the number of LLM-based tests exceeds $m = 500$ to avoid potentially high cost. }

\begin{table}[bt]
\papertext{\vspace{-5pt}}
\scriptsize
\centering
\scalebox{1}{
\begin{tabular}{c|c|c|c|c|c|c}
\hline
& \multicolumn{2}{c|}{\textbf{Qasper}} & \multicolumn{2}{c|}{\textbf{NL\_DEV}} & \multicolumn{2}{c}{\textbf{HotpotQA}}  \\
\hline
& str\_s & llm\_s &str\_s & llm\_s & str\_s & llm\_s \\ \hline 
\textbf{gpt-4o-mini} & 0.94 & \textbf{0.97} & 0.97  & \textbf{0.99}  & 0.99 & \textbf{0.995}  \\ 
\textbf{gpt-4o} & 0.93 & \textbf{0.99}  & 0.97  & \textbf{0.998}  & 0.99  & \textbf{0.998} \\ 
\textbf{gemini-2-flash} & 0.97  & \textbf{0.99}  & 0.96 & \textbf{0.97}  & 0.99 & \textbf{0.99}\\ \hline 
\end{tabular}}
\caption{\small LLM Deterministic Evaluation.}
\label{tab:llm-determininstic}
\vspace{-5em}
\end{table}


\begin{figure*}[tb]
\vspace{-15pt}
\begin{minipage}{0.25\textwidth}
    \centering
    \includegraphics[width=0.9\linewidth]{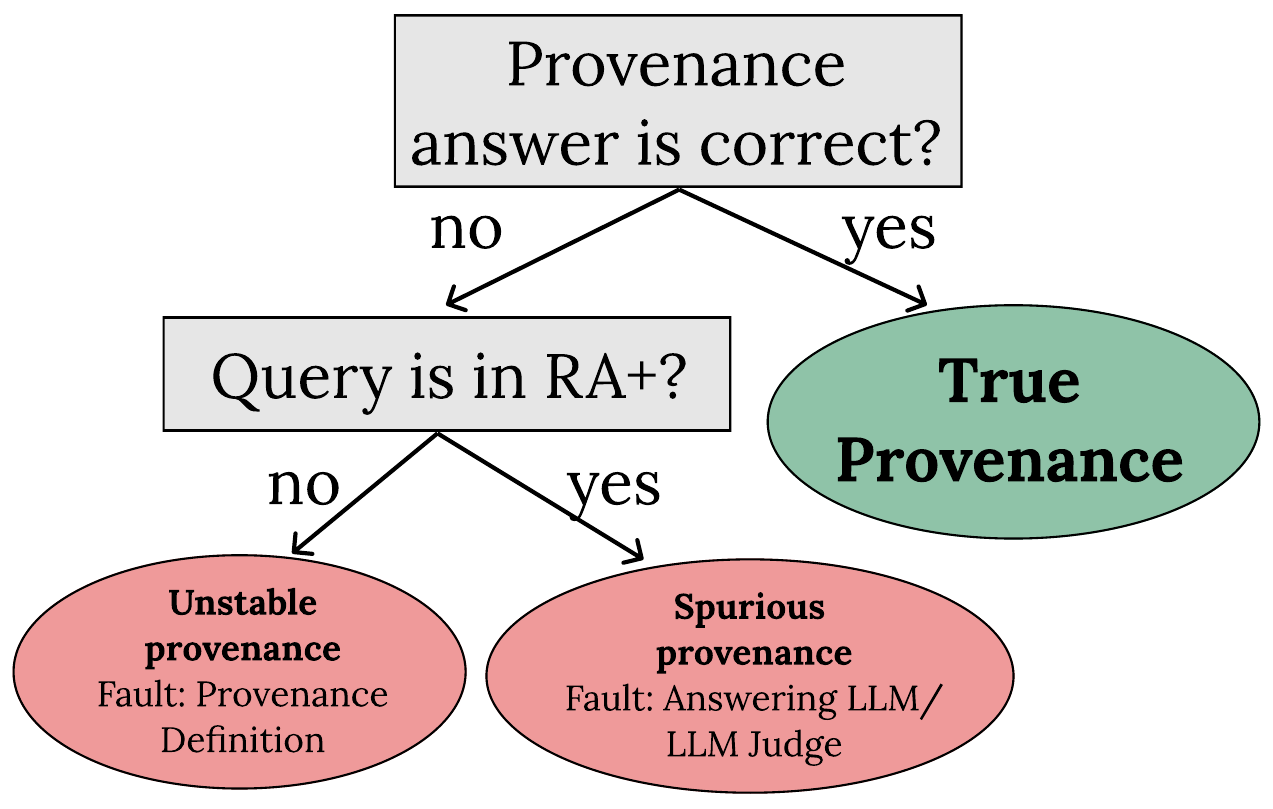}
    \vspace{-1.3em}
    \caption{\rmix{\small Provenance \\Correctness Taxonomy}}  
    \vspace{-1.3em}
    \label{fig:error_taxonomy}
\end{minipage} 
\begin{minipage}{0.7\textwidth}
    \centering
    \includegraphics[width=0.9\linewidth]{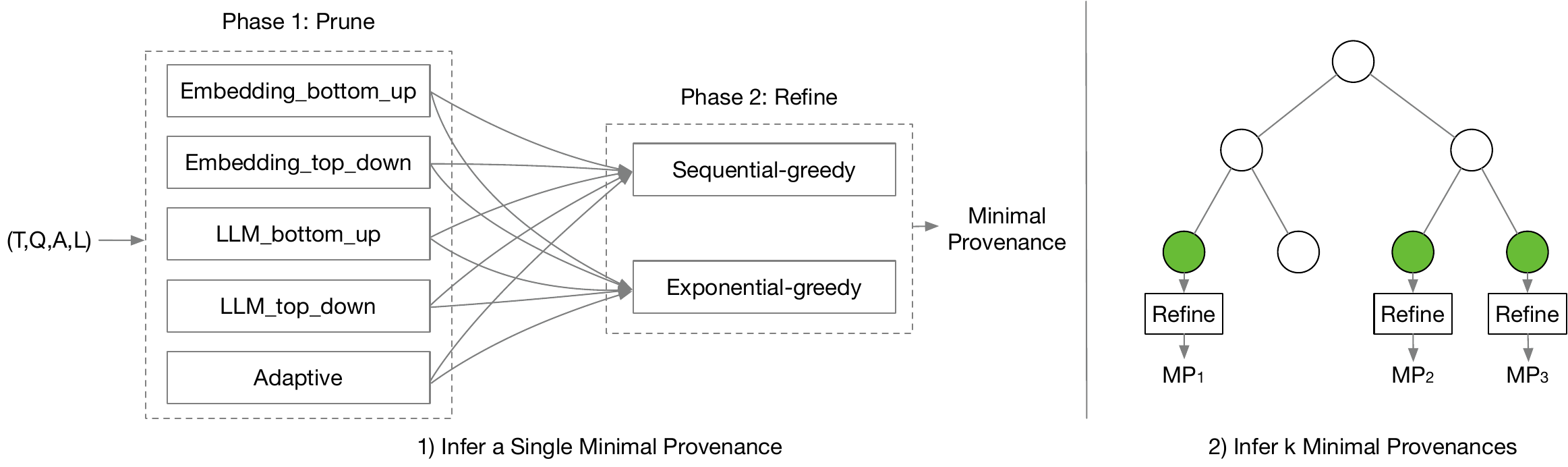}
    \vspace{-1.5em}
    \caption{\small Overview of \sys. $T$, $Q$, $A$, and $L$ denote the input text $T$, natural language question $Q$, answer (as evaluated) $A$, and the LLM model used $L$, respectively. $MP$ stands for minimal provenance.}  
    \vspace{-1.5em}
    \label{fig:overview}
\end{minipage} 
\end{figure*}

\vspace{-2mm}
\subsection{Provenance Correctness}
~\label{subsec:database-prov-relationship} 
\rmix{We just discussed minimality properties of verifiable provenance, but have not yet confronted whether the provenance matches user intent. Recall that a verifiable provenance is a subset of the input that is sufficient for an LLM to produce an answer as it would on the whole input. This definition, however, does not imply that the verifiable provenance is sufficient for a human to arrive at the same answer as the LLM.  We call a provenance that enables a human to arrive at the same answer as the LLM a \textit{true provenance}. 
}

 \rmix{We first discuss two scenarios where a true provenance is undefined and thus our provenance is inapplicable.  First, if the question is ambiguous, then the LLM's answer is not meaningful, making the provenance inapplicable. For example, when asking for ``{\em the revenue of Amazon}'' over its financial report, it is unclear whether the question refers to the monthly, quarterly, or yearly revenue. 
Second, if the LLM's answer (on the entire input) is incorrect, then any provenance is incorrect by definition---no provenance is sufficient for a human to arrive at the same answer as the LLM. }

\rmix{When a true provenance is well-defined, a verifiable provenance is \textit{correct} if it matches the true provenance. We next provide a taxonomy of cases where a verifiable provenance can be incorrect (Figure~\ref{fig:error_taxonomy}). Since the true provenance is unknown, this taxonomy provides a way to understand when to apply our techniques.}

\rmix{\textit{Unstable provenance.} We say that our verifiable provenance is {\em unstable} when it is not meaningful for a given task, so the returned provenance can be incorrect even if the LLM answer is correct. We now discuss set of questions $Q$ for which verifiable provenance is stable.}    As in classical provenance literature~\cite{green2007provenance,buneman2001and,cheney2009provenance,glavic2021data},  $Q$
can represent the computation corresponding to 
any {\em positive relational algebra} (${\mathcal RA}^+$); this class 
encapsulates SPJRU (Select-Project-Join-Rename-Union) queries\trs{, 
along with recursion}. This broad class of queries
encompasses data extraction (e.g., ``extract all
the dates in this text'') or transformation 
(e.g., ``reformat the intake form as a table''). 
$Q$ could also include multi-hop reasoning (e.g., ``what was the origin of
the money that was given to the officer?), analogous to 
a recursive query that follows exchanges of money to its source.
Prior work on {\em why} and {\em where}~\cite{buneman2001and} and {\em how}~\cite{green2007provenance} provenance
also focus on ${\mathcal RA}^+$. 
In fact, Glavic~\cite{glavic2021data} describes
that why provenance is synonymous with ${\mathcal RA}^+$. 
$Q$ could also include as part of its output computation
corresponding to monotonic aggregates, i.e., where
the aggregates trend in a single direction, such as 
\texttt{COUNT(*)} 
or \texttt{SUM} of non-zero positive numbers; e.g., ``count the number of officers involved''
or ``summarize all medical aspects in the document''. 

As in traditional provenance, as soon as $Q$ includes
arbitrary negation or aggregation, or operations on aggregates, 
the usefulness 
of a minimal provenance decreases.
As an extreme example, if $Q$ represents the computation of 
the average of numbers $ \in  \{1, 2, 3, 4, 5\}$,
one minimal provenance could be $\{3\}$, which is not
very useful. 
Likewise, if $Q$ includes a predicate on an aggregate, e.g., ``find
all officers who were mentioned at most three times'',
a minimal provenance might list 
just one mention for each output officer, 
even if they were listed twice or three times, which may not be useful. 

\rmix{\textit{Spurious provenance.} Finally, we say that a verifiable provenance can be {\em spurious} even for ${\mathcal RA}^+$, when LLMs make incorrect judgments. For example, it may incorrectly judge that a true provenance fails to reproduce the answer on the entire input. }

\if 
There are multiple  reasons why a verifiable provenance can 
While all our solutions for Problem~\ref{def:1-provenance}
are guaranteed to return a minimal provenance,
independent of $Q$,
we now discuss the class of questions $Q$ for which
this provenance is {\em useful},
and connect this to classical provenance literature.

As in classical provenance literature~\cite{green2007provenance,buneman2001and,cheney2009provenance} (see \cite{glavic2021data} for a recent survey), \rone{$Q$
can represent the computation corresponding to 
any {\em positive datalog query}; this class 
encapsulates SPJRU (Select-Project-Join-Rename-Union) queries, 
along with recursion. This broad class of queries
encompasses data extraction (e.g., ``extract all
the dates in this text'') or transformation 
(e.g., ``reformat the intake form as a table''). 
$Q$ could also include multi-hop reasoning (e.g., ``what was the origin of
the money that was given to the officer?), analogous to 
a recursive query that follows exchanges of money to its source.} 
Prior work on {\em why} and {\em where}~\cite{buneman2001and} and {\em how}~\cite{green2007provenance} provenance
also focus on this subclass of queries. 
In fact, Glavic~\cite{glavic2021data} describes
that why provenance is synonymous with 
positive relational algebra ${\mathcal RA}^+$,
a class equivalent to SPJRU queries.
 
The question $Q$ could also include as part of its output computation
corresponding to monotonic aggregates, i.e., where
the aggregates trend in a single direction, such as 
\texttt{COUNT(*)} 
or \texttt{SUM} of non-zero positive numbers; e.g., ``count the number of officers involved''
or ``summarize all medical aspects in the document''.

As in traditional provenance, as soon as $Q$ includes
arbitrary negation or aggregation, or operations on aggregates, 
the usefulness
of a minimal provenance decreases. 
As an extreme example, if $Q$ represents the computation of 
the average of numbers $ \in  \{1, 2, 3, 4, 5\}$,
one minimal provenance could be $\{3\}$, which is not
very useful. 
Likewise, if $Q$ includes a predicate on an aggregate, e.g., ``find
all officers who were mentioned at most three times'',
a minimal provenance might list 
just one mention for each output officer, 
even if they were listed twice or three times, which may not be useful.

Finally, it is worth mentioning that prior work on {\em why} provenance~\cite{buneman2001and,cui2000tracing}
introduces the notion of {\em witnesses}, much like our notion of
provenance, with {\em sufficiency} mirroring our notion of  
{\em verifiability}, while minimality being defined identically.
For ${\mathcal RA}^+$, a property analogous to strong monotonicity holds
for traditional data provenance, where any superset of a provenance is also one;
in our case, we also include the LLM $L$ in our 
definition of the property, since
the behavior can vary by LLM (See Table~\ref{tab:tasks}),
and further relax it to give weak monotonicity, a novel property.
\fi

\vspace{-4mm}
\subsection{Handling Non-Determinism in LLM}~\label{subsec:llm-uncertainty}
\revised{So far, we have defined our problems, the applicable tasks based on monotonicity, and the class of questions for which provenance is useful. We now examine LLM behavior and its impact on provenance.} 
When answering questions, instead of producing a single output,  
LLMs produce  
outputs with a probability distribution, which might introduce uncertainty when the same input is processed across multiple runs. While this uncertainty is already handled by our use of LLM-as-a-judge to accomodate for non-identical, but equivalent answers, 
here we go further to see how LLM non-determinism can impact provenance. 

Overall, the degree of uncertainty in LLM output depends on the decoding strategy. 
\rtwo{When greedy decoding is used (by setting the temperature to 0), the model picks the most probable token at each step~\cite{decoding}, and is thus theoretically deterministic. }
To validate this, we run each of 500 question–text pairs from our three workloads 50 times. 
\revised{We then cluster the 50 outputs per pair using exact string match (resp., LLM-as-a-judge), and report the maximum cluster size divided by 50 as {\em str\_s} (resp., {\em llm\_s}) in Table~\ref{tab:llm-determininstic}. }
Non-deterministic outputs measured by string match happen about 1\% to 7\% of the time across workloads. 
\rtwo{Such a gap occurs when probabilities are tied}, due to factors like \rtwo{non-deterministic GPU kernels, quantization rounding, or floating-point errors~\cite{floating}}. These cases are 
caught by LLM-as-a-judge, resulting in {\bf deterministic outputs $>$99\%} of the time---so the impact of non-determinism is insignificant. 

\techreport{To ensure determinism on all possible inputs, we  further extend the indicator function $I()$ (in Definition~\ref{def:provenance}) that checks if two LLM outputs are equivalent to handle the special case of output probabilities being tied. Here, instead of considering a single LLM output, let $\mathcal{A}$ and $\mathcal{A'}$ be the set of all LLM outputs on the full text $T$ and a subset of the text $P$. 
\revised{Then, to verify whether $P$ is a verifiable provenance, we check if the outputs with the maximum probability in $\mathcal{A}$ and $\mathcal{A'}$ are equivalent. We introduce a small constant ($\epsilon=0.01$ in our implementation) to account for floating-point errors when identifying outputs with the maximum probability. For example, consider $\mathcal{A} = \{a_1, a_2, a_3\}$ (where $a_i$ is a possible output) with corresponding probabilities  $0.43$, $0.425$, and $0.145$. In this case, both $a_1$ and $a_2$ are outputs with maximum probability under $\epsilon$-relaxation. $\mathcal{A}$ is equivalent to $\mathcal{A'} = \{a'_1, a'_2\}$ (where $a'_1$ has a probability of 0.95) if either $I(a_1, a'_1)$ or $I(a_2, a'_1)$ returns True. } In the following, we adopt this implementation of $I()$, which, however, has a negligible impact on our results, occurring in less than 1\% of cases, compared to the original $I()$ implementation. }


\vspace{-3.5mm}
\subsection{Overview of \sys} 
\vspace{-1mm}
\label{subsec:overview}

\rone{Our notion of provenance considers a ``black-box'' LLM that performs an unknown computation on a text data source whose output we observe. The goal of our provenance-finding algorithm is to identify the minimal text that can lead to an equivalent output. Because the underlying computation is unknown, our approach, unlike prior work on provenance in relational  databases~\cite{green2007provenance,buneman2001and,cheney2009provenance}, cannot track provenance throughout query evaluation. Instead, we develop a method that \textit{reverse-engineers} the LLM's answering process after the fact.   
}

We next present an overview of \sys, our
provenance discovery framework, as shown in  Figure~\ref{fig:overview}. 
Given text $T$, question $Q$, LLM $L$, and
answer $A = L(Q,T)$, 
\sys uses a two-phase approach to infer a single minimal provenance (Section~\ref{sec:1-provenance}),
and can be extended to find $k$ distinct minimal provenances  (Section~\ref{sec:k-provenance}).

\topic{Finding a single minimal provenance}
\sys uses a two-phase approach, 
{\em prune-refine}, 
to return a minimal provenance. 
The {\em pruning} phase returns a small-sized provenance by aggressively pruning out sentences 
that likely do not contribute to $A$, 
while the {\em refinement} phase 
takes the provenance 
returned from the pruning phase 
and refines it to obtain a minimal provenance. 

For the pruning phase, 
we propose four strategies 
to reduce provenance size 
by leveraging {\em text relevance to $Q$ and $A$} and {\em different scan orders} (top-down or bottom-up).  
\vspace{-1mm}
\begin{itemize}[leftmargin=*]
    \item (Text Relevance) We divide the text $T$ into equal-sized portions and prioritize examining those with high relevance to the pair $\langle Q,A \rangle$, estimated using embedding similarity~\cite{caspari2024beyond} or an LLM. 
    \item (Scan Orders) To return a small-sized provenance quickly, one approach is to process text portions in decreasing order of relevance to $\langle Q,A \rangle$, incrementally add a portion at a time, and return the top‑$i$ portions as soon as they yield an answer equivalent to $A$.    
    Alternatively, starting from $T$, we repeatedly discard the least relevant half until the remaining portions fail to reproduce $A$. 
\end{itemize}
\vspace{-1mm}
As bottom-up and top-down approaches excel in different scenarios, we provide a theoretical cost analysis to characterize where each shines, based on the size of the returned provenance. We further propose an adaptive method that combines them to robustly infer a small-sized provenance while avoiding their worst-case limitations. 

In the refinement phase, we propose two strategies that take any provenance (e.g., the output from the prune phase) 
and  return a corresponding 
minimal provenance by removing 
all non-relevant sentences 
with respect to $\langle Q, A\rangle$. 
The first strategy, {\em sequential-greedy}, 
iteratively deletes sentences 
from a given provenance 
if the remaining text after each removal 
still produces an answer equivalent to $A$. 
In contrast to {\em sequential-greedy} 
that considers one sentence at a time, the second strategy, 
{\em exponential-greedy}, removes sentences in exponentially increasing batch sizes. Starting with one sentence, if deletion succeeds, it attempts to remove two, then four, and so on, doubling the batch size each time to accelerate pruning. Both strategies enumerate sentences in decreasing order of their indexes (from larger to smaller) to maximize token reuse via the KV cache in each LLM invocation, leading to a significant cost reduction. 


\topic{Find top-$k$ minimal provenances}
The provenance returned by the two-phase approach
is minimal. However, 
in certain scenarios, as outlined in Section~\ref{sec:k-provenance}, additional evidence is needed. 
Thus, as part of \sys, we introduce a 
greedy tree-search approach 
to return the $k$ distinct 
minimal provenances, 
allowing users to explore all provenances and gain greater confidence in the answers. 


\vspace{-1mm}
\section{Finding Top-1 Provenance}
\label{sec:1-provenance}

Given the input text $T$, question $Q$, LLM $L$, and answer $A$, we present a two-phase approach to infer a single minimal provenance. 



\subsection{Phase 1: Provenance Pruning} 
\label{subsec:phase1}
To prune sentences that do not contribute to the answer $A$, we propose four intuitive strategies that leverage heuristics to rank sentences by their relevance to the $\langle Q,A \rangle$ pair (e.g., embedding similarities or LLMs), and  different scan orders (e.g., bottom-up or top-down), along with an adaptive strategy that combines the strengths of these approaches in Section~\ref{subsec:analysis}.

\vspace{1pt}
\topic{Strategy 1: Embedding-bottom-up} In this strategy  (Algorithm~\ref{alg:linear-search}), we divide the input text $T$ into a list of equal-sized blocks, $T = \langle b_1, b_2, \dots, b_m\rangle$, where each block contains a sequence of sentences (Line 1). The value $m$ is chosen to balance the cost of strategies and the size of the returned provenance. (Empirically, setting $m = 20$ strikes a good balance, as shown in Section~\ref{sec:exp})  
  We then sort the blocks in descending order of embedding similarity to $\langle Q,A \rangle$, i.e., 
the concatenation of $Q$ and $A$ (by setting \code{ranker} as \code{embedding} in Line 2).  
As soon as the first $i$ sorted blocks reproduce an answer equivalent to $A$, we stop searching and return them as the provenance (Lines 5–9). When evaluating the top-$i$ blocks using LLM $L$, we always re-order the sentences in the evaluated blocks in increasing order of their indexes in $T$, as implemented by \code{rerank\_index} in Line 7. For example, the top-$i$ blocks $\langle b_7,b_8, b_2, b_3 \rangle$ are reordered as $\langle b_2, b_3, b_7,b_8 \rangle$ before being passed to $L$ for answer generation, to prevent unintended changes in meaning due to reordering. 

\setlength{\textfloatsep}{0pt}
\begin{algorithm}[bt]
    \scriptsize
    \caption{\code{Prune}[\code{ranker}, \code{scan}]}
    \label{alg:linear-search}
 \KwIn{$\langle T,Q,L\rangle$}
$T \leftarrow \langle b_1,b_2,...,b_m\rangle$; $A \leftarrow L(T,Q)$ \\ 
$B \leftarrow \code{rank}(T, \code{ranker})$ \\ 
$P \leftarrow \emptyset$ \\ 
\If{\code{scan} == bottom-up}
{\For{$b_i\in B$, in ascending order of $i$}{
$P \leftarrow P \cup b_i$ \\ 
$P' \leftarrow \code{rerank\_index}(P)$ \\ 
\If{$I(L(P',Q),A)==$True}{
{\bf break}
}
}}
\If{\code{scan} == top-down}{
$l \leftarrow 1, r \leftarrow m$ \\ 
    \While{$l \leq r$}{
$mid \leftarrow \lceil \frac{l+r}{2} \rceil$ \\ 
$P \leftarrow <b_1,b_{2},...,b_{mid}>$ \\ 
$P' \leftarrow \code{rerank\_index}(P)$ \\ 
\eIf{$I(L(P',Q),A) == \text{True}$}{
$r \leftarrow mid-1$ \\ 
}
{$l \leftarrow mid + 1$}
}
}
$P \leftarrow \code{rerank\_index}(P)$ \\ 
\textbf{Return} $P$\\
\end{algorithm}

\topic{Strategy 2: Embedding-top-down} This strategy instead  performs a top-down search over the blocks sorted by embedding similarity, by setting \code{scan} to \code{top-down} in Line 10 of Algorithm~\ref{alg:linear-search}. 
We start with the entire ranked list of blocks $B$ from $T$ as the initial provenance $P$, and perform binary search to identify the smallest top-$k$ prefix that reproduces an answer equivalent to $A$. At each iteration, we consider the midpoint $mid = \lceil \frac{l + r}{2} \rceil$ of the current index range $[l, r]$, and evaluate the top-$mid$ blocks, i.e., $⟨b_1, b_2, \dots, b_{mid}⟩$ to see if they constitute a provenance (Line 11-15). If they reproduce an answer equivalent to $A$, we refine the search window to the left half $[l, mid - 1]$; otherwise, we consider the right half $[mid + 1, r]$ in the next iteration (Line 16-19).  
 The algorithm makes at most $\lceil \log_2 m \rceil$ LLM calls, where $m$ is the number of blocks in $T$.  

\vspace{1pt}
\topic{Strategy 3 \& 4: LLM-bottom-up and LLM-top-down} These strategies are like 1 \& 2 but instead of sorting blocks based on embedding similarities, they invoke an LLM model $L'$ to score each block $b_i$ in $B$ on a  scale of 1 to 10 based on the relevance of $b_i$ to the $\langle Q,A\rangle$ pair, using the \code{LLM-Ranker-Prompt} \papertext{in \vldb{\cite{blip}.}}\techreport{below.} The model $L'$ used to score the blocks does not necessarily need to be the same as the model $L$ used to answer $Q$. 

\techreport{\noindent\underline{\code{LLM-Ranker-Prompt}}: the \textcolor{blue}{[Q]}, \textcolor{blue}{[A]}, and \textcolor{blue}{[Text blocks]} are placeholders corresponding to $Q$, $A$, and $B$ in Algorithm~\ref{alg:linear-search}. }



Intuitively, given text blocks $B$ sorted by embedding similarities or LLMs, when the provenance lies within the first‑$i$ blocks and $i$ is small (i.e., the provenance size is small and the ranker is effective), a bottom-up search is ideal as it can quickly identify a small provenance. On the other hand, if $i$ is large, bottom-up incurs high cost by repeatedly evaluating the first $i$ blocks, resulting in a quadratic number of block accesses. In such a case, the top-down search is more efficient, as it discards blocks exponentially faster. In Section~\ref{subsec:analysis}, we analyze the costs of both strategies \rone{and in Section~\ref{sec:rec_strategy} recommend what approach to use for a given dataset. }   

\techreport{
\mypython{
\small 
    LLM-Ranker-Prompt: Given the following question: \textcolor{blue}{[Q]},  and a list of text blocks, the corresponding answers are \textcolor{blue}{[A]}. Your task is to assign a score (from 1 to 10) to each block based on how likely it is to contain context relevant to answering the question. The text blocks are listed below, each starting with Block i: followed by its content. Return only a comma-separated list of scores corresponding to each block, in the order they are given. Do not include any explanations or additional text. \textcolor{blue}{[Text blocks]}}}

\vspace{-6mm}
\subsection{Phase 2: Provenance Refinement}  
The returned provenance from Phase 1, while smaller than $T$,  may not be minimal. In this phase, we present two strategies that take as input any provenance and compute a minimal provenance.

\begin{algorithm}[bt]
\scriptsize
\caption{\code{Refine}[\code{scan}]}
\label{alg:greedy}
\KwIn{$P,\langle T,Q,L\rangle$}  
\SetKwFunction{SP}{Sequential\_Greedy}
\SetKwFunction{EP}{Exponential\_Greedy}
\SetKwProg{Fn}{Function}{:}{}
\Fn{\SP{$P$}}{
\For{$s_i \in P$, in descending order of $i$}{
$P' \leftarrow P \setminus s_i$ \\ 
\If{$I(L(P',Q),A) == True$}{
$P \leftarrow P'$ \\ 
}
}
\textbf{Return }$P$}
\Fn{\EP{$P$}}{
$j \leftarrow |P|-1, l \leftarrow 0$\\ 
\While{$j \geq 0$}{
$i \leftarrow j - 2^l+1$ \\ 
$P' \leftarrow P \setminus P_{[i,j]}$ \\ 
\If{$I(L(P',Q),A) == True$}{
$P \leftarrow P'$ \\ 
$j \leftarrow i - 1$,$l \leftarrow l + 1$ \\ 
}
\Else{
$l \leftarrow 0$
}
}
}
\If{\code{scan} == sequential}{
$P \leftarrow$ \SP{$P$}; $P_{prev} \leftarrow P$; $A \leftarrow L(T,Q)$ \\ 
\While{$P_{prev} \neq P$}{
    $P_{prev} \leftarrow P$ \\ 
    $P \leftarrow \SP{P}$
}
}
\If{\code{scan} == exponential}{
$P \leftarrow$ \EP{$P$}; $P_{prev} \leftarrow P$; $A \leftarrow L(T,Q)$ \\ 
\While{$P_{prev} \neq P$}{
    $P_{prev} \leftarrow P$ \\ 
    $P \leftarrow \EP{P}$
}
}
\textbf{Return} $P$
\end{algorithm}

\topic{Strategy 1: Sequential-Greedy} Taking any provenance $P$ as input, \code{Sequential\_Greedy} (in Algorithm~\ref{alg:greedy}) considers each sentence $s_i \in P$ in descending order of $i$ to maximize the prefixes shared among LLM inferences. If after removing a sentence $s_i$ from the current provenance $P$, the remaining text $P \setminus s_i$ produces an answer equivalent to $A$, then $s_i$ is removed from $P$; otherwise, $s_i$ is retained (Lines 1–6). We repeatedly invoke \code{Sequential\_Greedy} on the resulting provenance $P$ until $P$ no longer changes (Lines 18–21). This step (and a similar one in Strategy 2) is essential for minimality. 

\begin{example}
    Assume the provenance $P$ returned from the first phase is $P = \langle s_1, s_2, s_3, \dots, s_{20} \rangle$. In the first iteration, $s_{20}$ will be removed if $P \setminus \{s_{20}\}$ returns an equivalent answer to $A$. Next, when  $s_{19}$ is examined, its prompt (consisting of the question and context) shares a prefix with the previous one since both contain the same question followed by the same prefix of context, i.e., $\{s_1, s_2, \dots, s_{18}\}$. As a result, the cost of input tokens in the second iteration is reduced by a factor of $f_L$ = 4 (or 2) when using \code{gemini-2-flash} (or \code{gpt-4o-mini}), respectively, because the entire prompt consists of cached tokens. Similar cost reductions apply to subsequent LLM inferences, resulting in significant cost savings. 
\end{example}

\topic{Strategy 2: Exponential-Greedy} In contrast to considering sentences one by one, \code{Exponential\_Greedy} (Algorithm~\ref{alg:greedy}) removes sentences more aggressively by exponentially increasing the number of sentences considered for removal, as long as the result  continues to produce an answer equivalent to $A$. Let $P_{[i,j]}$ denote a subsequence of $P$ from the $i$-th to the $j$-th sentence within $P$. 

We use a running example to explain the behavior. Suppose the provenance $P$ returned from the first phase is $P = \langle s_1, s_2, s_3, \dots, s_{20} \rangle$. This strategy considers each subsequence $P_{[i,j]}$ within $P$. In the first iteration, it considers removing a subsequence $P_{[20,20]} = \langle s_{20} \rangle$.  If $P \setminus P_{[20,20]}$ produces an equivalent answer to $A$, then $P_{[20,20]}$ is removed from $P$. In the next iteration, instead of only $s_{19}$, we double the length of the candidate subsequence and consider removing $P_{[18,19]} = \langle s_{18}, s_{19}\rangle$. If the remaining text $\langle s_1, s_2, \dots, s_{17} \rangle$ still produces an equivalent answer, $P_{[18,19]}$ is removed, and we proceed to $P_{[14,17]}$. If, in any iteration, the remaining text after removing the current subsequence fails to reproduce the answer, the length of the next candidate subsequence is reset to 1 (Lines 8–16). For example, if removing $P_{[18,19]}$ causes the remaining text to produce a different answer, we instead attempt to remove $\langle s_{19} \rangle$. If this succeeds, we then examine the removal of $P_{[17,18]}$ in the next iteration.  The above process repeats until no further sentences can be removed (Lines 8–16). As with \code{Sequential\_Greedy}, we repeatedly apply \code{Exponential\_Greedy} to the provenance returned the previous iteration until the provenance no longer changes (Lines 23–26). 

During this process, the opportunity to leverage prefix caching for cost optimization is maximized, as the prompt evaluated in the current iteration shares the longest possible prefix with the previous one, except for the sentences that have been examined and retained so far.

\subsection{\rtwo{Quality} and Cost Analysis}
\label{subsec:analysis}

\subsubsection{\rtwo{Quality} Analysis}
\label{subsubsec:correct} 
Let $\mathcal{S}_{prune} = \{\text{\code{Embedding-bottom-up}}, \\ \text{\code{Embedding-top-down}},  \text{\code{LLM-bottom-up}}, \text{\code{LLM-top-down}}\}$ and  $\mathcal{S}_{refine} = \\ \{\text{\code{Sequential-Greedy}},  \text{\code{Exponential-Greedy}}\}$ be the set of strategies in the pruning and refinement phases, respectively.  





\begin{theorem}
\label{theo:optimal}
\edit{
$\forall S \in \mathcal{S}_{refine}$, for all tasks $\langle T,Q,L\rangle$, and provenance $P \sqsubseteq T$, $S(P,\langle T,Q,L\rangle)$ always returns a minimal provenance. } 
\end{theorem}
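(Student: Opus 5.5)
The proof has three parts. We show the output is a provenance via a loop invariant, show the procedure terminates, and show minimality from the fixed-point condition of the outer loop. We treat the refinement strategy $S$ as the driver in Lines 17--26 of Algorithm~\ref{alg:greedy}. It repeatedly applies \code{Sequential\_Greedy} or \code{Exponential\_Greedy} until $P_{prev}=P$, and we assume the input $P$ is a provenance, i.e., $I(L(P,Q),A)=$ True. By the determinism convention of Section~\ref{subsec:llm-uncertainty}, each test $I(L(P',Q),A)$ has a fixed outcome for a given $P'$. This lets us speak of ``the'' answer on a subsequence.

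\emph{Invariant and termination.} In both inner functions, $P$ is only ever overwritten by some $P'$ for which the check $I(L(P',Q),A)=$ True has just succeeded. By induction on the number of assignments, $P$ is therefore always a provenance, and $P\sqsubseteq T$ because we only delete sentences. Every call to an inner function either leaves $P$ unchanged or strictly decreases $|P|$. So the outer \textbf{while} loop runs at most $|P_0|+1$ times, where $P_0$ is the input.

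For \code{Exponential\_Greedy} we must also check that the inner loop terminates. Each iteration either decreases $j$ (on success) or resets $l$ to $0$ (on failure). After a reset, the next test removes the single sentence $P_{[j,j]}$. If that test also fails, $j$ must still decrease, and we will read the pseudocode's failure branch as moving past the examined sentence. This matches the running example, where a failed single-sentence removal leads to examining the next index. When $i<0$ we clip the window to $i=\max(i,0)$. We will state these readings explicitly, since they are implied by the text rather than the literal lines.

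\emph{Minimality.} The outer loop exits only when one full pass of the inner function leaves $P$ unchanged. Call this final provenance $P^*$ and suppose some $s\in P^*$ had $P^*\setminus s$ a provenance. For \code{Sequential\_Greedy}, the final pass visits every $s_i\in P^*$ exactly once, and during that pass the current set is always $P^*$ because nothing gets removed. So it tests $P^*\setminus s$, the test succeeds, and $s$ would be removed, contradicting $P$ unchanged.

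For \code{Exponential\_Greedy}, an unchanged pass means every test failed, so $l$ stays $0$ throughout. Every window therefore has size $2^0=1$, and $j$ sweeps all indices of $P^*$. The pass thus also tests each single removal $P^*\setminus s$, and we get the same contradiction. Hence $I(L(P^*\setminus s,Q),A)=$ False for all $s\in P^*$, which is exactly Definition~\ref{def:minimal-provenance}. No monotonicity assumption is used, so the conclusion holds for all tasks. If the task happens to be weakly monotonic, Theorem~\ref{theo:minimal-eq} additionally upgrades $P^*$ to strictly minimal.

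\emph{Main obstacle.} The delicate step is the exponential case. We must argue that a no-change pass really degenerates to testing every single-sentence deletion, which depends on the index bookkeeping ($j$ advancing after a failed size-1 test, and windows clipped at the left end). I would first fix these semantics precisely, then prove a small lemma: \emph{in any pass of \code{Exponential\_Greedy} in which no removal succeeds, the tested sets are exactly $\{P\setminus s : s\in P\}$}. The rest is routine.
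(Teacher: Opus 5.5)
Your proof is correct and follows the paper's argument: if some $s$ could be dropped from the output, the final pass of the repeated \code{Sequential\_Greedy} or \code{Exponential\_Greedy} loop, which by construction removes nothing, would have tested and removed it. Beyond that, you supply what the paper leaves implicit (that $P$ stays a provenance, termination, and the exponential case, including the needed fix that $j$ must decrease after a failed size-1 probe, without which the literal pseudocode loops forever), and you should add one more explicit reading: the outer driver, taken literally, never iterates, because $P_{prev}\leftarrow P$ is assigned right after the first call and a single pass does not in general guarantee minimality, so both your argument and the paper's rely on the intended ``repeat until a pass leaves $P$ unchanged'' semantics.
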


We show the formal proof of Theorem~\ref{theo:optimal} \papertext{in \vldb{\cite{blip}.}}\techreport{below.} \rmix{To return a minimal provenance, \sys does not make any assumptions about the task (e.g., weak or strong monotonicity). These assumptions are only necessary if a strictly-minimal provenance is needed (Theorem~\ref{theo:minimal-eq}).}   

\techreport{\begin{proof}
    We prove Theorem~\ref{theo:optimal} by contradiction. Given a provenance $P$ for the answer $A$, let the provenance returned by $S$ be $P_S$, where $S \in \mathcal{S}_{\text{refine}}$. If $P_S$ is not minimal, there must exist a sentence $s \in P_S$ such that $P_S \setminus s$ produces an answer equivalent to $A$, i.e., $I(L(P_S \setminus s), A)$ is \textit{True}. If $S = \code{Sequential\_Greedy}$, this contradicts the algorithm’s behavior (Lines 3 and 17–21 in Algorithm~\ref{alg:greedy}), since \code{Sequential\_Greedy} is applied in multiple passes to the provenance returned from the previous iteration, ensuring that removing any sentence from it results in an answer that is not equivalent to $A$. A similar contradiction can be established for $S = \code{Exponential\_Greedy}$. Thus, $P_S$ is a minimal provenance returned by any strategy in $\mathcal{S}_{\text{refine}}$. 
\end{proof}}



\begin{theorem}
\label{theo:all}
\edit{$\forall S_i \in \mathcal{S}_{prune}$ and $S_j \in \mathcal{S}_{refine}$,  for any task $\langle T,Q,L\rangle$, let $P = S_i(\langle T,Q,L\rangle)$ be the provenance returned by $S_i$.  $S_j(P,\langle T,Q,L\rangle)$ always returns a minimal provenance for $\langle T,Q,L\rangle$. }  
\end{theorem}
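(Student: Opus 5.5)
The plan is to reduce Theorem~\ref{theo:all} to Theorem~\ref{theo:optimal}. That theorem already guarantees that every $S_j \in \mathcal{S}_{refine}$ returns a minimal provenance whenever its input $P$ is a provenance $P \sqsubseteq T$. So the only thing to establish is that, for every $S_i \in \mathcal{S}_{prune}$ and every task $\langle T,Q,L\rangle$, the output $P = S_i(\langle T,Q,L\rangle)$ is a verifiable provenance in the sense of Definition~\ref{def:provenance}. That means two things: it is a subsequence of $T$, and $I(L(P,Q),A)=\text{True}$.

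\textbf{Subsequence property.} This is immediate from Algorithm~\ref{alg:linear-search}. $P$ is always a union of blocks of $T$, and the final \code{rerank\_index} step orders its sentences by increasing index in $T$. Hence $P \sqsubseteq T$ for all four strategies, which differ only in the ranker and the scan order.

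\textbf{Verifiability: bottom-up.} Blocks are added one at a time, and the loop breaks as soon as the reranked prefix reproduces $A$. If the loop never breaks, then $P$ contains all $m$ blocks, so after reranking $P = T$. In that case $L(T,Q)=A$ holds trivially, given the determinism convention of Section~\ref{subsec:llm-uncertainty} and the extended $I$. In either case $P$ is verifiable.

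\textbf{Verifiability: top-down (the main obstacle).} As literally written, the variable $P$ at loop exit is the last \emph{tested} prefix, and that test may have failed. The plan is therefore to argue about the intended output: the smallest prefix found that passed the test. Concretely, I would state an invariant of the binary search: the prefix of length $r+1$ is always a provenance. It holds initially, since $r=m$ and I read the prefix of length $m+1$ as all of $T$, the full text. It is preserved by the update $r\leftarrow mid-1$, which happens only when prefix $mid$ passes. At termination $l=r+1$, and the returned prefix $\langle b_1,\dots,b_{l}\rangle$ is a provenance. If $l$ exceeds $m$, the returned set is $T$ itself.

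I would state explicitly that the algorithm returns this prefix, the last successful candidate or $T$. Monotonicity is not needed here, because the invariant only concerns prefixes that were actually verified. With this in hand, both top-down strategies output a provenance. Theorem~\ref{theo:optimal} then applies to every combination $S_i, S_j$, giving a minimal provenance for any task. This matches the remark that no monotonicity assumption is required.
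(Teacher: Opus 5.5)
Your proposal is correct and follows the paper's route. The paper observes that every strategy in $\mathcal{S}_{prune}$ outputs a provenance $P \sqsubseteq T$ and then invokes Theorem~\ref{theo:optimal}, while your invariant for the top-down search (the prefix of length $r+1$ is always a verified provenance, so the prefix of length $l=r+1$ at termination is one) supplies detail the paper leaves implicit and correctly flags that Algorithm~\ref{alg:linear-search}, as literally written, must return the last \emph{successful} prefix rather than the last tested one.
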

\vspace{-0.5em}
 
The correctness of Theorem~\ref{theo:all} follows directly from the fact that any strategy $S_i$ in $\mathcal{S}_{prune}$ returns a provenance. \rone{Although heuristics are explored in rankers (e.g., embedding-based or LLM-based) and enumerations (e.g., bottom-up or top-down) to reduce cost and latency, Theorem~\ref{theo:all} guarantees that \sys returns a minimal provenance for any data processing task, including non-monotonic tasks.} \rtwo{We provide below an example of how \sys behaves on a non-monotonic task to illustrate this guarantee.}
\rtwo{Consider a non-monotonic task where $T = \langle s_1,s_2,s_3,s_4 \rangle$, and only $\langle s_1 \rangle$ and $\langle s_1,s_2,s_3 \rangle$ are verifiable provenances among all true subsets of $T$ ($T$ itself is a provenance by definition). In this case, $\langle s_1 \rangle$ is a strictly-minimal provenance, while both  $\langle s_1 \rangle$ and  $\langle s_1,s_2,s_3 \rangle$ are minimal provenances. \sys returns $\langle s_1,s_2,s_3 \rangle$, and is not able to return $\langle s_1 \rangle$ since  any superset of it formed by adding one more sentence (e.g., $\langle s_1,s_2 \rangle$) is not a provenance. } 
\rtwo{For the above task, if $\langle s_1,s_2 \rangle$ is also a provenance, then it becomes a weakly-monotonic task. \sys then returns $\langle s_1 \rangle$ by searching along the path from $\langle s_1,s_2,s_3 \rangle$, to $\langle s_1,s_2 \rangle$, and finally $\langle s_1 \rangle$, using Algorithm~\ref{alg:greedy} (Line 17-26). }

\subsubsection{Cost of Bottom-up versus Top-down} 
\label{subsubsec:bottom-top}
\leavevmode 

Consider the \code{bottom\_up} and \code{top\_down} strategies, with the text $T$ as a list of $m$ equal-sized blocks ranked by  embeddings or LLMs. We focus on embeddings, but the analysis is similar for any ranker.  
Let $C_{BU}$ and $C_{TD}$ denote the total cost of \code{embedding\_bottom\_up} and \code{embedding\_top\_down}, respectively. Assume that the minimal provenance is contained within the top-$k$ out of $m$ total blocks. When there are multiple  minimal provenances, we consider $k$ to be the smallest number among them. 

To determine which strategy incurs a lower cost given $k$, we analyze how the cost ratio $\frac{C_{BU}}{C_{TD}}$ varies with respect to $k$.
To this end, we approximate the cost by counting the total number of block accesses in LLM verification calls, \trs{i.e., invocations of $I(L(P, Q), A)$,} where a block is counted multiple times if it appears in multiple LLM calls. 
\techreport{We also assume that the LLM is accurate in provenance verification. Specifically, if a subsequence $T'$ is a superset of any minimal provenance $mp$ (i.e., $mp \subseteq T'$), then the verification function returns true: $I(L(T', Q), A) = \text{True}$. We will later relax this assumption and analyze its impact on our results. Note that under this assumption, both \code{bottom\_up} and \code{top\_down} return exactly the top-$k$ blocks as the provenance in this phase. }

\techreport{The cost of \code{bottom\_up} is computed as:  $C_{BU} = \sum_{i=1}^{k}i = \frac{k(k+1)}{2}$.  } 

\techreport{We use a recursive expression to compute the cost of \code{top\_down}. 
Let $C(l,r)$ be the total number of blocks used in the iteration where \code{top\_down} is searching the blocks in the interval $[l,r]$ in Algorithm~\ref{alg:linear-search}. In this case, $C_{TD} = C(1,m)$.  }
\techreport{
\[
C(l,r)\;=\;
\begin{cases}
0, &
l = r, \\[-1pt]
\displaystyle
\text{mid}(l,r)\;+\;
C\!\bigl(l,\;\text{mid}(l,r)-1\bigr), &
k \le \text{mid}(l,r), \\[-1pt]
\displaystyle
\text{mid}(l,r)\;+\;
C\!\bigl(\text{mid}(l,r)+1,\;r\bigr), &
k > \text{mid}(l,r).
\end{cases}
\]
}
\techreport{where $
\text{mid}(l,r)\;=\Bigl\lceil (l+r)/2\Bigr\rceil
$. In this expression, the number of blocks considered in each iteration, i.e., $\text{mid}(l, r)$, is accumulated until the binary search terminates at $l = r$. }

We \papertext{leave the details of how to compute $C_{BU}$ and $C_{TD}$ in \cite{blip}, and} present a breakdown of the cost comparison based on different values of $k$ relative to $m$, the total number of blocks.  

\begin{theorem}
    \label{theo:top-down-bottom-up}
   When $k \leq L_m$, where $L_m = \frac{\sqrt{8m-7}-1}{2}$,  we have $C_{BU} \leq C_{TD}$; when $k \geq U_m$, where $U_m = \frac{-1 + \sqrt{1+8 (mlog_2m-m+1)}}{2}$, we have $C_{BU} \geq C_{TD}$. For $L_m<k<U_m$, one must evaluate $C_{BU}$ and $C_{TD}$ explicitly to determine which is larger. 
\end{theorem}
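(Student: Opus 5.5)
The plan is to sandwich $C_{TD}$ between two quantities that do not depend on $k$, namely
\[
m-1 \;\le\; C_{TD} \;\le\; m\log_2 m - m + 1 ,
\]
and then compare each bound with the closed form $C_{BU} = k(k+1)/2$. Both thresholds come from solving a quadratic in $k$. The inequality $k(k+1)/2 \le m-1$ is equivalent to $k^2 + k - 2(m-1) \le 0$, that is, to $k \le (\sqrt{8m-7}-1)/2 = L_m$. Likewise $k(k+1)/2 \ge m\log_2 m - m + 1$ is equivalent to $k \ge U_m$. So for $k \le L_m$ we get $C_{BU} \le m-1 \le C_{TD}$, and for $k \ge U_m$ we get $C_{BU} \ge m\log_2 m - m+1 \ge C_{TD}$. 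Between $L_m$ and $U_m$ the bounds overlap, which explains the final clause of the theorem. Throughout I use the accurate-verifier assumption, under which both strategies stop at exactly the top-$k$ blocks, so $C_{BU}=k(k+1)/2$ and $C_{TD}=C(1,m)$ as in the recurrence.

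For the lower bound I will prove by strong induction on $r-l$ that $C(l,r) \ge r-l$ for every interval with $1 \le l \le r$. I adopt the convention $C(l,r)=0$ when $l>r$, which covers the degenerate calls near the boundaries. The base case $l=r$ is immediate.

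In the left branch, $C(l,r) = \text{mid} + C(l,\text{mid}-1) \ge \text{mid} + (\text{mid}-1-l)$. This is at least $r-l$ because $2\,\text{mid} - 1 \ge l+r-1 \ge r$ whenever $l \ge 1$.

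In the right branch, $C(l,r) = \text{mid} + C(\text{mid}+1,r) \ge \text{mid} + (r-\text{mid}-1) = r-1$. This is at least $r-l$ since $l\ge 1$. The case where the recursive interval is empty gives $\text{mid} \ge r-l$ directly. Setting $l=1$, $r=m$ yields $C_{TD} \ge m-1$.

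For the upper bound I set $f(n) = n\log_2 n - n + 1$ and aim to show $C(1,r) \le f(r)$, plus a translated version for general intervals. Each term $\text{mid}$ is at most $r \le m$, and the binary search makes at most $\lceil \log_2 m\rceil$ calls. That crude count only gives $m\lceil\log_2 m\rceil$, so I will instead induct with the recurrence. The intended key step uses $f(n) \ge f(\lceil n/2\rceil) + n$ up to lower-order terms, together with convexity of $f$. The right branch is the delicate one, because $\text{mid}$ is roughly $r$ rather than the interval length. There the plan is to bound $C(\text{mid}+1, r)$ by $(\#\text{remaining calls})\cdot r$ and to check that the right-halving pattern still fits under $f(m)$, since the remaining interval length halves at every step.

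Verifying this upper bound exactly, including the ceilings and the mix of left and right moves, is the main obstacle. If a direct induction fails, my fallback is to work out the worst case explicitly. That worst case is the all-right path with $k=m$, whose sum of mids is $\sum_j (m - m/2^j)$ over roughly $\log_2 m$ steps, which is about $m\log_2 m - m + 1$. I would then argue that every other path is dominated by it.
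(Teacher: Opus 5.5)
Your skeleton matches the paper's proof: bound $C_{TD}$ below by $m-1$ and above by $m\log_2 m-m+1$, then solve the two quadratics in $k$. Your induction $C(l,r)\ge r-l$ is correct, and it settles the first clause rigorously; the paper only asserts that the all-left path is cheapest.

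The gap is the upper bound, and neither of your routes closes it. Your fallback says the all-right path $k=m$ is the worst case and dominates every other path. This is exactly what the paper asserts, but it is false for the recurrence. For $m=20$, $k=20$ visits mids $11,16,19$, for a total of $46$. By contrast, $k=17$ visits $11,16,19,18$, for a total of $64$. A path that finally turns left into a length-two interval near the right end collects one extra mid of size about $r$, so no domination argument is available. Your primary route is too lossy. Charging every call inside $C(\text{mid}+1,r)$ at cost $r$ gives, for $m=20$, $11+3\cdot 20=71>20\log_2 20-19\approx 67.4$, and asymptotically about $m\log_2 m-m/2$. The tool you need is that mid commutes with translation: $C(a+1,a+n;k)=a\,N+C(1,n;k-a)$, where $N$ is the number of counted calls on the path. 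This turns the right branch into a recursion on both the maximal cost and the maximal depth of a length-$n$ interval.

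More fundamentally, $C(1,m)\le m\log_2 m-m+1$ is not true for every $m$, so no induction can prove it uniformly.
\begin{itemize}
\item For $m=5$ and $k=4$, the mids are $3,5$, with total $8>5\log_2 5-4\approx 7.61$. The theorem itself survives here, because $C_{BU}\ge 10$ for $k\ge 4$.
\item For $m=2$, $C(1,2)=2>1$ for both values of $k$, and the second clause of the statement itself fails. Here $U_2=1$, yet $k=1$ gives $C_{BU}=1<2=C_{TD}$.
\end{itemize}
A correct proof therefore has to restrict $m$ or weaken $U_m$. For the paper's working value $m=20$, evaluating the recurrence directly gives $\max_k C(1,20)=64<67.4$, which does establish the second clause in that setting.
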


We show the detailed proof \papertext{in \vldb{\cite{blip}.}}\techreport{below.} As implied in Theorem~\ref{theo:top-down-bottom-up}, bottom\_up is preferred when the provenance size is relatively small (i.e., $k$ is smaller than $L_m$). In contrast,  top-down wins when $k$ is large (greater than $U_m$). 

\techreport{\begin{proof}
    [Proof of Theorem~\ref{theo:top-down-bottom-up}] 
During binary search in the top\_down approach, consider the lower and upper bounds of $C_{TD}$. In the best case, in every iteration it moves to the left half interval, leading to a cost $C_{TD}^{L} = \lceil \frac{m}{2} \rceil + \lceil \frac{m}{4} \rceil + \dots + 1 \geq m - 1$. In the worst case, in every iteration it moves to the right half interval. In this case, the first iteration incurs a cost of $\lceil \frac{m}{2} \rceil$, and the second iteration incurs a cost of $\lceil \frac{m}{2} \rceil + \lceil \frac{m}{4} \rceil$. Thus, the $i$-th iteration has a cost of $C_i = \sum_{j=1}^{i} \lceil \frac{m}{2^j} \rceil$. The total cost is $C_{TD}^{U} = \sum_{i=1}^{h} C_i$, where $h = \lceil \log_2 m \rceil$. We have, 

\begin{equation}
    C_{TD}^{U} = mlog_2m-m+1
\end{equation}
Solving $C_{BU} \leq C_{TD}^{L}$ leads to, 

\begin{equation}
    k \leq \frac{\sqrt{8m-7}-1}{2} 
\end{equation}
 
Soving $C_{BU} \geq C_{TD}^{U}$ gives, 

\begin{equation}
    k \geq \frac{-1 + \sqrt{1+8 (mlog_2m-m+1)}}{2}
\end{equation}

We denote by $U_m = \frac{-1 + \sqrt{1+8 (mlog_2m-m+1)}}{2}$ and $L_m = \frac{\sqrt{8m-7}-1}{2}$, respectively. When $k\leq L_m$, we have $C_{BU} \leq C_{TD}$. 
When $k \geq U_m$, we have $C_{BU} \geq C_{TD}$. For $L_m<k<U_m$, one must evaluate the 
two formulas explicitly to determine which is larger.
\end{proof}}

 \begin{figure}[tb]
    \centering
    \includegraphics[width=1\linewidth]{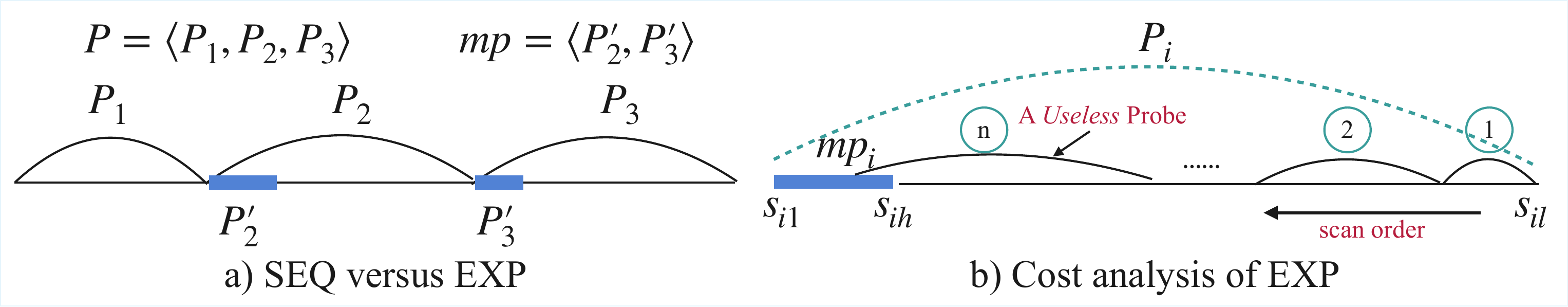}
    \vspace{-1.8em}
    \caption{\small \code{SEQ} versus \code{EXP}. $P$ refers to the provenance returned from the first phase, and $mp \sqsubseteq P$ refers to a minimal provenance.}  
    \label{fig:seq-exp}
\end{figure}

\subsubsection{Cost of Sequential\_Greedy versus Exponential\_Greedy}
\label{subsubsec:seq-exp}
\leavevmode
Taking the provenance, denoted as $P$, returned by the {\em prune} phase as input, 
both Sequential\_Greedy (\code{SEQ} for short) and Exponential\_Greedy (\code{EXP} for short) return a minimal provenance, denoted as $mp \sqsubseteq P$. Let $C_{SEQ}$ and $C_{EXP}$ be the cost of \code{Sequential\_Greedy} and  \code{Exponential\_Greedy}, when applied to $P$. To analyze $C_{SEQ}$ and $C_{EXP}$, we assume  both strategies return the same \trs{minimal provenance} $mp$, as illustrated in Figure~\ref{fig:seq-exp}-a, where the blue intervals represent $mp$. 

Irrespective of the distribution of the minimal provenance $mp$ within $P$, any input sequence $P$ can be decomposed into a list of subsequences (e.g., $\langle P_{1}, P_{2}, P_{3}\rangle$), where the first (possibly empty) subsequence $P_1$ has no contribution to $mp$, followed by $P_2$, $P_3$, ..., each of which has a prefix $P_2'$, $P_3'$ that is part of $mp$.  
This decomposition is useful for comparing \code{EXP} versus \code{SEQ} across different types of sequences. 
For $P_1$, \code{EXP} is clearly preferred over \code{SEQ}, as \code{EXP} eliminates sentences in logarithmic time, whereas \code{SEQ} does so linearly. 

We now focus on the cost comparison for sequences such as $P_2$ or $P_3$. W.L.O.G., 
consider such a sequence $P_i = \langle s_{i1}, s_{i2}, \dots, s_{il} \rangle$, where the leading subsequence $mp_i = \langle s_{i1}, \dots, s_{ih} \rangle$ is part of the minimal provenance $mp$, as illustrated in Figure~\ref{fig:seq-exp}-b. 
 \techreport{Similar to the cost analysis in Section~\ref{subsubsec:bottom-top}, we approximate the cost by counting the total number of block accesses in LLM verification calls and assume the LLM is accurate in provenance verification.} \code{EXP} performs logarithmic deletion over the subsequence $P_i \setminus mp_i$ and linear deletion over $mp_i$, whereas \code{SEQ} always performs linear deletion over the entire $P_i$\papertext{.}\techreport{, both by scanning sentences in decreasing order of their indexes.} However, when an LLM verification fails, \code{EXP} resets the probing window step back to 1, rendering the current LLM call ineffective in deleting any sentences, thus, a {\em useless} probe. To analyze the cost comparison between \code{EXP} and \code{SEQ}, we need to account for both the savings from \code{EXP}'s logarithmic deletion and the additional cost incurred by useless probes. 



\begin{theorem}
\label{theo:seq-exp}
    Given a sequence $P_i = \langle s_{i1}, s_{i2}, \ldots, s_{il} \rangle$ with a leading subsequence $mp_i = \langle s_{i1}, \ldots, s_{ih} \rangle$ that is part of the provenance, let $g = l-h$, we have: 
    \begin{align*}
    &C_{SEQ}=\Theta\!\bigl(gl+h^{2}\bigr),  \\ 
    \Omega\!\bigl(g+h^{2}\bigr)\le
&C_{EXP}\le
O\!\bigl(g\log^2_2 g+h\log^2_2g+h^{2}\bigr)
    \end{align*}
    The cost of $C_{SEQ}$ and $C_{EXP}$ are broken down as follows,  
    \begin{align*}
& g=\omega(1)
       \Longrightarrow\;
         C_{EXP}=o\bigl(C_{SEQ}\bigr), \\
& g=\Theta(1)
       \Longrightarrow\;
         C_{EXP}=\Theta \bigl(C_{SEQ}\bigr).
\end{align*}
\end{theorem}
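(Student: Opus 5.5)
We count block accesses in LLM verification calls, as in Section~\ref{subsubsec:bottom-top}, and assume verification is accurate: a call returns True exactly when the retained text still contains $mp$. Both \code{SEQ} and \code{EXP} scan $P_i$ from right to left, so the tail $\langle s_{i,h+1},\dots,s_{il}\rangle$ (the $g$ non-provenance sentences) is processed first, and then the prefix $mp_i$ (the $h$ provenance sentences). Each call's cost is the length of the current prefix of $P_i$ being tested. We therefore split each cost into a tail part and a prefix part and bound them separately.

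\textbf{Cost of \code{SEQ}.} While \code{SEQ} deletes the tail one sentence at a time, the $t$-th call tests a text of length about $l-t$. Summing over $t=1,\dots,g$ gives
\[
\sum_{t=1}^{g}(l-t)=\Theta(gl),
\]
because $l-t\ge h$ and $l-t\le l$, and the sum is at least about $g\,l/2$. Each of the $h$ provenance sentences is then probed once and fails, at lengths $h,h-1,\dots,1$, which gives $\Theta(h^2)$. The second pass of Algorithm~\ref{alg:greedy} only re-probes $mp_i$, adding $O(h^2)$. Hence $C_{SEQ}=\Theta(gl+h^2)$.

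\textbf{Cost of \code{EXP}.}
\begin{itemize}
\item \emph{Lower bound.} Every provenance sentence must receive at least one failing probe of window size $1$ at length $\Theta(h)$, since it can only be certified as non-removable that way. These probes are mutually distinct, giving $\Omega(h^2)$. Every tail sentence must appear in at least one probe, giving $\Omega(g)$.
\item \emph{Upper bound on the tail.} \code{EXP} removes the tail in successive runs of doubling windows $1,2,4,\dots$. A run ends with a useless probe only when the window overshoots into $mp_i$. After a reset, the remaining gap is at most half of what it was, so there are $O(\log g)$ runs. Each run has $O(\log g)$ probes, each of length at most $l=g+h$. This gives $O\bigl((g+h)\log^2 g\bigr)=O(g\log^2 g+h\log^2 g)$ for the tail phase.
\item \emph{Upper bound on the prefix.} The $h$ provenance sentences each cost one useless size-1 probe, giving $O(h^2)$. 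The repeat pass adds no more than this.
\end{itemize}

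\textbf{Regimes.} If $g=\Theta(1)$, both costs are $\Theta(h^2+h)=\Theta(h^2)$, since $gl=O(h)$; hence $C_{EXP}=\Theta(C_{SEQ})$. If $g=\omega(1)$, compare term by term against $C_{SEQ}=\Theta(g^2+gh+h^2)$:
\begin{itemize}
\item $g\log^2 g=o(g^2)$;
\item $h\log^2 g$ is $o(gh)$ when $\log^2 g=o(g)$, which holds;
\item $h^2$ appears in both costs.
\end{itemize}
The $h^2$ term is the delicate case. The claim $C_{EXP}=o(C_{SEQ})$ is only literally true when $h^2$ does not dominate, i.e.\ when $h=O(g)$ or at least $h=o(gl/h)$. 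I would state this explicitly and restrict the $o(\cdot)$ claim to the part of the cost attributable to the non-provenance tail. Equivalently, I would read the statement as saying that \code{EXP}'s savings are strict whenever $g$ grows.

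\textbf{Main obstacle.} The hard step is the $O(\log^2 g)$ bound on the tail phase. Which windows overshoot depends on the position of $mp_i$ relative to powers of two, so I must prove rigorously that each reset at least halves the remaining tail gap. I would first try an invariant: after a failed window of size $2^r$, the distance to $mp_i$ is less than $2^r$, and the sizes already deleted in that run sum to $2^r-1$. This yields the halving.
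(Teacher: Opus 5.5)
Your bounds follow the paper's proof essentially step for step. For \texttt{SEQ}, you use the arithmetic series $gl-\frac{g(g+1)}{2}$ plus a quadratic term. For \texttt{EXP}, you use doubling runs over the tail, each ended by one useless probe, giving $O(\log g)$ runs of $O(\log g)$ probes of length at most $l$, followed by $h$ failing size-one probes. You also use the same $\Omega(g+h^2)$ lower bound. Your invariant is a cleaner, rigorous form of the paper's halving step: a run that deletes $2^r-1$ sentences and then fails at window $2^r$ leaves a gap $G'<2^r$, so the gap before the run satisfies $G=G'+2^r-1\ge 2G'$.

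One cosmetic point: failed probes on $mp_i$ do not shrink the text. In the paper's model, where each call is charged for all retained blocks, each such probe therefore costs $h-1$, and the total is $h(h-1)$ rather than $h+(h-1)+\dots+1$. The order is unaffected.

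On the regime $g=\omega(1)$, your doubt is justified, and the gap is in the paper's argument rather than yours. The paper derives $C_{EXP}=o(C_{SEQ})$ only by comparing $gl$ with $g\log g$, and never deals with the $h^2$ term that both costs share. Since $C_{EXP}\ge h(h-1)$ while $C_{SEQ}=\Theta(g^2+gh+h^2)$, the ratio $C_{EXP}/C_{SEQ}$ is bounded away from $0$ whenever $h=\Omega(g)$. For instance, $h=g$ gives $C_{SEQ}=\frac{5}{2}g^2-\frac{3}{2}g$ and $C_{EXP}\ge g^2-g$, so the stated implication is false in that regime.

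This also means your proposed repair ``$h=O(g)$'' is too weak, since $h=\Theta(g)$ is itself a counterexample. The correct extra hypothesis is $h=o(g)$. It is equivalent to your other formulation $h^2=o(gl)$, and under it your term-by-term comparison goes through. Your alternative, restricting the $o(\cdot)$ claim to the tail costs, holds for every $h$, because $(g+h)\log^2 g=o\bigl(g(g+h)\bigr)$ when $g=\omega(1)$.
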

We show the proof \papertext{in \vldb{\cite{blip}.}}\techreport{below.} As shown in Theorem~\ref{theo:seq-exp}, \( C_{\code{EXP}} \) is asymptotically no worse and often cheaper than \( C_{\code{SEQ}} \) in most cases. 
 In practice, since the provenance size often remains constant relative to the sequence length, i.e., \( r = \omega(1) \), \code{EXP} is cheaper than \code{SEQ} when $l$ and $r$ become larger, implied by \( C_{EXP} = o(C_{SEQ}) \).  

\techreport{\begin{proof}
\newcommand{\SEQ}{\mathrm{SEQ}}
\newcommand{\EXP}{\mathrm{EXP}}

First, consider computing $C_{SEQ}$. Let \( g := l - h \) be the number of sentences that are not part of the provenance. To linearly probe each sentence in \( P \setminus mp \) using \code{SEQ}, the number of data blocks per LLM verification is \( l - 1, l - 2, \dots, l - g \), leading to a total cost of \( gl - \frac{g(g+1)}{2} \). For probing each sentence in \( mp \), since no sentence can be removed, \code{SEQ} incurs an additional cost of \( h(h - 1) \), resulting in the total cost:
\[
C_{SEQ} = gl - \frac{g(g+1)}{2} + h(h - 1).
\] 

We have the tight bound of $C_{SEQ}$ as, 
\[
  \boxed{\,C_{\SEQ}=\Theta\!\bigl(gl+h^{2}\bigr)\,}
\]
since \(gl\ge g\) and \(h^{2}\ge h(h-1)\), while
\(\tfrac12 g(g+1)\le g^{2}\le gl+g^{2}\le2gl\). 

Now consider $C_{EXP}$. To probe and delete sentences that are not part of the provenance, \code{EXP} performs \( k \) successful probes followed by one failure, then resets the window size to 1 and recursively repeats the same probing process. The cost in this stage is broken down into the cost of successful probes and the failure probe. The former is computed as $\sum_{i=0}^{k-1}(h+g - 2^i) = k(h+g) - 2^k + 1$, where $k = \lceil log_2g\rceil$, the number of successful probes.

\[
  \underbrace{k(h+g)-2^k + 1}_{\text{successful probes}}
  \;+\;
  \underbrace{\max\{0,h+g-2^{k+1}\}}_{\text{one failure}}
  \;=\;O\!\bigl((h+g)\log_2 g\bigr).
\]

After a failure probe, at most \(2^{k}\!-\!1<\tfrac12 g\) non-provenance sentences remain,
so the recursion depth is at most \(\lceil\log_{2}g\rceil\). Thus, we have the upper asymptotic bound of \( C_{\code{EXP}} \) as follows, where \( O(h^2) \) accounts for the cost of probing the \( h \) provenance sentences. 

\[
  \boxed{\,C_{\EXP}=O\!\bigl(glog^2_2g+h\log^{2}_2 g+h^{2}\bigr)\,}.
\]

Now we develop the lower asymptotic bound of $C_{EXP}$ as below, 
\[
  \boxed{\,C_{\EXP}=\Omega\!\bigl(g+h^{2}\bigr)\,}.
\]

Here, every non-provenance sentence in \( P \setminus mp \) contributes a cost of at least one block to the overall cost.

Comparing the two strategies, we have the tight bound
\(C_{SEQ}(l,h)=\Theta\!\bigl(gl+h^{2}\bigr)\)
with \(g=l-h\), while \code{EXP} satisfies
\(\Omega\!\bigl(g+h^{2}\bigr)\le
C_{EXP}(l,h)\le
O\!\bigl(g\log^2_2 g+h\log^2_2 g+h^{2}\bigr)\).
Because \(gl\) grows strictly faster than \(g\log g\) once
\(g=\omega(1)\), we obtain \(C_{EXP}=o(C_{SEQ})\) for sequence containing more than a constant number of non-provenance sentences; conversely,
when \(g=\Theta(1)\) both costs are the same order. Overall, we have, 

\[
\boxed{%
\begin{aligned}
& g=\omega(1)
      &&\Longrightarrow\;
         C_{EXP}=o\!\bigl(C_{SEQ}\bigr), \\
& g=\Theta(1)
      &&\Longrightarrow\;
         C_{SEQ}=\Theta\!\bigl(C_{EXP}\bigr).
\end{aligned}}
\]
\end{proof}}

\subsubsection{\rone{Recommended Approach}}\label{sec:rec_strategy}
\rone{We have presented two different pruning  (bottom-up and top-down) and refinement strategies (SEQ and EXP). Theorems~\ref{theo:top-down-bottom-up} and \ref{theo:seq-exp} (as well as our empirical study in Section~\ref{subsec:exp-result}) show that the best pruning and refinement approach depends on the size of the provenance, quality of ranking,  as well as how  provenance is distributed within $T$.
Given absence of knowledge about these factors, we provide the following recommendations to decide which pruning and refinement strategy to use. 
\\\indent
\textit{Recommended Pruning strategy}. We propose an adaptive strategy for pruning. Theorem~\ref{theo:top-down-bottom-up} showed that if provenance is the first $L_m$ blocks, top-down performs best while if it is in the last $U_m$ blocks, then the bottom-up  performs best. 
We use this insight to design the following adaptive strategy. We use $CP = \frac{L_m + U_m}{2}$ to approximate a {\em crossover point}. 
Given a ranking approach, the adaptive strategy initially applies  bottom-up  to probe the ranked blocks. If the top-$CP$ blocks fail to produce the correct answer, the strategy switches to top-down to complete the search. Our empirical results (Section~\ref{subsec:exp-result}) show that this simple strategy often outperforms both methods.  \\\indent
\textit{Recommended Refinement strategy}. SEQ performs best when a short provenance can be found at the beginning of the document (i.e., $h$ is small in Theorem~\ref{theo:seq-exp}) while EXP performs better when documents are longer. This is consistent with our empirical observation: we recommend using SEQ if the number of sentences in the provenance returned from the pruning stage is below a threshold $t$ ($t=10$ reported in Section~\ref{subsec:exp-result}),  but to use EXP otherwise. 
}

\vspace{-2mm}
\subsubsection{\rone{\sys Scope}}\label{sec:usefulness_scope}
\rone{We so far have shown that \sys returns a minimal provenance for any LLM-powered data processing~\cite{zhao2024chat2data, huang2024transform, naeem2024retclean, buss2023generating, zhu2024autotqa} task, as discussed in Section~\ref{subsubsec:correct}; and for any  SPJRU (i.e., ${\mathcal RA}^+$) task, the provenance is well-defined and useful, as discussed in Section~\ref{subsec:database-prov-relationship}.   However, whether \sys can do so efficiently depends on the task itself as well as the abilities of the ranking method used.    \\\indent First, if the provenance for the task is large, then the cost of \sys increases, as discussed in Theorem~\ref{theo:seq-exp}---guaranteeing minimality of a large set is difficult as any sentence can be redundant and our algorithm must check if the removal of any sentence can lead to a smaller provenance. Second, the quality of the ranking method affects the cost---the better the ranker, the more effectively \sys can prune out blocks that are not part of the provenance. The quality of the ranker depends on the capabilities of the embedding model (or the LLM used for ranking), as well as task and data characteristics. Embedding models are commonly used for semantic retrieval tasks both for text~\cite{zhuang2023toolqa} and tabular data~\cite{jin2022survey,nan2022fetaqa}, and as we see in our experiments, can provide accurate ranking for common LLM-powered text and table processing tasks done by LLMs. However, embedding models may not be well-suited for ranking in tasks that require logical operations (e.g., when we have a predicate \texttt{A = 4}, and all values of \texttt{A} are between 3 to 5). Such queries should be performed using SQL, e.g., by using an LLM to rewrite the query $Q$ using text-to-SQL as opposed to the LLM processing data directly.  In this case, existing provenance tracking solutions for databases~\cite{green2007provenance,buneman2001and,cheney2009provenance} are more effective. We leave how our approach can be combined with query answering with SQL to the future work.}

\techreport{Note that adopting \code{Exponential\_Greedy} improves the worst-case performance of the two-phase approach when the provenance returned from the first phase is large, particularly when the provenance distribution is local. This is because it offers a logarithmic-factor reduction in pruning sentences that do not contribute to the answer, thereby making the approach more robust. }

\section{Finding Top-k Provenance} 
\label{sec:k-provenance}

The two-phase approach described in Section~\ref{sec:1-provenance} returns a single minimal provenance. However, it may not be sufficient for users to verify the correctness of an answer. 
\techreport{For example, consider a question asking for the publication year of a scientific paper in Figure~\ref{fig:top-k}. Provenances 1 and 2 represent two distinct provenances that support the correctness of the answer. In contrast, Provenance 3 happens to return the same answer without supporting its correctness, as it refers a different paper citation in the {\em Reference} section. 

A provenance is considered {\em positive} (e.g., Provenances 1 and 2 in Figure~\ref{fig:top-k}) if it supports the correctness of the answer. Otherwise, it is {\em negative} (e.g., Provenance 3). The distinction between positive and negative provenances can ultimately only be determined by a human than an LLM, as both types of provenance pass the LLM's verification process (i.e., they both reproduce the same answer derived from the full text). } 
To address this issue and to enable users to explore multiple distinct provenances and gain trust in the correctness of the answer, we develop an approach, outlined in Algorithm~\ref{alg:top-k}, to return the $k$ minimal provenances. \techreport{During this process, if users identify any one of the provenances as positive, then the algorithm terminates.} 

\begin{algorithm}[tb]
\scriptsize
\caption{\code{top-k-provenance}}
\label{alg:top-k}
\KwIn{$Q,\,T,\,L,\,A,F,I,\rtwo{k}$}  
$T = \langle b_1,\dots ,b_m\rangle$ \\ 
$H \leftarrow \text{empty max heap}$, $P \leftarrow \text{empty list}$ \\ 
$H.append((T_{1,m}, 1)) $ \\  
\While{$H \neq \emptyset$}{
$(T_{l,r}, score_{l,r}) \leftarrow H.pop()$ \\ 
\If{$l < r$}{
$mid = \lceil \frac{l+r}{2} \rceil$ \\ 
    $score_{l,mid} \leftarrow F(L(T_{l,mid},Q),A)$ \\ 
    $score_{mid+1,r} \leftarrow F(L(T_{mid+1,r},Q),A)$ \\ 
     }
$\rtwo{c_1 \leftarrow (l == r) \land (score_{l,r} \geq 0.5)}$ \\ 
\rtwo{$c_2 \leftarrow (l < r) \land (score_{l,r} \geq 0.5)  \land (score_{l,mid} < 0.5) \land   (score_{mid+1,r} < 0.5)$} \\
\If{\rtwo{$(c_1 \lor c_2) == True$}}{
    \If{\rtwo{$c_1 == True$}}{
    $P_{cur} = \code{refine}(T_{l,r})$ \\ }
    \Else{$P_{cur} = \code{refine(prune(}(T_{l,r})$)) \\}
    \techreport{\If{\code{user\_verify}($P_{cur}$) == Positive}{
        {\bf Return} $P_{cur}$, True 
    }}
    $P \leftarrow P \cup P_{cur}$ \\ 
}
\If{$c_2 == False$}{
$H.append((T_{l,mid}, score_{l,mid}))$ \\ 
$H.append((T_{mid+1,r}, score_{mid+1,r}))$ \\
}
\If{$|P| \geq k$}{{\bf Return } $P$}
    }
\end{algorithm}

\topic{Algorithm} 
\rtwo{We illustrate Algorithm~\ref{alg:top-k} for inferring $k$ provenances using a running example shown in Figure~\ref{fig:top-k}. We let the text $T$ be a list of $m$ equal-sized blocks, and we aim to find three minimal provenances ($k=3$).}

\rtwo{Initially, we start from $T$ (or equivalently, $T_{1,m}$, denoting the sequence of blocks from $B_1$ to $B_m$), represented by the root node 1. In the second iteration, we split $T_{1,m}$ into left and right halves, corresponding to its child node 2 $T_{1,\lceil m/2 \rceil}$ and node 3 $T_{\lceil m/2 \rceil + 1,m}$, respectively.} For each node, we compute a score as the likelihood that the corresponding block can reproduce $A$. Such a score is 1 if the answer $A'$ produced by the block is lexically identical to $A$; otherwise, we leverage LLMs to generate the score \papertext{(see prompt in \vldb{\cite{blip}}). }\techreport{using the prompt \code{Top-k-Eval-Prompt} below.}   
\rtwo{For example, the scores of nodes 1 and 2 are 1 and 0.9.} 

\techreport{\noindent{\underline{\code{Top-k-Eval-Prompt}}}: \textcolor{blue}{[Question]}, \textcolor{blue}{[Answer 1]}, and \textcolor{blue}{[Answer 2]} are placeholders for the question $Q$, the answer $L(T_{l,\text{mid}}, Q)$ evaluated on $T_{l,\text{mid}}$, and $A$, respectively, in Line 8-9 in Algorithm~\ref{alg:top-k}.  Note that when \textcolor{blue}{value} is \textit{False}, we use $1 -$~\textcolor{blue}{score} as its likelihood.

\mypython{
\small 
    Top-k-Eval-Prompt: Given the following question, \textcolor{blue}{[Question]}, and two answers, \textcolor{blue}{[Answer 1]} and \textcolor{blue}{[Answer 2]}. Determine whether the two answers are equivalent in meaning. Return the result as a JSON object with the following format: \textcolor{blue}{value}: true if the answers are equivalent, false otherwise. \textcolor{blue}{score}: a real number between 0 and 1 representing the likelihood that your judgment is correct. }}

\techreport{
\begin{figure}[]
    \centering
    \includegraphics[width=0.9\linewidth]{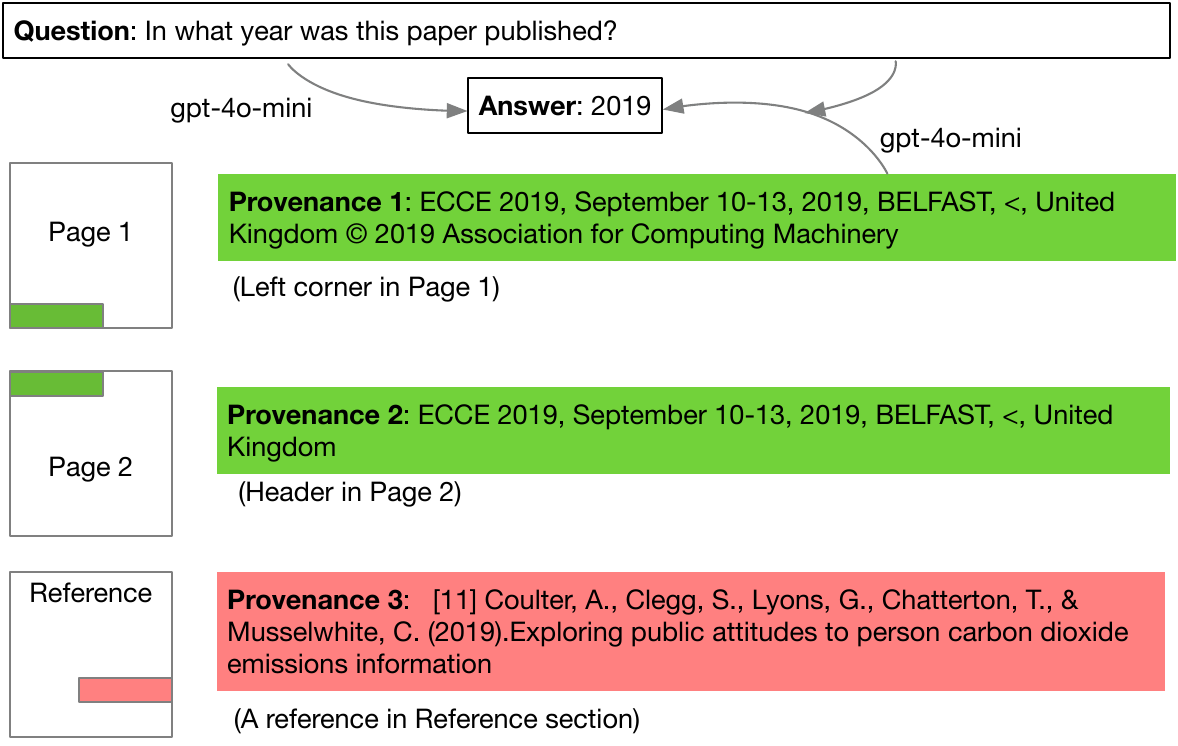}
    \caption{\small Top-k Provenance. }  
    \label{fig:top-k}
\end{figure}
}

We create a max heap $H$ that stores pairs consisting of a text block and its corresponding score. In each iteration, we expand the node with the highest likelihood score from the max heap and remove the node that is already expanded.  
\rtwo{For example, node 1 is removed from $H$ in the second iteration, and node 2 is expanded in the next iteration since it has the highest score.} We then insert its two children (node 4 and 5)—obtained by splitting the current text sequence into left and right halves—into the max heap and rank them based on their likelihood scores.

\rtwo{The above node splitting continues recursively until either of the following two conditions is met.} \rtwo{First, when we reach a node that contains only one block (e.g., node 8) and it passes verification (i.e., reproduces the answer), we return it by invoking \code{refine} to obtain the corresponding minimal provenance.} In this example, node 8 is the first provenance returned by the \code{top-k-provenance} algorithm\papertext{.}\techreport{, assuming it is distinct from the provenance returned by the two-phase approach.}  
\rtwo{Second, when we reach an internal node (e.g., node 6) whose score is no less than 0.5, but both of its child nodes have scores lower than 0.5 (e.g., nodes 10 and 11), node 6 is returned as the third provenance. This provenance is then further refined using the two-phase approach to obtain a minimal provenance.}

\techreport{\topic{Discussion} 
The \code{top-k-provenance} algorithm does not guarantee returning all distinct minimal provenances, but only a subset of them in $T$. We argue that this is often sufficient in many scenarios. If the question over $T$ is correctly answered by LLM $L$, then finding a single positive provenance suffices for user trust, and the algorithm can terminate without exploring all minimal provenances. On the other hand, if the answer returned by $L$ on $T$ is incorrect, ideally  all minimal provenances that are negative are expected to trust that answer is wrong. }

\begin{figure}[tb]
    \centering
    \vspace{-1mm}
    \includegraphics[width=1\linewidth]{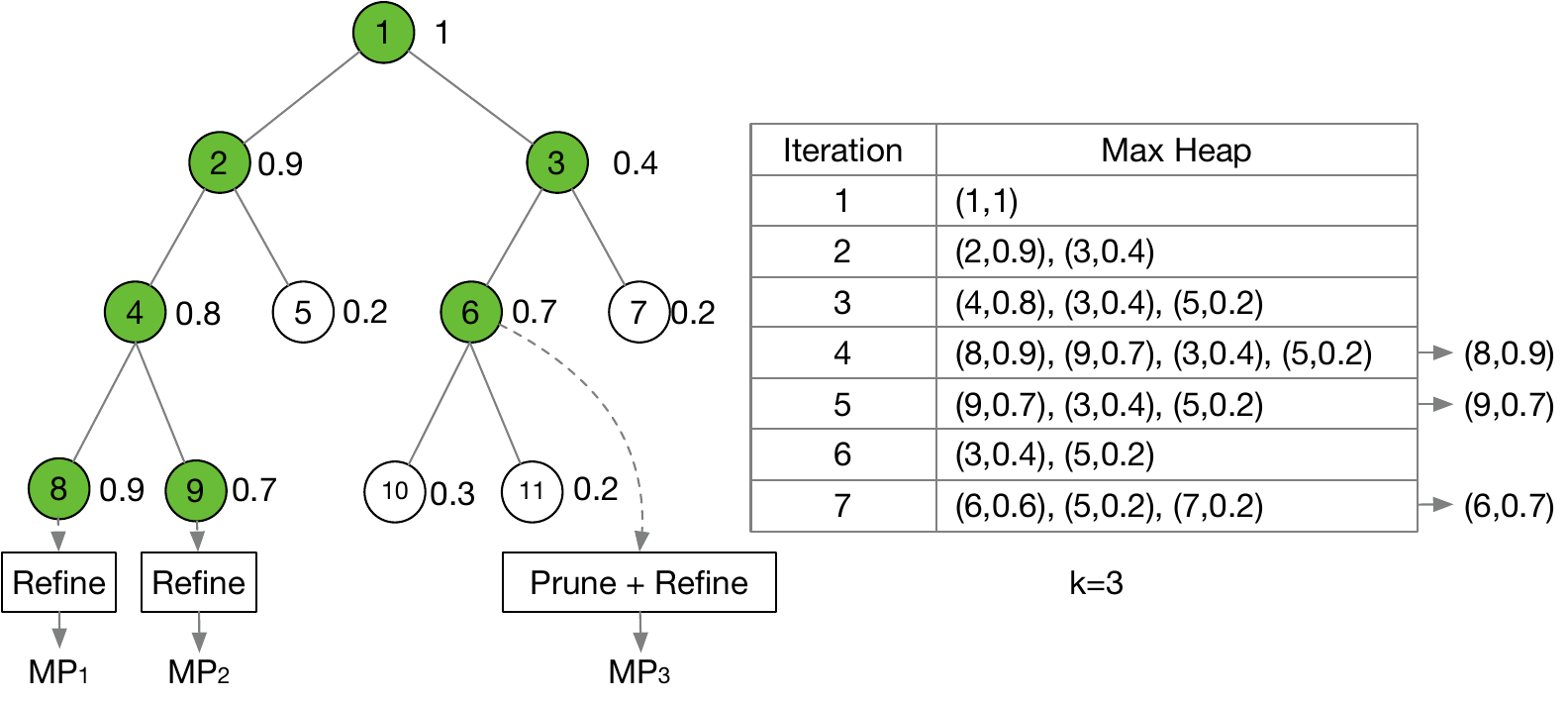}
    \vspace{-8mm}
    \caption{\small Top-$k$ Provenance.} 
    \vspace{-1mm}
    \label{fig:top-k} 
\end{figure}





\begin{table}[bt]
\small
\papertext{\vspace{-1mm}}
\centering
\scalebox{0.7}{
\begin{tabular}{|c|c|c|c|c|c|c|c|}
\hline
& \multicolumn{4}{c|}{\textbf{Question Types}} & \multicolumn{3}{c|}{\textbf{\# of Hops}}  \\ \hline
& \textbf{Look-up} & \textbf{Aggregation} & \textbf{Reasoning} & \textbf{Judge} & \textbf{1-Hop} & \textbf{2-Hops} & $\geq$\textbf{3-Hops} \\ \hline
\textbf{Qasper} & 54.1\% & 4\% & 24.4\% & 17.6\% & 72.3\% & 22.2\% & 5.6\% \\ \hline 
\textbf{NL\_DEV} & 75.2\% & 4\% & 18\% & 2.8\% & 42.3\% & 52.1\% & 5.6\% \\ \hline 
\textbf{HotpotQA} & 68.1\% & 1.8\% & 22.8\% & 7.4\% & 7.6\% & 57.3\% & 35.1\% \\ \hline
\textbf{\rthree{CUAD}} & \rthree{17.1\%} & \rthree{6.3\%} & \rthree{5.2\%} & \rthree{71.4\%} & \rthree{7.3\%} & \rthree{58.5\%} & \rthree{34.1\%} \\ \hline 
\textbf{\rthree{PubMedQA}} & \rthree{6\%} & \rthree{2\%} & \rthree{51\%} & \rthree{41\%} & \rthree{9.6\%} & \rthree{56.2\%} & \rthree{34.2\%} \\ \hline 
\end{tabular}}
\caption{\small Characteristics of Questions across Workloads.}
\papertext{\vspace{-4mm}}
\label{tab:query-workloads}
\end{table}

\begin{table*}[bt]
\small
\centering
\papertext{\vspace{-13pt}}
\scalebox{0.65}{
\begin{tabular}{|c|c|c|c|c|c|c|c|c|c|c|}
\hline
& & \multicolumn{3}{c|}{\textbf{Qasper} (8972 Tokens)} & \multicolumn{3}{c|}{\textbf{NL\_DEV} (7957 Tokens)} & \multicolumn{3}{c|}{\textbf{HotpotQA} (1233 Tokens)} \\ \hline 
 \textbf{Type} & \textbf{Strategy} & \textbf{Size Ratio (\%)} & \textbf{Cost Ratio} & \textbf{Latency} & \textbf{Size Ratio (\%)} & \textbf{Cost Ratio} &\textbf{Latency} & \textbf{Size Ratio (\%)} & \textbf{Cost Ratio} &\textbf{Latency} \\ 
\hline
\multirow{6}{*}{\textbf{Pruning Strategies}} & LLM\_bottom\_up & 9.7 & \textbf{0.2}  & \textbf{2} & 13.6 & 0.3 & 2 & 19.8 & \textbf{0.3} & 1.2 \\
&LLM\_top\_down & 12.6 & 0.9  & 3 & 15.3 & 1.2 & 2.9 & 22.3 & 1.3 &1.9\\
&LLM\_adaptive & 9.8 & \textbf{0.2}  & 2.1 & 13.5 & 0.3 & 2 & 20 & \textbf{0.3} & \textbf{1.1} \\ \cline{2-11}
&Embedding\_bottom\_up & 10.5 & 0.36 & 2.1 & 12.2 & \textbf{0.2}& \textbf{1.9} &  20.2 & \textbf{0.3} & 1.3\\
&Embedding\_top\_down & 12.4 & 1.1 & 3.1 & 14.8 & 1.1& 2.6 & 23.2 & 1.2 & 2\\
&Embedding\_adaptive & 10.6 & 0.35 & 2.1 & 12.1 & \textbf{0.2}& \textbf{1.9} &  20.3 & \textbf{0.3} & 1.2\\
\hline  \hline
\multirow{12}{*}{\textbf{\shortstack{Two-Phase Strategy: \\Pruning Strategies\\ + Refinement Strategies}}} 
&LLM\_bottom\_up\_EXP & 3.8 & \textbf{1.2} & 8.4  & 4.8 & \textbf{1.1}& \textbf{8.5} & 7.7 & \textbf{1.2} & \textbf{3.5} \\
&LLM\_bottom\_up\_SEQ & 3.6 & \textbf{1.3} & 9.7 & 4.9 & 1.3& 12.2 & 7.8 & 1.5 & 3.8\\
&LLM\_top\_down\_EXP& 3.7 & 1.7 & 11.2 &  5.1 & 1.9& 10.5 & 8.1 & 2.2 & 4.1\\
&LLM\_top\_down\_SEQ& 4.1 & 2.1 & 13.1 & 4.9 & 2.4& 14.2 & 8 & 2.9 & 4.7\\
&LLM\_adaptive\_EXP & 3.7 & \textbf{1.2} & \textbf{8.1}  & 4.8 & \textbf{1.1}& \textbf{8.5} & 7.7 & \textbf{1.1} & \textbf{3.3}  \\ 
&LLM\_adaptive\_SEQ & 3.6 & \textbf{1.3} & 9.3 & 4.8 & \textbf{1.2}& 11.4 & 7.8 & 1.4 & \textbf{3.7}\\ \cline{2-11}
&Embedding\_bottom\_up\_EXP& 3.4 & \textbf{1.3} & \textbf{8.2}    & 4.6 & \textbf{1.1}& \textbf{8.1} & 7.9 & \textbf{1.2} & 4.1 \\
&Embedding\_bottom\_up\_SEQ& 3.4 & 1.4 & 11.3 & 4.7 & \textbf{1.2}& 9.9 & 7.8 & 1.4 & 4.5\\
&Embedding\_top\_down\_EXP& 4.3 & 1.9 & 12.9 & 5.2 & 2.2& 10.8&8.2 & 2.4 & 5.8\\
&Embedding\_top\_down\_SEQ& 4.1 & 2.4 & 15.2 &  4.8 & 2.7& 13.5& 7.8 & 2.9 & 6.4\\
&Embedding\_adaptive\_EXP& 3.3 & \textbf{1.3} & \textbf{8}    & 4.6 & \textbf{1.1}& \textbf{8.3} & 7.9 & \textbf{1.1} & 4 \\
&Embedding\_adaptive\_SEQ& 3.4 & 1.4 & 10.8 & 4.7 & \textbf{1.2}& 9.6 & 7.8 & \textbf{1.3} & 4.2\\
\hline
\end{tabular}}
\caption{\small Size Ratio, Cost Ratio, and Latency of Strategies in \code{Qasper}, \code{NL\_DEV}, and \code{HotpotQA} Workloads. The top-1 pruning strategies, and the top-3 two-phase strategies are highlighted in bold for both cost ratio and latency. \code{EXP} and \code{SEQ} stand for \code{Exponential-Greedy} and \code{Sequential-Greedy}. }
\papertext{\vspace{-2.5em}}
\techreport{\vspace{-2em}}
\label{tab:top-1}
\end{table*}


\section{Evaluation} 
\label{sec:exp}

\begin{table}[bt]
\small
\centering
\scalebox{0.75}{
\begin{tabular}{l|l|c|c|c|c|c}
\hline
\textbf{Method} & \textbf{Configuration} & \textbf{Qasper} & \textbf{NL\_DEV} & \textbf{HotpotQA} & \rthree{\textbf{CUAD}} & \rthree{\textbf{PubMedQA}} \\ \hline
\multirow{3}{*}{\textbf{LLM}} & \textbf{gpt-4o-mini} & 0.4 & 0.37 & 0.46  & \rthree{0.42} & \rthree{0.39}\\ 
&\textbf{gpt-4o} & 0.36 & 0.26 & 0.3 & \rthree{0.31} & \rthree{0.34} \\ 
&\textbf{gemini-2-flash} & 0.6 & 0.65 & 0.47 & \rthree{0.47} & \rthree{0.42}\\ \hline 
\multirow{3}{*}{\textbf{Retrieval}} & \textbf{RAG-1\%} & 0.82 & 0.84  & 0.89  & \rthree{0.78} & \rthree{0.92}\\ 
&\textbf{RAG-5\%} & 0.59 & 0.64 & 0.7  & \rthree{0.66} & \rthree{0.81}\\ 
&\textbf{RAG-10\%} & 0.32 & 0.41 & 0.54 & \rthree{0.48} & \rthree{0.61}\\ \hline 
\textbf{BLIP}& {\bf E\_adaptive\_EXP} & \textbf{0.033} & \textbf{0.046} & \textbf{0.079} & \rthree{\textbf{0.03}} & \rthree{\textbf{0.12}}\\ \hline 
\end{tabular}}
\caption{\small \revised{Human Review Effort. (E stands for Embedding in BLIP.)}} 
\label{tab:human_review_time} 
\papertext{\vspace{-9mm}}
\techreport{\vspace{-7mm}}
\end{table}

We evaluate \trs{the performance of} \sys 
through experiments conducted on \rthree{five real-world}, two synthetic workloads, \rmix{as well as a user study}. 


\begin{table}[bt]
\scriptsize
\centering
\scalebox{1}{
\begin{tabular}{c|c|c|c}
\hline
\textbf{Dataset}  & \textbf{Size Ratio(\%)} & \textbf{Cost Ratio} & \textbf{Latency}  \\ \hline
CUAD &  2.8 &    1.4 &   9.8 \\ \hline
PubMedQA & 11.3 & 1.2 & 4.2 \\ \hline 
\end{tabular}}
\caption{\small \rthree{Evaluation Results of \sys (the overall recommended strategy, Embedding\_adaptive\_EXP) on CUAD and PubMedQA.}} 
\label{tab:cuad} 
\papertext{\vspace{-3mm}}
\end{table}

\begin{figure*}[tb]
\centering
\subfigure[Ranker = Embedding\label{fig:seq-exp-embedding}]{\includegraphics[width=0.4\textwidth]{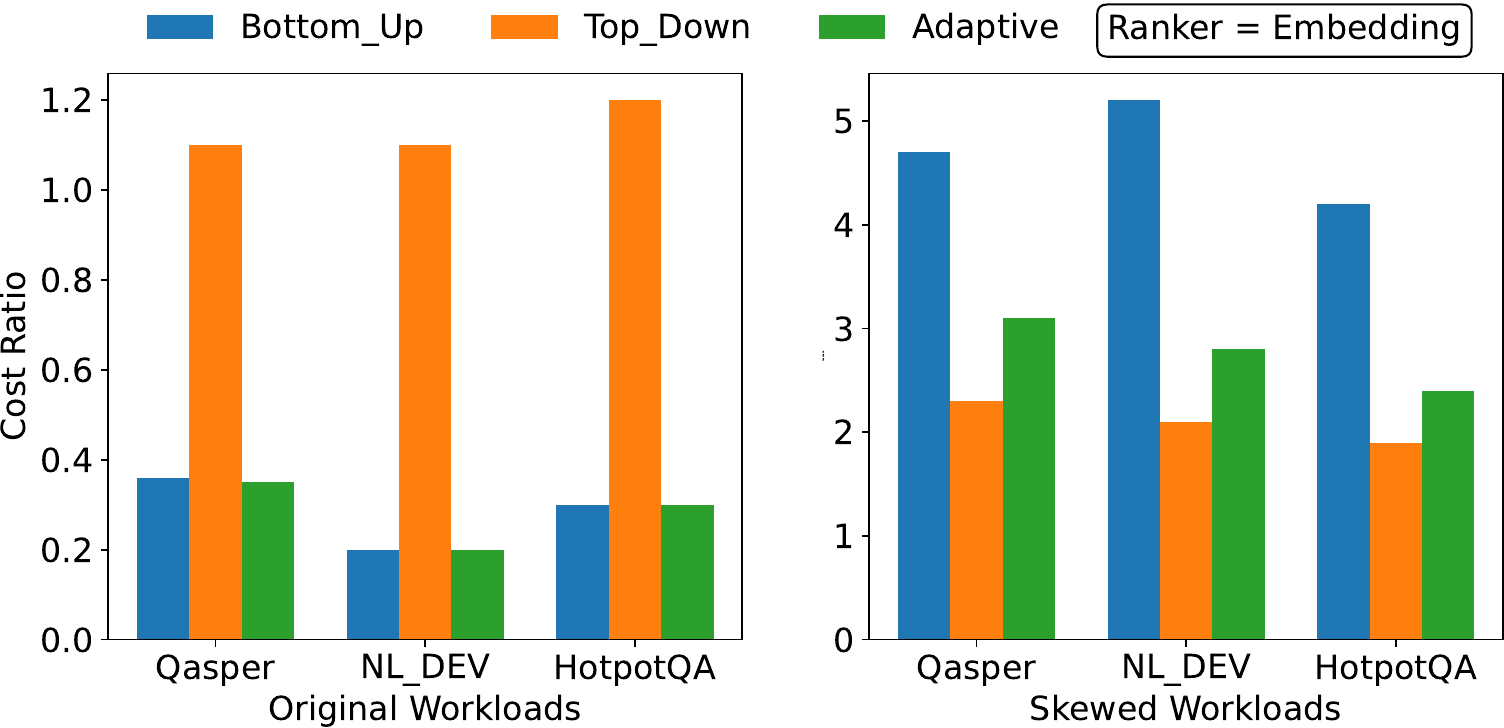}}
\subfigure[Ranker = LLM\label{fig:seq-exp-llm}]{\includegraphics[width=0.4\textwidth]{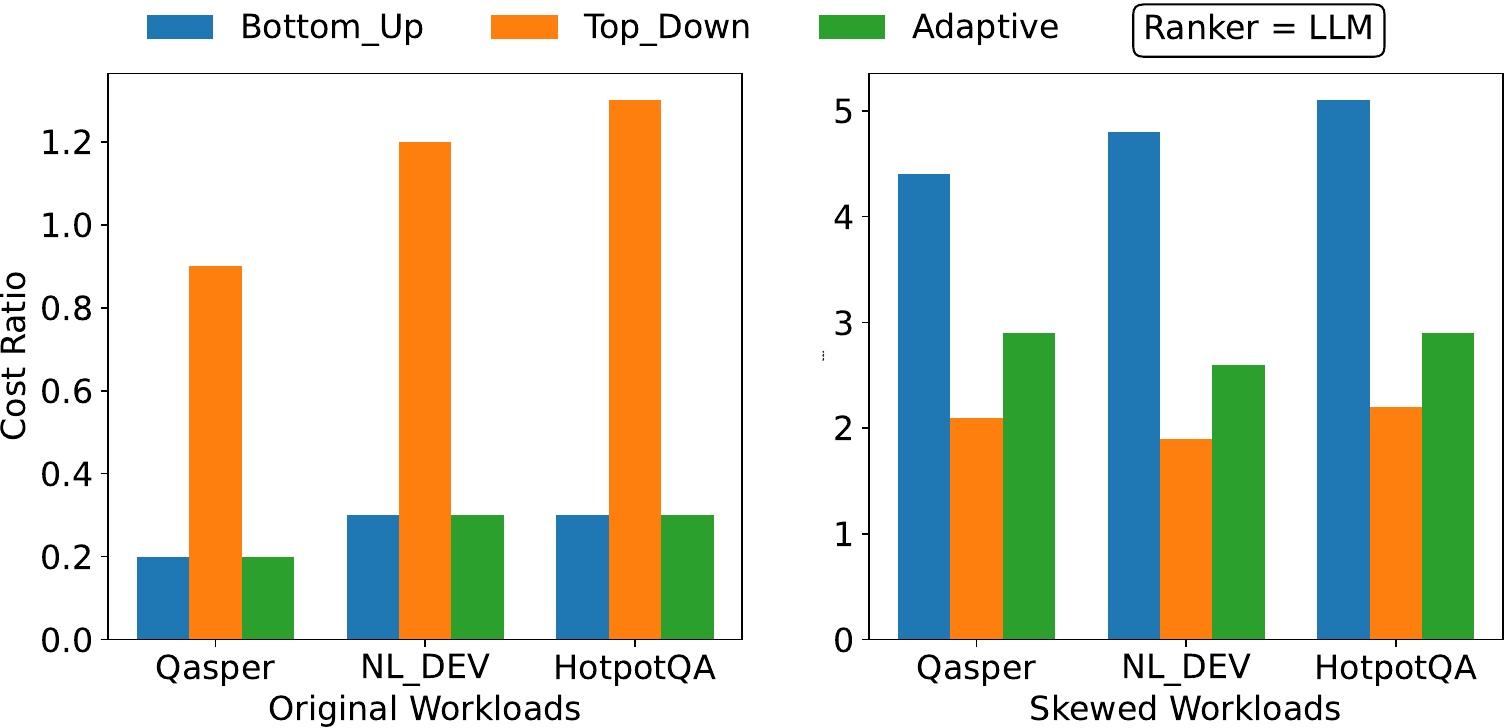}}
\vspace{-1.7em}
\caption{\small Cost ratios of pruning strategies on three real-world workloads and their skewed versions. }
\label{fig:cost-prune}
\vspace{-1.5em}
\end{figure*}

\subsection{Evaluation Setup}
\label{subsec:eval-setup}

\topic{Datasets \& Workloads} \rthree{We conduct experiments on five  commonly used question-answering workloads: {\em Qasper~\cite{dasigi2021dataset}}, {\em NL\_DEV}~\cite{kwiatkowski2019natural}, {\em HotpotQA~\cite{yang2018hotpotqa}}, {\em CUAD}~\cite{hendrycks2021cuad}, and {\em PubMedQA}~\cite{jin2019pubmedqa}, covering \techreport{the domains of} scientific papers, Wikipedia, legal contracts, and medical records.} For each workload, we sampled 500 distinct question-document pairs, ensuring that each question and document in a pair is unique. \rthree{The average tokens per document is 8,972, 7,957, 1,233, 9,964, and 891 for Qasper, NL\_DEV, HotpotQA, CUAD, and PubMedQA, respectively.} \revised{Questions with diverse types and complexities are evaluated: about \rthree{72\%} require at least two reasoning hops, as shown in Table~\ref{tab:query-workloads}; \papertext{see~\cite{blip} for details. }}\techreport{Here, Look-up and Aggregation denote factual questions, Judge corresponds to boolean questions (i.e., yes/no), and Reasoning questions ask why or how, typically requiring longer explanatory answers. } 
We additionally include two synthetic workloads, {\em Movie} and {\em Restaurant}, from TQA-bench~\cite{wolff2025well}, which focus on question answering over tables (TableQA), each containing  64,000 tokens and 100 questions.   

True provenance is available for TableQA workloads but not for others. Since the latter five workloads are on question answering over plain text, true provenance is unknown, and even human annotators may disagree on the ground truth; there may also be multiple true provenances. 
For the TableQA workloads, Movie and Restaurant, true provenance (i.e., the set of tuples) is provided for question-answer pairs. We focus on multi-hop reasoning questions in TableQA workloads, which are common analytical queries over tables. Out of 100 questions per TableQA workload, half are single-hop reasoning questions, while the others involve 2 to 3 hops.  



\topic{Compared Strategies} We first compare the five strategies in the pruning phase, along with ten combined two-phase strategies. 
\code{Sequential-Greedy} and \code{Exponential-Greedy} are abbreviated as \code{SEQ} and \code{EXP}, respectively.  So \code{LLM\_bottom\_up\_EXP} represents a strategy that uses \code{LLM\_bottom\_up} and \code{Exponential-Greedy} in the pruning and refinement phase, respectively.  \revised{We also compare \sys against a retrieval-based approach, RAG, using various context sizes (1\%, 5\%, and 10\% of the full text), and against LLM-generated provenance as mentioned in Section~\ref{sec:introduction}.}

\topic{Implementation} When performing \code{prune} in Algorithm~\ref{alg:linear-search}, we divide the text into $m=20$ equal-sized blocks. The embedding- and LLM-based rankers used to sort the blocks are implemented with the open-source embedding model \code{all-mpnet-base-v2} and LLM \code{Mistral 7B}. Both  embeddings and relevance scores are computed offline\techreport{for each document-question pair}. For question answering and provenance inference, we use \code{gpt-4o-mini} and \code{gemini-2-flash}. Results reported here are based on \code{gpt-4o-mini}. \code{Gemini-2-flash} has similar findings, \papertext{see \vldb{\cite{blip}}.}\techreport{in Table~\ref{tab:top-1-gemini}.}

\topic{Metric} \revised{We first report the {\em accuracy}, defined in Section~\ref{sec:introduction}, as the fraction of document–question pairs for which the inferred provenance can reproduce an answer equivalent to the one from the full text.}  We further report the {\em size ratio}, defined as the number of tokens in the generated provenance divided by that in the full text. We additionally report the {\em latency} and the {\em cost ratio} that is defined as the cost of finding the provenance divided by the cost of question answering. 
For strategies using an embedding-based ranker, we exclude the latency for building embeddings and report only the latency for provenance inference, as embeddings can be computed offline. 
\revised{Finally, observing that an incorrect provenance requires humans to read the entire document, we report {\em human review effort}, measured as the fraction of the document corresponding to the provenance (i.e., size ratio) if accurate (i.e., the provenance is verifiable), and as one plus the size ratio otherwise.} 

We also measure whether methods can recover the true provenance $P_{true}$ for  TableQA workloads  by introducing two metrics: recovery ($R$) and exact-recovery ($eR$). 
\techreport{Provenance is represented as a set of tuples in the tables.} $R$ is defined as the percentage of questions where a predicted provenance $P_{pre}$ contains $P_{true}$ \techreport{over all questions}, i.e., $R = \frac{|P_{true} \subseteq P_{pre}|}{|QA|}$, while $eR$ requires $P_{pre}$ to be an exact match with $P_{true}$, i.e., $eR = \frac{|P_{true} = P_{pre}|}{|QA|}$. 
\rtwo{We note that the {\em accuracy} defined measures how often a provenance inferred by \sys can reproduce the answer (i.e., is verifiable), while {\em recovery} measures whether the inferred provenance matches the true provenance.} 


\vspace{-1em}
\subsection{Experiments}  
\label{subsec:exp-result}

\topic{Experiment 1: Cost, Latency, and Provenance Size of \sys}  
We present the cost ratio, latency, and the size ratio of the provenance generated by \sys in Table~\ref{tab:top-1}. We make several observations. First, strategies in the pruning phase effectively eliminate text portions unrelated to answers, with adaptive strategies  reducing average provenance size ratios to 10.2\%, 12.8\%, and 20.2\% for Qasper, NL\_DEV, and HotpotQA,  respectively. These strategies incur only 0.28$\times$, 0.25$\times$, and 0.3$\times$ the cost of answering questions over the full text, with latencies averaging 2.1, 1.9, and 1.1 seconds. Among them, \code{LLM\_adaptive} {\bf \em reduces provenance size to around 9.8\% of the full text, achieving a 0.2$\times$ cost ratio and an average latency of 2.1 seconds} across the workloads. 

Strategies in $\mathcal{S}_{refine}$ further reduce the provenance size ratio to about {\bf \em 3.7\%, 4.7\%, and 7.7\% on Qasper, NL\_DEV, and HotpotQA}, respectively. Here, \code{Exponential\_Greedy} (i.e., strategies ending with \code{EXP}) have up to a {\bf \em  1.1$\times$ cost ratio and an average latency of 8.1 seconds} across workloads, demonstrating that \delete{the two-phase strategies are  practical solutions for producing small-size provenances that reproduce the answer, while incurring only a slightly higher cost than answering the question over the full text.  Latency can be further improved via parallelization, as part of our future work}\cradd{two-phase strategies produce minimal provenances at only slightly higher cost than full-text QA. Latency can be further reduced via parallelization}. \rthree{We have consistent observations on the CUAD and PubMedQA  workloads, as shown in Table~\ref{tab:cuad}, where \sys incurs a slightly higher cost (with a cost ratio of 1.3) than question answering over the entire input to infer a minimal provenance efficiently. }

\vspace{1pt}
\topic{\revised{Experiment 2: Accuracy and Human Review Effort}}   \rthree{We have already presented the accuracy for retrieval-based (RAG) and LLM-generated provenance in Table~\ref{tab:agreement}, where \sys strategies guarantee {\em \bf an accuracy of 1, over 30\% higher than the best baseline} with a comparable provenance size (i.e., LLM-generated provenance using \texttt{gpt-4o}).} 
\rthree{\delete{Note that non-monotonic tasks account for around 5\%  across datasets (based on Table~\ref{tab:tasks}), and accuracy being 1 demonstrates that \sys is guaranteed to infer verifiable provenance. This is consistent with our theoretical result in Theorem~\ref{theo:optimal}.}\cradd{Accuracy of 1 on non-monotonic tasks ($\sim$5\% of queries; Table~\ref{tab:tasks}) is consistent with Theorem~\ref{theo:optimal}.} }

We report the human review effort in Table~\ref{tab:human_review_time}. \rthree{Compared with RAG-based and LLM-generated provenance, \sys requires substantially less human review effort---only {\bf \em 3\% to 12\%} across five workloads. } 
This is because \sys always returns a minimal verifiable provenance, both effective and efficient for human verification.

\techreport{
\begin{table}[]
\centering
\small
\scalebox{0.8}{
\begin{tabular}{|c|c|c|c|c|c||c|}
\hline
& & & \multicolumn{3}{c|}{{\bf Bottom-up vs. Top-down}} & {\bf EXP vs. SEQ} \\ \hline 
 \textbf{Dataset} & \textbf{Ranker}& {\bf $\mathcal{S}_{Prune}$} & \textbf{$CP_1$} & \textbf{Q1} & \textbf{Q3} &  {\bf $CP_2$ (Pct)} \\
\hline
\multirow{4}{*}{\textbf{Qasper}} &  \multirow{2}{*}{LLM} & BU & 8.2   & 1  &2  & 8.4 (35.2\%) \\
& & TD & 8.2  & 1 & 2  & 9.3 (32.7\%) \\ \cline{2-7}
                    & \multirow{2}{*}{Embedding} & BU & 7.4  & 1 &3  & 8.9 (31.9\%) \\
& & TD & 7.4  & 1 & 3 & 9.5 (28.6\%) \\ \cline{1-7}
\multirow{4}{*}{\textbf{NL\_DEV}} & \multirow{2}{*}{LLM} & BU &6.9  &1 & 3 & 8.5 (38.4\%) \\
& & TD & 6.9  & 1 & 3 & 9.1  (35.2\%)\\ \cline{2-7}  
                    & \multirow{2}{*}{Embedding} & BU & 6.7   &1 &  2  & 8.2 (40.1\%) \\
& & TD & 6.7 & 1 & 2 &  8.9 (38.5\%) \\ \cline{1-7}
\multirow{4}{*}{\textbf{HotpotQA}} & \multirow{2}{*}{LLM}&BU & 7.2  &1 & 3 & 7.5 (40.2\%)\\
& & TD & 7.2 &1 &3 & 8.1 (38.4\%) \\ \cline{2-7}
                    & \multirow{2}{*}{Embedding} &BU &  7.4  &1 & 2  & 7.2 (38.1\%) \\
& &TD & 7.4  & 1 & 2 & 8.9 (37.5\%) \\ 
\hline
\end{tabular}}
\caption{\small Crossover Points ($CP_1$ and $CP_2$). BU and TD stand for bottom\_up and top\_down, respectively. 
$Q1$ and $Q3$ are the first and third quartiles of the returned data blocks when comparing bottom-up and top-down. $Pct$ stands for the percentile of cases where the number of input sentences is less than $CP_2$.  }
\papertext{\vspace{-1.5em}}
\techreport{\vspace{-1em}}
\label{tab:distribution}
\end{table}
}


\techreport{\begin{table*}[bt]
\small
\centering
\scalebox{0.8}{
\begin{tabular}{|c|c|c|c|c|c|c|c|c|c|c|}
\hline
& & \multicolumn{3}{c|}{\textbf{Qasper} (8972 Tokens)} & \multicolumn{3}{c|}{\textbf{NL\_DEV} (7957 Tokens)} & \multicolumn{3}{c|}{\textbf{HotpotQA} (1233 Tokens)} \\ \hline 
 \textbf{Type} & \textbf{Strategy} & \textbf{Size Ratio (\%)} & \textbf{Cost Ratio} & \textbf{Latency} & \textbf{Size Ratio (\%)} & \textbf{Cost Ratio} &\textbf{Latency} & \textbf{Size Ratio (\%)} & \textbf{Cost Ratio} &\textbf{Latency} \\ 
\hline
\multirow{6}{*}{\textbf{Pruning Strategies}} & LLM\_bottom\_up & 10.3 & \textbf{0.3}  & 2 & 14.1 & 0.4 & 2.2 & 17.3 & 0.3 & 1.7 \\
&LLM\_top\_down & 13.4 & 0.8  & 2.7 & 14.7 & 1.4 & 3.2 & 19.5 & 1.1 &2.2\\
&LLM\_adaptive & 10.2 & \textbf{0.3}  & \textbf{1.8} & 13.9 & 0.3 & \textbf{2.1} & 17.7 & \textbf{0.2} & \textbf{1.6} \\ \cline{2-11}
&Embedding\_bottom\_up & 9.4 & 0.5 & 1.9 & 13.1 & \textbf{0.3}& 2.4 &  18.7 & 0.4 & 1.8\\
&Embedding\_top\_down & 10.8 & 1.2 & 2.8 & 14.5 & 1.2& 3.1 & 20.5 & 1.2 & 2.5\\
&Embedding\_adaptive & 9.7 & 0.4 & 2 & 13.1 & \textbf{0.3}& \textbf{2.2} &  18.9 & 0.4 & 1.7\\
\hline  \hline
\multirow{12}{*}{\textbf{\shortstack{Two-Phase Strategy: \\Pruning Strategies\\ + Refinement Strategies}}} 
&LLM\_bottom\_up\_EXP & 4.1 & \textbf{0.9} & \textbf{7.5}  & 5.2 & \textbf{0.9}& \textbf{9.1} & 7.3 & \textbf{0.8} & \textbf{3.7} \\
&LLM\_bottom\_up\_SEQ & 3.9 & \textbf{1.1} & 9.4 & 5 & \textbf{1.1}& 11.9 & 7.2 & 1.1 & 3.9\\
&LLM\_top\_down\_EXP& 4 & 1.4 & 10.4 &  5.1 & 1.6& 10.9 & 7.6 & 1.9 & 4\\
&LLM\_top\_down\_SEQ& 4.3 & 1.8 & 14.2 & 5.4 & 2.1& 13.7 & 7.8 & 2.3 & 4.3\\
&LLM\_adaptive\_EXP & 4 & \textbf{0.9} & 8.6  & 5.2 & \textbf{1.1}& \textbf{8.9} & 7.4 & \textbf{0.8} & \textbf{3.5}  \\ 
&LLM\_adaptive\_SEQ & 3.9 & \textbf{0.9} & 8.8 & 4.9 & \textbf{1}& 12.9 & 7.4 & \textbf{0.9} & \textbf{3.7}\\ \cline{2-11}
&Embedding\_bottom\_up\_EXP& 3.6 & 1.3 & \textbf{7.6}    & 5 & \textbf{1}& \textbf{8.4} & 7.3 & \textbf{1} & 3.9 \\
&Embedding\_bottom\_up\_SEQ& 3.7 & \textbf{1.1} & 10.6 & 4.9 & \textbf{1.1}& 9.6 & 7.6 & 1.4 & 4.2\\
&Embedding\_top\_down\_EXP& 3.9 & 1.7 & 11.3 & 5.2 & 1.7& 12.1&7.9 & 1.8 & 4.6\\
&Embedding\_top\_down\_SEQ& 4 & 2 & 13.9 &  5.3 & 2.2& 14.7& 7.8 & 2.3 & 5.7\\
&Embedding\_adaptive\_EXP& 3.6 & \textbf{1.1} & \textbf{8.2}    & 4.9 & \textbf{0.9}& \textbf{8.2} & 7.5 & \textbf{0.9} & \textbf{3.6} \\
&Embedding\_adaptive\_SEQ& 3.7 & \textbf{1.2} & 10.3 & 4.8 & \textbf{1}& \textbf{9.1} & 7.6 & \textbf{1} & 3.8\\
\hline
\end{tabular}}
\caption{\small Gemini-2-flash: Size Ratio, Cost Ratio, and Latency of Strategies in \code{Qasper}, \code{NL\_DEV}, and \code{HotpotQA} Workloads. The top-1 pruning strategies, and the top-3 two-phase strategies are highlighted in bold for both cost ratio and latency. \code{EXP} and \code{SEQ} stand for \code{Exponential-Greedy} and \code{Sequential-Greedy}. }
\vspace{-2em}
\label{tab:top-1-gemini}
\end{table*}}

\vspace{1pt}
\topic{Experiment 3: Bottom\_up, Top\_down, and Adaptive  strategies} 
Although bottom\_up and adaptive strategies outperform their top\_down counterparts in Table~\ref{tab:top-1}, we provide a breakdown to understand why. Section~\ref{subsubsec:bottom-top} shows that the cost of these strategies depends on the number of returned data blocks, with a suggested crossover point of $CP \approx 8.4$ when the text is divided into 20 blocks. We create a skewed dataset by upsampling question-document pairs so that 80\% of the pairs have more returned data blocks than $CP$. Figure~\ref{fig:cost-prune} reports the cost ratios on both the original and skewed workloads. \delete{It shows that top-down strategies are preferred when the number of returned data blocks is large—being over 2$\times$ cheaper than their bottom-up counterparts on the skewed workloads, with adaptive strategies providing a middle ground throughout.}\cradd{Top-down is over 2$\times$ cheaper on the skewed workloads, while adaptive strategies provide a middle ground in both settings.}
\techreport{To understand the trend of cost relative to the number of returned data blocks, we further plot the cost ratio of  \code{LLM\_bottom\_up},  \code{LLM\_top\_down}, and \code{LLM\_adaptive} for various numbers of returned data blocks in Qasper in Figure~\ref{fig:adaptive}. 
The cost of the bottom\_up strategy grows much faster than that of top\_down—it is lower when fewer than 8.2 data blocks are returned but higher beyond that. This is because top\_down examines blocks exponentially, while bottom\_up proceeds linearly, resulting in a roughly quadratic cost ratio relative to the number of returned data blocks.}  
\techreport{We also report the distribution of returned data blocks in Table~\ref{tab:distribution}, where the actual crossover points, denoted by $CP_1$, are presented when comparing bottom\_up and top\_down. The results show that most question-document pairs have a small number of returned blocks, making bottom\_up more cost-effective on average, whereas top\_down is preferred in the skewed workloads. }

\delete{Overall, for pruning, if an estimate of the number of returned data blocks is available from historical data, one can choose a strategy by comparing this estimate with $CP$. Otherwise, the adaptive strategy wins, as it combines the strengths of both, being efficient for small outputs while avoiding quadratic cost for large ones.}\cradd{Overall, we recommend the adaptive strategy when provenance size is unknown, as it combines both strategies' strengths while avoiding worst-case costs.}

\techreport{
\begin{figure}[tb]
    \centering
    \includegraphics[width=0.9\linewidth]{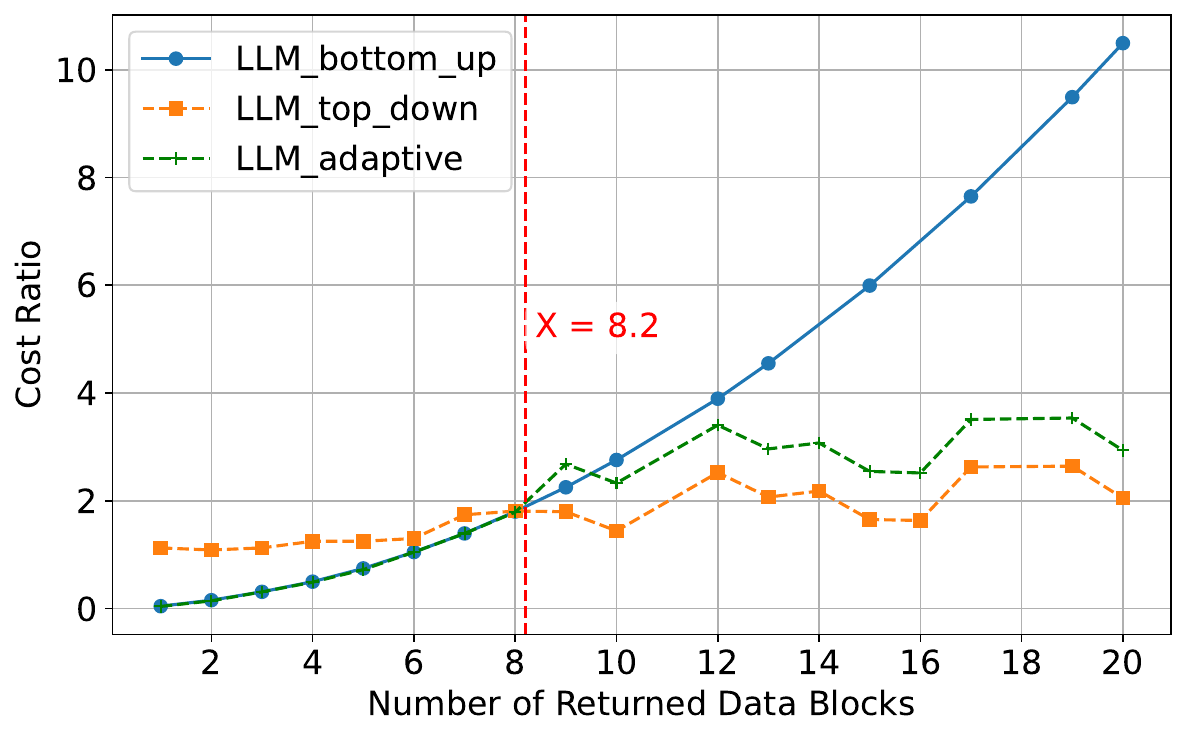}
    \vspace{-1em}
    \caption{\small Cost Ratio of top\_down and bottom\_up on Qasper.} 
    \papertext{\vspace{-0.7em}}
    \label{fig:adaptive}
\end{figure}
}

\begin{figure*}[tb]
    \vspace{-2mm}
\centering

\begin{minipage}[b]{0.24\textwidth}
    \centering
    \vspace{-2mm}
    \includegraphics[width=\linewidth]{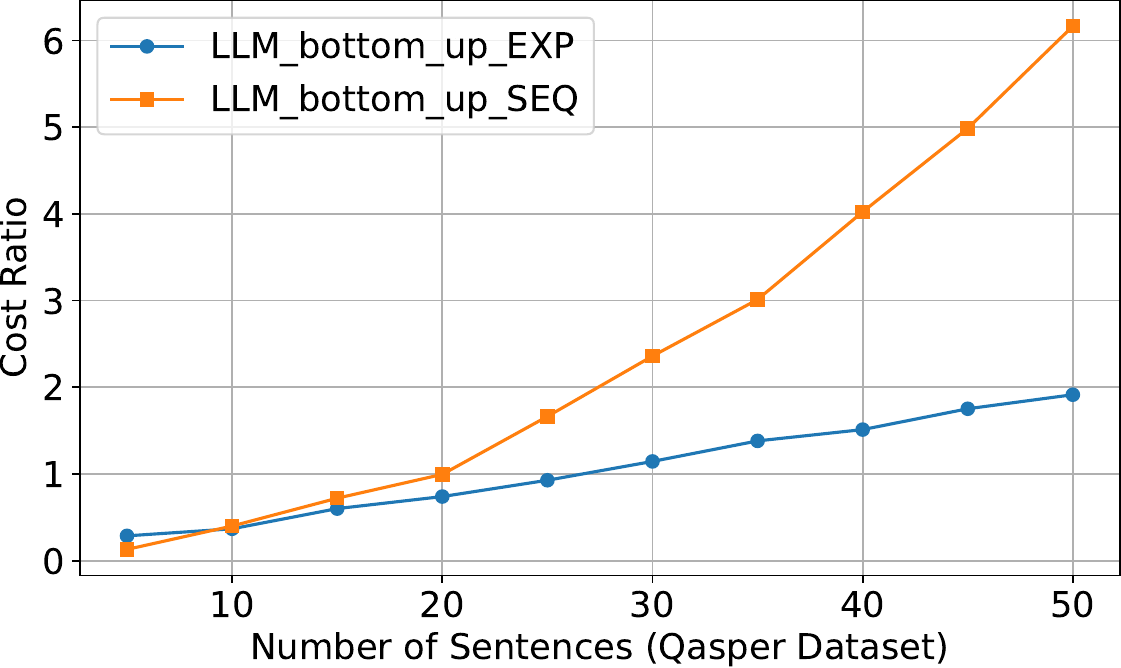}
    \vspace{-8.3mm}
    \captionof{figure}{\small Cost Ratio. (\code{EXP} vs. \code{SEQ})}
    \label{fig:cost-exp-seq}
\end{minipage}
\hfill
\begin{minipage}[b]{0.75\textwidth}
    \centering
    \vspace{-2mm}
    \includegraphics[width=0.32\linewidth]{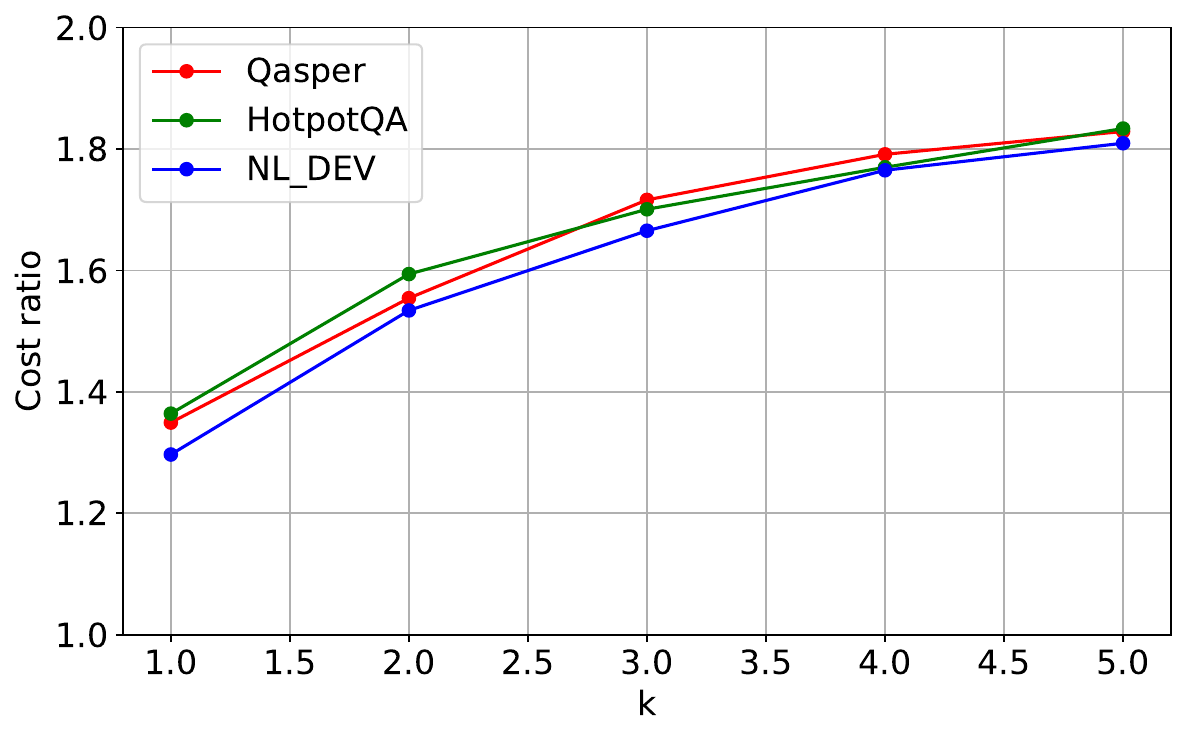}\hfill
    \includegraphics[width=0.32\linewidth]{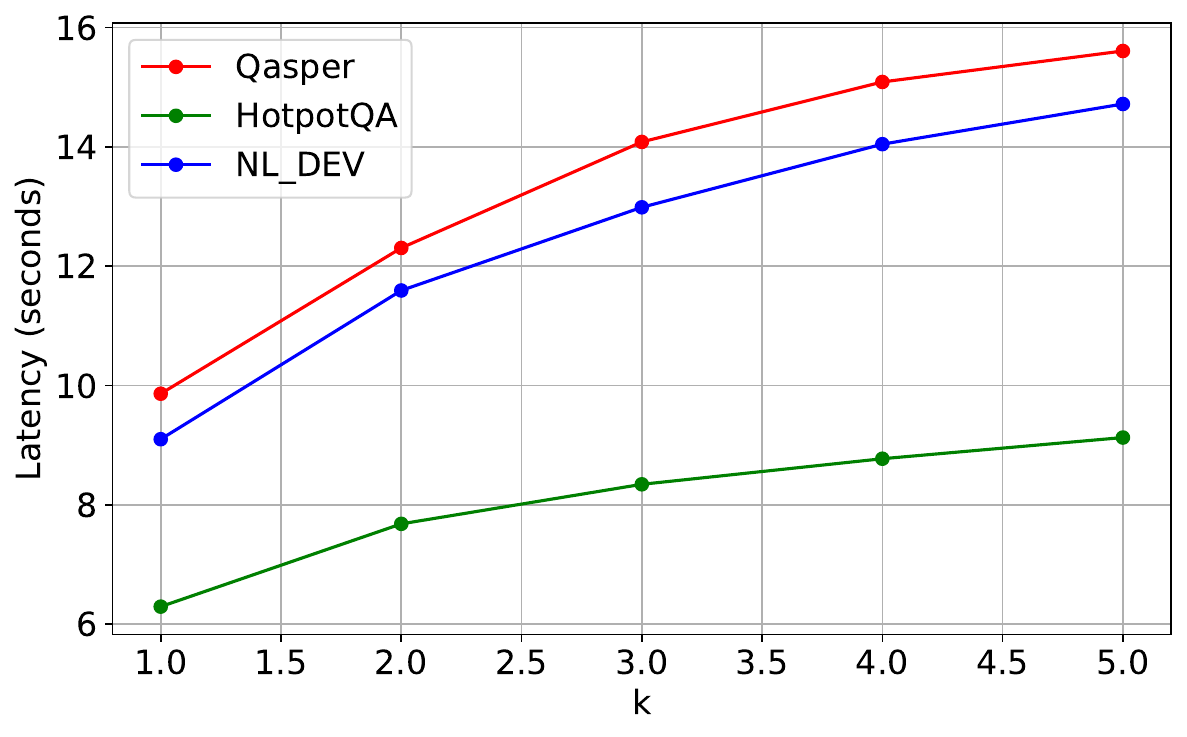}\hfill
    \includegraphics[width=0.32\linewidth]{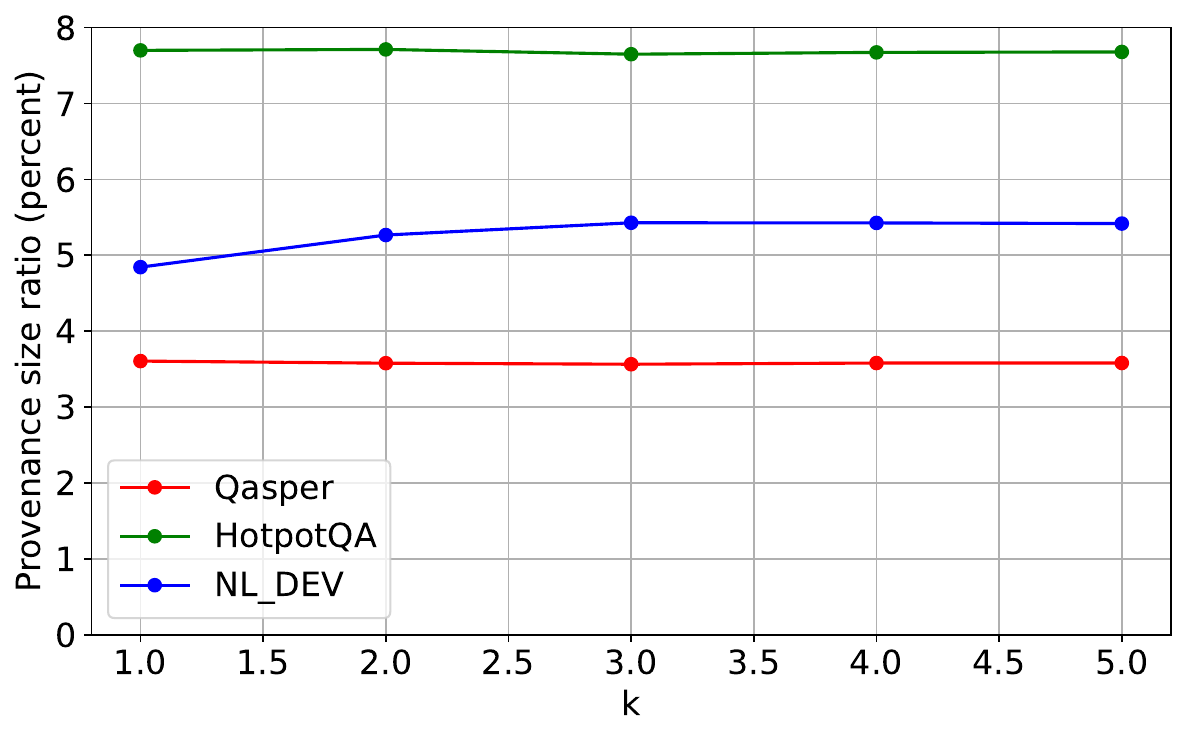}
    \vspace{-5mm}
    \captionof{figure}{\small Cost Ratio, Latency, and Provenance Size Ratio of Top-$k$ Provenance.}
    \label{fig:top-k-eval}
\end{minipage}
\papertext{\vspace{-5mm}}
\end{figure*}

\begin{table}[bt]
\small
\vspace{-10pt}
\centering
\scalebox{0.63}{
\begin{tabular}{|c|c|c|c|c|c|c|c|c|c|c|c|c|c|}
\hline
\textbf{Methods} & \textbf{Models} & \multicolumn{6}{c|}{\textbf{Movie (64k Tokens)}} & \multicolumn{6}{c|}{\textbf{Restaurant (64k Tokens)}}  \\ \hline & & \multicolumn{3}{c|}{\textbf{$Q_{simple}$}} & \multicolumn{3}{c|}{\textbf{$Q_{complex}$}} & \multicolumn{3}{c|}{\textbf{$Q_{simple}$}} & \multicolumn{3}{c|}{\textbf{$Q_{complex}$}} \\ \hline 
& & \textbf{$eR$} & \textbf{$R$} & \rtwo{\textbf{$A$}} &  \textbf{$eR$} & \textbf{$R$} & \rtwo{\textbf{$A$}} & \textbf{$eR$} & \textbf{$R$} & \rtwo{\textbf{$A$}} & \textbf{$eR$} & \textbf{$R$} & \rtwo{\textbf{$A$}} \\ \hline
\textbf{\sys} & \textbf{E\_adaptive}  & \textbf{1} & \textbf{1} & \rtwo{\textbf{1}} & \textbf{0.98} & \textbf{1} & \rtwo{\textbf{1}} & \textbf{0.98} & \textbf{1}  & \rtwo{\textbf{1}} & \textbf{0.96} & \textbf{1} & \rtwo{\textbf{1}} \\ \hline
\multirow{3}{*}{\textbf{LLM}} & \textbf{gpt-4o-mini} & 0.62 & 0.76 & \rtwo{0.74} & 0.58 & 0.75 & \rtwo{0.73} & 0.72 & 0.84 & \rtwo{0.84} & 0.73 & 0.81 & \rtwo{0.79} \\  
&  \textbf{gpt-4o} & 0.64 & 0.89 & \rtwo{0.86} & 0.62 & 0.83 & \rtwo{0.82} & 0.74 & 0.9 & \rtwo{0.88} & 0.79 & 0.86 & \rtwo{0.86} \\ 
&  \textbf{gemini-2-flash} & 0.83 & 0.91 & \rtwo{0.9} & 0.67 & 0.8 & \rtwo{0.76} & 0.7 & 0.86 & \rtwo{0.83} & 0.73 & 0.79 & \rtwo{0.78} \\ \hline 
\multirow{4}{*}{\textbf{Retrieval}} & \textbf{RAG-EXACT} & 0.44 & 0.44 & \rtwo{0.44} & 0.32 & 0.32 & \rtwo{0.32} & 0.68 & 0.68 & \rtwo{0.68}  & 0.62 & 0.62 & \rtwo{0.62} \\ 
& \textbf{RAG-1\%} & 0 & 0.48 & \rtwo{0.48} &  0 & 0.42& \rtwo{0.41} & 0 & 0.82 & \rtwo{0.81}& 0 & 0.74& \rtwo{0.74} \\
& \textbf{RAG-5\%} & 0 & 0.72& \rtwo{0.7} & 0 & 0.58& \rtwo{0.56} & 0 & 0.92& \rtwo{0.91} & 0 & 0.86& \rtwo{0.86} \\ 
& \textbf{RAG-10\%} & 0 & 0.8& \rtwo{0.76} & 0 & 0.74& \rtwo{0.71} & 0 & 0.96& \rtwo{0.93} & 0 & 0.94& \rtwo{0.93} \\ \hline
\end{tabular}}
\caption{\small \revised{Recovery} and \rtwo{Accuracy} in TableQA Workloads. (E stands for Embedding.)}
\label{tab:accuracy}
\vspace{-1em}
\end{table}


\vspace{1pt}
\topic{Experiment 4: Exponential-Greedy (EXP) versus Sequential-Greedy (SEQ) Strategies} Consider \code{SEQ} and \code{EXP} \trs{in the refine phase} in Table~\ref{tab:top-1}.  \code{EXP} consistently outperforms \code{SEQ}, 11.6\%, 12\%, and 18.6\% cheaper, 
 and 1.3$\times$, 1.34$\times$, and 1.1$\times$ faster for Qasper, NL\_DEV, and HotpotQA, respectively. When the size of the input provenance is larger, \code{EXP} shows greater improvement over \code{SEQ}, up to 24\% cheaper and 1.49$\times$ faster. \delete{To provide a breakdown of their performance, we plot their cost ratios conditioned on the number of sentences in the provenance returned from the pruning phase in Figure~\ref{fig:cost-exp-seq} on Qasper. Both strategies perform similarly on small inputs, with \code{SEQ} slightly preferred when the number of sentences is less than 10\trs{ in Figure~\ref{fig:cost-exp-seq}}, while \code{EXP} scales better to larger inputs.}\cradd{Figure~\ref{fig:cost-exp-seq} plots their cost ratios vs.\ input size on Qasper: \code{SEQ} is preferred below 10 sentences, while \code{EXP} scales better beyond that.}   \techreport{Similar observations hold for other strategies and workloads in Table~\ref{tab:distribution}, where $CP_2$ denotes the crossover point based on the number of input sentences. Specifically, when the input contains fewer than $CP_2$ sentences, \code{SEQ} has a lower cost for more than half of the question-document pairs. An average percentile (Pct for $CP_2$) below 40\%, implying that only less than 40\% of inputs favor \code{SEQ}, explains why \code{EXP} has an overall lower cost. } 
Overall, when the number of sentences is small (10 is an empirical crossover point), \code{SEQ} is recommended for its simplicity and lower cost. Otherwise, \code{EXP} is clearly preferred.


\vspace{1pt}
\topic{Experiment 5: \revised{Recovery} of True Provenance in TableQA Workloads} 
To measure how well methods can recover the true provenance, we report both recovery ($R$) and exact recovery ($eR$) \rtwo{along with accuracy ($A$) for \sys} against four variants of RAG- and LLM-generated provenances in Table~\ref{tab:accuracy}.  \sys uses an adaptive strategy using embedding model as the ranker, where \code{SEQ} or \code{EXP} is chosen based on the provenance size (fewer than 10 sentences or not) returned in the previous phase. 
\techreport{All RAGs rank tuples based on embedding similarity to the question–answer pair, using the same embedding model as \sys.}  The top-$k$ tuples are used in RAG-EXACT, where $k$ matches the size of the true provenance. In the RAG-$p$\% strategy, the top-$p$\% of tuples are returned.

\sys can always recover the superset of true provenance as indicated by $R=1$, while it recovers the exact true tuples for over 98\% and 96\% of questions in Movie and Restaurant workloads, respectively. 
In contrast, RAG-EXACT achieves significantly lower $eR$, recovering the  true provenance in only 32\% to 68\% of cases across the two workloads.  Increasing the size of the returned provenance to 10\% (RAG-10\%) improves recovery but introduces many false positives (over 50 times larger than the ground-truth provenance), making it less interpretable. \techreport{For example, in Restaurant, RAG-10\% returns 106 tuples, while the true provenance contains only 1 tuple, undermining its usefulness.} 
LLM-based baselines recover the exact ground-truth provenance in 50--70\% of cases across different LLMs and workloads, and fail to capture the true in over 20\% of complex workloads.  \rtwo{Finally, the provenance inferred by \sys can always reproduce the answer, as indicated by $A=1$, while other approaches have no guarantee of producing verifiable provenance. }


\vspace{1pt}
\topic{Experiment 6: Top-k Provenance}  
We examine the performance of \code{top-k-provenance} with $k$ ranging from 1 to 5 in Figure~\ref{fig:top-k-eval}. 
 The algorithm stops until {\em up to} $k$ minimal provenances are found. 
As $k$ increases, the cost ratio (and similarly, latency) shows a modest increase, rising from approximately 1.3 to 1.8 across all workloads. This is because during the tree-based search in \code{top-k-provenance}, intermediate results computed for earlier provenances can often be reused when identifying subsequent ones. Consequently, later provenances frequently begin from smaller initial text sequences, leading to lower computation cost and latency. \delete{For example, identifying the provenance for node 9 in Figure~\ref{fig:top-k} starts from the text corresponding to node 9 rather than the full text, making the computation faster when returning the second provenance.}\cradd{For example, the second provenance starts from node 9's subtree in Figure~\ref{fig:top-k} rather than the full text, reducing cost.}  
\techreport{Morever, the average size of the returned minimal provenances remains stable, which is consistent with our findings in Table~\ref{tab:top-1}. }


\begin{table}[bt]
\papertext{\vspace{-10pt}}
\scriptsize
\centering
\scalebox{0.9}{
\begin{tabular}{|c|c|c|c|c|c|c|c|c|}
\hline
& \multicolumn{2}{c|}{\textbf{police\_reports}} & \multicolumn{2}{c|}{\textbf{legal}}  & \multicolumn{2}{c|}{\textbf{paper}} & \multicolumn{2}{c|}{\textbf{finance}} \\ \hline
& $H_q$ & $H_e$ & $H_q$ & $H_e$ & $H_q$ & $H_e$ & $H_q$ & $H_e$ \\ \hline 
\textbf{\sys} & \textbf{4.1}  & \textbf{75\%} &\textbf{5} & \textbf{100\%} &\textbf{4.9} & \textbf{88\%} &\textbf{5} & \textbf{63\%} \\ \hline 
\textbf{RAG} & 3.1 & 38\% &3.1 & 25\% &3.3 & 20\% &2.8 & 0\% \\ \hline 
\end{tabular}}
\caption{\rtwo{\small User Study. ($H_q$ ranges from 1 to 5 and measures the usefulness of the provenance, while $H_e$ is the percentage of cases in which a provenance contains the right amount of information.)}}
\papertext{\vspace{-3mm}}
\label{tab:user-study}
\end{table}

\vspace{-3mm}
\subsection{User Study}
\label{subsec:user-study}

\rtwo{Our experimental evaluation thus far
has focused on the accuracy and recoverability of provenance,
as opposed to its usefulness.
We conducted a user study to assess 
whether humans
find the provenance returned by \sys to be useful 
 when validating answers from LLMs.
We selected four documents from 
varied domains: 
{\em police\_reports}, 
{\em legal\_docs},  {\em scientific papers}, and {\em finance\_docs}, drawn from our legal partners~\cite{clean} and open-source benchmarks~\cite{dasigi2021dataset,finance}. 
We selected three questions per document, 
testing three scenarios: 
1) provenance generated by \sys; 
2) RAG-generated provenance, where up to 5\% (or 500 tokens, whichever is smaller) for ease of human interpretation; and 3) no provenance. 
Our goal was to compare the effectiveness of
\sys's provenance with that returned by RAG.
The last scenario was targeted at 
understanding human review effort to
determine correctness  
when no provenance is provided. 
We had eight participants in our user study,
who considered each scenario for each document, amounting
to 12 scenarios in total. 
\techreport{We randomized the order of settings across participants.}
\papertext{The details of other setups  
can be found in~\cite{blip}.}} 

\noindent \rtwo{{\bf Metrics.} 
Participants rated the {\em quality}
of a returned provenance from 1 (low)--5 (high),
assessing its usefulness in supporting
the answer returned by the LLM;
we let $H_q$ be the average quality across
participants for each dataset.
We also asked whether the provenance was both necessary and sufficient (i.e., containing no extra information and missing no critical information); let $H_e$ denote the percentage of such cases, averaged across participants per dataset. }

\rtwo{Table~\ref{tab:user-study} 
reports the average $H_q$ and $H_e$ over eight participants.
Overall, participants found the provenance 
inferred by \sys to be useful for judging answer correctness, 
as indicated by the overall $H_q$ of 4.75 (out of 5)\papertext{.}  
\techreport{across all documents.} 
One participant (P3) commented that ``{\em this seems to contain 
the right amount of information, 
including proper context to figure out the answer.}'' 
\delete{Cases where participants were not satisfied were
primarily due to semantic ambiguity\papertext{,}
\techreport{in the concepts contained in the provenance, }
where the LLM may carry different interpretations from humans.
For example, to identify officers using force with an answer ``John Smith'', when interpreting the provenance ``{\em Use of force details. Involved officer: reporting officer -- John Smith}\footnote{Name anonymized for privacy.}'',
the term ``reporting officer'' causes
ambiguity as to whether John
actually used force, indicating more context
would have been helpful for assessing validity.}\cradd{Dissatisfaction arose mainly from semantic ambiguity between LLM and human interpretations. For example, the provenance ``{\em Use of force details. Involved officer: reporting officer -- John Smith}\footnote{Name anonymized for privacy.}'' left it unclear whether John actually used force, suggesting more context could help.}}

\rtwo{RAG-based provenance was deemed to be 
less effective, with an overall $H_q$ of 3.1. One participant (P2) complained: “{\em This looks more like a search result than provenance, and provenance should explain why this answer. }” 
Participants also found \sys's 
provenance to contain the right amount 
of information 82\% of the time, 61\% higher 
than that for RAG-based provenance. 
\delete{When participants felt that \sys
returned extra information, this was typically because
some of this information was not necessary given
the participants' domain knowledge. For example, to identify the number of employees at Amazon from its 10-K financial report with the answer ``29,239'', users can interpret the provenance ``{\em we employed 29,239 people...}'' using their knowledge that ``we'' refers to Amazon. In contrast, the additional provenance returned by \sys that includes the earlier mention of ``Amazon'' (prior to ``we'') is often deemed unnecessary for users, though it is required for LLMs to interpret the answer.}\cradd{When \sys returned extra information, it was typically unnecessary given users' domain knowledge. For example, users inferred that ``{\em we employed 29,239 people...}'' referred to Amazon without the earlier ``Amazon'' mention \sys includes---context needed by LLMs but not humans.}}


\rtwo{In the no-provenance setting, or when the provenance was insufficient (e.g., for RAG), participants often reverted to keyword search. As P7 noted:
“{\em Evidence containing keywords that match the question and answer feels more trustworthy. But I'm not sure if I selected the right keywords due to false positives.}” This keyword search requires multiple passes, typically around 2 distinct keyword search queries, skimming over 18 matches on average
to locate relevant content. 
This inefficiency worsens when keywords 
are frequently present in the document or when documents are large. }

\rtwo{\delete{We found two potential areas of improvement.
First, when the returned provenance was large,
even if minimal,
it was difficult
to interpret.
Providing mechanisms to summarize this provenance
\techreport{(as in relational provenance summarization~\cite{lee2020approximate,alomeir2023summarizing})} may be helpful, such as
highlighting keywords that match the question or answer.}\cradd{We found two areas of improvement. First, large provenance, even if minimal, was hard to interpret; highlighting keywords matching the question or answer could help.} “{\em Highlighting sentences within a long provenance and indicating their correspondence to each part of the answer helps a lot}”,  as noted by P4. 
Second, providing mechanisms to ``tighten'' or ``expand''
the provenance (e.g., dropping the least relevant portions,
or including the most relevant excluded portions), 
depending on the degree of familiarity
with the domain, may also be helpful.  
}



\vspace{-2mm}
\section{Related Work}
\label{sec:relatedwork}
\vspace{-1mm}

Our work is related to provenance in relational databases, scientific workflows, and LLM inference. 

\topic{Database Provenance} We compared against database provenance in Section~\ref{subsec:database-prov-relationship}. Here, we provide a more general background. 
Database provenance~\cite{cheney2009provenance,buneman2007provenance,glavic2021data,karvounarakis2010querying} focuses on tracking data lineage to ensure reproducibility. 
Given a SQL query and inputs, provenance investigates: 
(1) the input tuples that produce the output (why-provenance)~\cite{cui2000tracing,buneman2001and,buneman2002propagation}, (2) the specific locations (e.g., table, row, or column) in the input from which elements in the output originate (where-provenance)~\cite{buneman2001and,bhagwat2005annotation}, and (3) the transformations applied to the input to derive the output (how-provenance)~\cite{green2007update,green2007provenance,green2007orchestra,hernandez2021computing}. 
Our setting poses additional challenges:
namely the ambiguity of queries and data,
the black-box nature of LLMs,
and dealing with LLM costs in inferring provenance.

\topic{Provenance in Data Science Pipelines} 
Provenance has been widely studied in scientific workflow systems, which use a dataflow model.
VisTrails~\cite{scheidegger2008querying} enhances provenance query efficiency by enabling users to reuse common sub-workflows. 
Subzero~\cite{wu2013subzero} leverages data locality, where multiple operators in a workflow often share similar input cells. 
DfAnalyzer~\cite{silva2018dfanalyzer} tracks provenance at runtime without waiting for workflow completion.
Adriane et al.~\cite{chapman2020capturing} focuses on machine learning workflows, computing provenance to debug model performance. 
\sys instead focuses on computing verifiable provenance within a single LLM invocation, and extending \sys to workflows would be interesting future work. 



\topic{Provenance in LLM Inference}
Overall, LLM provenance has only been studied in limited settings. Prior work, including
LLM Attributor~\cite{lee2025llm},  ContextCite~\cite{cohen2024contextcite}, and Hithesh et al~\cite{sankararaman2024provenance} all identify top-k relevant tokens as provenance, using embedding similarities, next-token probabilities, or relevance scores predicted by pretrained models. However, such heuristic methods lack verifiability. 
Notably, these approaches can be integrated into \sys as candidate ranking models in our pruning phase. \trs{Note that LLM Attributor~\cite{lee2025llm} requires access to LLM training data, which is often unavailable.} 
\vspace{-1mm}
\section{Conclusion}
\label{sec:conclusion}
Ensuring trust in LLM-powered data processing tasks is critical for maintaining usability. We presented \sys, a framework for inferring verifiable minimal provenance—subsets of the input text that reproduce an equivalent answer to the one obtained from the full input. \sys introduces a set of strategies, each guaranteed to return a minimal verifiable provenance, including an adaptive strategy that combines the strengths of multiple strategies, to infer provenance more efficiently. \techreport{In scenarios where a single provenance is insufficient to establish confidence in the answer's correctness, \sys offers a strategy to identify $k$ distinct minimal provenances, enabling users to gain stronger trust.} Extensive experiments across five datasets demonstrate that \sys is guaranteed to return a provenance that can reproduce the answer at a low cost, achieving over 30\% higher accuracy than the best-performing baseline. 


\begin{acks}
We acknowledge support from grants DGE-2243822, IIS-2129008, IIS-1940759, and IIS-1940757 awarded by the National Science Foundation, funds from the State of California, funds from the Alfred P. Sloan Foundation, as well as EPIC lab sponsors: Adobe, Google, G-Research, Jane Street, Microsoft, PromptQL, Sigma Computing, Snowflake, and Bridgewater. Compute credits were provided by Azure, Modal, NSF (via NAIRR), and OpenAI. 
\end{acks}
\balance
\bibliographystyle{ACM-Reference-Format}
\bibliography{refs}

@String{Computing = "Computing" }

@String{Springer = "Springer-Verlag" }

@inproceedings{lee2025llm,
  title={Llm attributor: Interactive visual attribution for llm generation},
  author={Lee, Seongmin and Wang, Zijie J and Chakravarthy, Aishwarya and Helbling, Alec and Peng, ShengYun and Phute, Mansi and Chau, Duen Horng Polo and Kahng, Minsuk},
  booktitle={Proceedings of the AAAI Conference on Artificial Intelligence},
  volume={39},
  number={28},
  pages={29655--29657},
  year={2025}
}

@article{gao2023enabling,
  title={Enabling large language models to generate text with citations},
  author={Gao, Tianyu and Yen, Howard and Yu, Jiatong and Chen, Danqi},
  journal={arXiv preprint arXiv:2305.14627},
  year={2023}
}

@article{sankararaman2024provenance,
  title={Provenance: A Light-weight Fact-checker for Retrieval Augmented LLM Generation Output},
  author={Sankararaman, Hithesh and Yasin, Mohammed Nasheed and Sorensen, Tanner and Di Bari, Alessandro and Stolcke, Andreas},
  journal={arXiv preprint arXiv:2411.01022},
  year={2024}
}

@manual{decoding,
  title = {\url{https://huggingface.co/blog/mlabonne/decoding-strategies}},
  year = {2025}
    }

@manual{floating,
  title = {\url{https://www.taivo.ai/__are-llms-deterministic/?utm_source=chatgpt.com}},
  year = {2025}
    }

@manual{finance,
  title = {\url{https://www.sec.gov/data-research/sec-markets-data/financial-statement-data-sets}},
  year = {2026}
    }

@article{lee2020approximate,
  title={Approximate summaries for why and why-not provenance (extended version)},
  author={Lee, Seokki and Lud{\"a}scher, Bertram and Glavic, Boris},
  journal={arXiv preprint arXiv:2002.00084},
  year={2020}
}

@article{alomeir2023summarizing,
  title={Summarizing provenance of aggregate query results in relational databases},
  author={AlOmeir, Omar and Lai, Eugenie Y and Milani, Mostafa and Pottinger, Rachel},
  journal={IEEE Transactions on Knowledge and Data Engineering},
  volume={35},
  number={10},
  pages={10695--10709},
  year={2023},
  publisher={IEEE}
}

@manual{clean,
  title = {\url{https://journalism.berkeley.edu/berkeley-and-stanford-police-database/}},
  year = {2025}
    }

@manual{blip,
  title = {Technical Report of Bolt-on, Verifiable Provenance for LLM-Powered Data Processing; {\url{https://drive.google.com/file/d/16thcosiFUVvwiMr3pgzZRHNejOV52lNp/view?usp=sharing}}} ,
  year = {2025}
    }

@article{wolff2025well,
  title={How well do LLMs reason over tabular data, really?},
  author={Wolff, Cornelius and Hulsebos, Madelon},
  journal={arXiv preprint arXiv:2505.07453},
  year={2025}
}

@article{kwiatkowski2019natural,
  title={Natural questions: a benchmark for question answering research},
  author={Kwiatkowski, Tom and Palomaki, Jennimaria and Redfield, Olivia and Collins, Michael and Parikh, Ankur and Alberti, Chris and Epstein, Danielle and Polosukhin, Illia and Devlin, Jacob and Lee, Kenton and others},
  journal={Transactions of the Association for Computational Linguistics},
  volume={7},
  pages={453--466},
  year={2019},
  publisher={MIT Press One Rogers Street, Cambridge, MA 02142-1209, USA journals-info~…}
}

@article{cheney2009provenance,
  title={Provenance in databases: Why, how, and where},
  author={Cheney, James and Chiticariu, Laura and Tan, Wang-Chiew and others},
  journal={Foundations and Trends{\textregistered} in Databases},
  volume={1},
  number={4},
  pages={379--474},
  year={2009},
  publisher={Now Publishers, Inc.}
}

@inproceedings{buneman2007provenance,
  title={Provenance in databases},
  author={Buneman, Peter and Tan, Wang-Chiew},
  booktitle={Proceedings of the 2007 ACM SIGMOD international conference on Management of data},
  pages={1171--1173},
  year={2007}
}

@article{islam2023financebench,
  title={Financebench: A new benchmark for financial question answering},
  author={Islam, Pranab and Kannappan, Anand and Kiela, Douwe and Qian, Rebecca and Scherrer, Nino and Vidgen, Bertie},
  journal={arXiv preprint arXiv:2311.11944},
  year={2023}
}

@article{arora2025healthbench,
  title={Healthbench: Evaluating large language models towards improved human health},
  author={Arora, Rahul K and Wei, Jason and Hicks, Rebecca Soskin and Bowman, Preston and Qui{\~n}onero-Candela, Joaquin and Tsimpourlas, Foivos and Sharman, Michael and Shah, Meghan and Vallone, Andrea and Beutel, Alex and others},
  journal={arXiv preprint arXiv:2505.08775},
  year={2025}
}

@article{
liang2023holistic,
title={Holistic Evaluation of Language Models},
author={Percy Liang and Rishi Bommasani and Tony Lee and Dimitris Tsipras and Dilara Soylu and Michihiro Yasunaga and Yian Zhang and Deepak Narayanan and Yuhuai Wu and Ananya Kumar and Benjamin Newman and Binhang Yuan and Bobby Yan and Ce Zhang and Christian Alexander Cosgrove and Christopher D Manning and Christopher Re and Diana Acosta-Navas and Drew Arad Hudson and Eric Zelikman and Esin Durmus and Faisal Ladhak and Frieda Rong and Hongyu Ren and Huaxiu Yao and Jue WANG and Keshav Santhanam and Laurel Orr and Lucia Zheng and Mert Yuksekgonul and Mirac Suzgun and Nathan Kim and Neel Guha and Niladri S. Chatterji and Omar Khattab and Peter Henderson and Qian Huang and Ryan Andrew Chi and Sang Michael Xie and Shibani Santurkar and Surya Ganguli and Tatsunori Hashimoto and Thomas Icard and Tianyi Zhang and Vishrav Chaudhary and William Wang and Xuechen Li and Yifan Mai and Yuhui Zhang and Yuta Koreeda},
journal={Transactions on Machine Learning Research},
issn={2835-8856},
year={2023},
url={https://openreview.net/forum?id=iO4LZibEqW},
note={Upto date leaderboard at \url{https://crfm.stanford.edu/helm/lite/latest/}}
}

@article{liu2024large,
  title={Large language models in the clinic: a comprehensive benchmark},
  author={Liu, Fenglin and Li, Zheng and Zhou, Hongjian and Yin, Qingyu and Yang, Jingfeng and Tang, Xianfeng and Luo, Chen and Zeng, Ming and Jiang, Haoming and Gao, Yifan and others},
  journal={arXiv preprint arXiv:2405.00716},
  year={2024}
}

@article{guha2023legalbench,
  title={Legalbench: A collaboratively built benchmark for measuring legal reasoning in large language models},
  author={Guha, Neel and Nyarko, Julian and Ho, Daniel and R{\'e}, Christopher and Chilton, Adam and Chohlas-Wood, Alex and Peters, Austin and Waldon, Brandon and Rockmore, Daniel and Zambrano, Diego and others},
  journal={Advances in neural information processing systems},
  volume={36},
  pages={44123--44279},
  year={2023}
}

@article{glavic2021data,
  title={Data provenance},
  author={Glavic, Boris and others},
  journal={Foundations and Trends{\textregistered} in Databases},
  volume={9},
  number={3-4},
  pages={209--441},
  year={2021},
  publisher={Now Publishers, Inc.}
}

@inproceedings{karvounarakis2010querying,
  title={Querying data provenance},
  author={Karvounarakis, Grigoris and Ives, Zachary G and Tannen, Val},
  booktitle={Proceedings of the 2010 ACM SIGMOD International Conference on Management of data},
  pages={951--962},
  year={2010}
}

@article{cui2000tracing,
  title={Tracing the lineage of view data in a warehousing environment},
  author={Cui, Yingwei and Widom, Jennifer and Wiener, Janet L},
  journal={ACM Transactions on Database Systems (TODS)},
  volume={25},
  number={2},
  pages={179--227},
  year={2000},
  publisher={ACM New York, NY, USA}
}

@inproceedings{buneman2001and,
  title={Why and where: A characterization of data provenance},
  author={Buneman, Peter and Khanna, Sanjeev and Wang-Chiew, Tan},
  booktitle={Database Theory—ICDT 2001: 8th International Conference London, UK, January 4--6, 2001 Proceedings 8},
  pages={316--330},
  year={2001},
  organization={Springer}
}

@inproceedings{buneman2002propagation,
  title={On propagation of deletions and annotations through views},
  author={Buneman, Peter and Khanna, Sanjeev and Tan, Wang-Chiew},
  booktitle={Proceedings of the twenty-first ACM SIGMOD-SIGACT-SIGART symposium on Principles of database systems},
  pages={150--158},
  year={2002}
}

@inproceedings{green2007update,
  title={Update exchange with mappings and provenance},
  author={Green, Todd J and Karvounarakis, Grigoris and Ives, Zachary G and Tannen, Val},
  year={2007},
  organization={VLDB}
}

@inproceedings{green2007provenance,
  title={Provenance semirings},
  author={Green, Todd J and Karvounarakis, Grigoris and Tannen, Val},
  booktitle={Proceedings of the twenty-sixth ACM SIGMOD-SIGACT-SIGART symposium on Principles of database systems},
  pages={31--40},
  year={2007}
}

@inproceedings{green2007orchestra,
  title={ORCHESTRA: Facilitating collaborative data sharing},
  author={Green, Todd J and Karvounarakis, Grigoris and Taylor, Nicholas E and Biton, Olivier and Ives, Zachary G and Tannen, Val},
  booktitle={Proceedings of the 2007 ACM SIGMOD international conference on Management of data},
  pages={1131--1133},
  year={2007}
}

@article{bhagwat2005annotation,
  title={An annotation management system for relational databases},
  author={Bhagwat, Deepavali and Chiticariu, Laura and Tan, Wang-Chiew and Vijayvargiya, Gaurav},
  journal={The VLDB Journal},
  volume={14},
  pages={373--396},
  year={2005},
  publisher={Springer}
}

@article{hernandez2021computing,
  title={Computing how-provenance for SPARQL queries via query rewriting},
  author={Hern{\'a}ndez, Daniel and Gal{\'a}rraga, Luis and Hose, Katja},
  journal={Proceedings of the VLDB Endowment},
  volume={14},
  number={13},
  pages={3389--3401},
  year={2021},
  publisher={VLDB Endowment}
}

@article{chapman2020capturing,
  title={Capturing and querying fine-grained provenance of preprocessing pipelines in data science},
  author={Chapman, Adriane and Missier, Paolo and Simonelli, Giulia and Torlone, Riccardo},
  journal={Proceedings of the VLDB Endowment},
  volume={14},
  number={4},
  pages={507--520},
  year={2020},
  publisher={VLDB Endowment}
}

@article{silva2018dfanalyzer,
  title={DfAnalyzer: runtime dataflow analysis of scientific applications using provenance},
  author={Silva, V{\'\i}tor and de Oliveira, Daniel and Valduriez, Patrick and Mattoso, Marta},
  journal={Proceedings of the VLDB Endowment},
  volume={11},
  number={12},
  pages={2082--2085},
  year={2018},
  publisher={VLDB Endowment}
}

@inproceedings{scheidegger2008querying,
  title={Querying and re-using workflows with VsTrails},
  author={Scheidegger, Carlos E and Vo, Huy T and Koop, David and Freire, Juliana and Silva, Claudio T},
  booktitle={Proceedings of the 2008 ACM SIGMOD international conference on Management of data},
  pages={1251--1254},
  year={2008}
}

@inproceedings{jin2019pubmedqa,
  title={PubMedQA: A Dataset for Biomedical Research Question Answering},
  author={Jin, Qiao and Dhingra, Bhuwan and Liu, Zhengping and Cohen, William and Lu, Xinghua},
  booktitle={Proceedings of the 2019 Conference on Empirical Methods in Natural Language Processing and the 9th International Joint Conference on Natural Language Processing (EMNLP-IJCNLP)},
  pages={2567--2577},
  year={2019}
}

@article{russo2025abacus,
  title={Abacus: A Cost-Based Optimizer for Semantic Operator Systems},
  author={Russo, Matthew and Sudhir, Sivaprasad and Vitagliano, Gerardo and Liu, Chunwei and Kraska, Tim and Madden, Samuel and Cafarella, Michael},
  journal={arXiv preprint arXiv:2505.14661},
  year={2025}
}

@article{patel2024semantic,
  title={Semantic operators: a declarative model for rich, ai-based data processing},
  author={Patel, Liana and Jha, Siddharth and Pan, Melissa and Gupta, Harshit and Asawa, Parth and Guestrin, Carlos and Zaharia, Matei},
  journal={arXiv preprint arXiv:2407.11418},
  year={2024}
}

@article{sun2025quest,
  title={Quest: Query optimization in unstructured document analysis},
  author={Sun, Zhaoze and Deng, Qiyan and Chai, Chengliang and Jin, Kaisen and Guo, Xinyu and Han, Han and Yuan, Ye and Wang, Guoren and Cao, Lei},
  journal={arXiv preprint arXiv:2507.06515},
  year={2025}
}

@article{satriani2025logical,
  title={Logical and physical optimizations for sql query execution over large language models},
  author={Satriani, Dario and Veltri, Enzo and Santoro, Donatello and Rosato, Sara and Varriale, Simone and Papotti, Paolo},
  journal={Proceedings of the ACM on Management of Data},
  volume={3},
  number={3},
  pages={1--28},
  year={2025},
  publisher={ACM New York, NY, USA}
}

@article{jo2024thalamusdb,
  title={Thalamusdb: Approximate query processing on multi-modal data},
  author={Jo, Saehan and Trummer, Immanuel},
  journal={Proceedings of the ACM on Management of Data},
  volume={2},
  number={3},
  pages={1--26},
  year={2024},
  publisher={ACM New York, NY, USA}
}

@article{wang2025unify,
  title={Unify: A System For Unstructured Data Analytics},
  author={Wang, Jiayi and Li, Yuan and Wu, Jianming and Xu, Shihui and Li, Guoliang},
  journal={Proceedings of the VLDB Endowment},
  volume={18},
  number={12},
  pages={5287--5290},
  year={2025},
  publisher={VLDB Endowment}
}

@article{hendrycks2021cuad,
  title={Cuad: An expert-annotated nlp dataset for legal contract review},
  author={Hendrycks, Dan and Burns, Collin and Chen, Anya and Ball, Spencer},
  journal={arXiv preprint arXiv:2103.06268},
  year={2021}
}

@inproceedings{wang2018glue,
  title={GLUE: A multi-task benchmark and analysis platform for natural language understanding},
  author={Wang, Alex and Singh, Amanpreet and Michael, Julian and Hill, Felix and Levy, Omer and Bowman, Samuel},
  booktitle={Proceedings of the 2018 EMNLP workshop BlackboxNLP: Analyzing and interpreting neural networks for NLP},
  pages={353--355},
  year={2018}
}

@inproceedings{wu2013subzero,
  title={Subzero: a fine-grained lineage system for scientific databases},
  author={Wu, Eugene and Madden, Samuel and Stonebraker, Michael},
  booktitle={2013 IEEE 29th International Conference on Data Engineering (ICDE)},
  pages={865--876},
  year={2013},
  organization={IEEE}
}

@article{cohen2024contextcite,
  title={Contextcite: Attributing model generation to context},
  author={Cohen-Wang, Benjamin and Shah, Harshay and Georgiev, Kristian and Madry, Aleksander},
  journal={Advances in Neural Information Processing Systems},
  volume={37},
  pages={95764--95807},
  year={2024}
}

@article{dasigi2021dataset,
  title={A dataset of information-seeking questions and answers anchored in research papers},
  author={Dasigi, Pradeep and Lo, Kyle and Beltagy, Iz and Cohan, Arman and Smith, Noah A and Gardner, Matt},
  journal={arXiv preprint arXiv:2105.03011},
  year={2021}
}

@article{yang2018hotpotqa,
  title={HotpotQA: A dataset for diverse, explainable multi-hop question answering},
  author={Yang, Zhilin and Qi, Peng and Zhang, Saizheng and Bengio, Yoshua and Cohen, William W and Salakhutdinov, Ruslan and Manning, Christopher D},
  journal={arXiv preprint arXiv:1809.09600},
  year={2018}
}

@article{lewis2020retrieval,
  title={Retrieval-augmented generation for knowledge-intensive nlp tasks},
  author={Lewis, Patrick and Perez, Ethan and Piktus, Aleksandra and Petroni, Fabio and Karpukhin, Vladimir and Goyal, Naman and K{\"u}ttler, Heinrich and Lewis, Mike and Yih, Wen-tau and Rockt{\"a}schel, Tim and others},
  journal={Advances in neural information processing systems},
  volume={33},
  pages={9459--9474},
  year={2020}
}

@inproceedings{jin2022survey,
  title={A survey on table question answering: recent advances},
  author={Jin, Nengzheng and Siebert, Joanna and Li, Dongfang and Chen, Qingcai},
  booktitle={China Conference on Knowledge Graph and Semantic Computing},
  pages={174--186},
  year={2022},
  organization={Springer}
}

@article{nan2022fetaqa,
  title={FeTaQA: Free-form table question answering},
  author={Nan, Linyong and Hsieh, Chiachun and Mao, Ziming and Lin, Xi Victoria and Verma, Neha and Zhang, Rui and Kry{\'s}ci{\'n}ski, Wojciech and Schoelkopf, Hailey and Kong, Riley and Tang, Xiangru and others},
  journal={Transactions of the Association for Computational Linguistics},
  volume={10},
  pages={35--49},
  year={2022},
  publisher={MIT Press One Broadway, 12th Floor, Cambridge, Massachusetts 02142, USA~…}
}

@article{zhao2024chat2data,
  title={Chat2data: An interactive data analysis system with rag, vector databases and llms},
  author={Zhao, Xinyang and Zhou, Xuanhe and Li, Guoliang},
  journal={Proceedings of the VLDB Endowment},
  volume={17},
  number={12},
  pages={4481--4484},
  year={2024},
  publisher={VLDB Endowment}
}

@article{naeem2024retclean,
  title={RetClean: Retrieval-Based Data Cleaning Using LLMs and Data Lakes},
  author={Naeem, Zan Ahmad and Ahmad, Mohammad Shahmeer and Eltabakh, Mohamed and Ouzzani, Mourad and Tang, Nan},
  journal={Proceedings of the VLDB Endowment},
  volume={17},
  number={12},
  pages={4421--4424},
  year={2024},
  publisher={VLDB Endowment}
}

@article{shankar2024docetl,
  title={DocETL: Agentic Query Rewriting and Evaluation for Complex Document Processing},
  author={Shankar, Shreya and Chambers, Tristan and Shah, Tarak and Parameswaran, Aditya G and Wu, Eugene},
  journal={arXiv preprint arXiv:2410.12189},
  year={2024}
}

@article{lin2024towards,
  title={Towards accurate and efficient document analytics with large language models},
  author={Lin, Yiming and Hulsebos, Madelon and Ma, Ruiying and Shankar, Shreya and Zeigham, Sepanta and Parameswaran, Aditya G and Wu, Eugene},
  journal={arXiv preprint arXiv:2405.04674},
  year={2024}
}

@article{patel2024lotus,
  title={Lotus: Enabling semantic queries with llms over tables of unstructured and structured data},
  author={Patel, Liana and Jha, Siddharth and Guestrin, Carlos and Zaharia, Matei},
  journal={arXiv preprint arXiv:2407.11418},
  year={2024}
}

@article{liu2024declarative,
  title={A declarative system for optimizing ai workloads},
  author={Liu, Chunwei and Russo, Matthew and Cafarella, Michael and Cao, Lei and Chen, Peter Baille and Chen, Zui and Franklin, Michael and Kraska, Tim and Madden, Samuel and Vitagliano, Gerardo},
  journal={arXiv preprint arXiv:2405.14696},
  year={2024}
}

@article{liu2024lost,
  title={Lost in the middle: How language models use long contexts},
  author={Liu, Nelson F and Lin, Kevin and Hewitt, John and Paranjape, Ashwin and Bevilacqua, Michele and Petroni, Fabio and Liang, Percy},
  journal={Transactions of the Association for Computational Linguistics},
  volume={12},
  pages={157--173},
  year={2024},
  publisher={MIT Press One Broadway, 12th Floor, Cambridge, Massachusetts 02142, USA~…}
}

@article{bai2023longbench,
  title={Longbench: A bilingual, multitask benchmark for long context understanding},
  author={Bai, Yushi and Lv, Xin and Zhang, Jiajie and Lyu, Hongchang and Tang, Jiankai and Huang, Zhidian and Du, Zhengxiao and Liu, Xiao and Zeng, Aohan and Hou, Lei and others},
  journal={arXiv preprint arXiv:2308.14508},
  year={2023}
}

@article{zhang2024benchmarking,
  title={Benchmarking the text-to-sql capability of large language models: A comprehensive evaluation},
  author={Zhang, Bin and Ye, Yuxiao and Du, Guoqing and Hu, Xiaoru and Li, Zhishuai and Yang, Sun and Liu, Chi Harold and Zhao, Rui and Li, Ziyue and Mao, Hangyu},
  journal={arXiv preprint arXiv:2403.02951},
  year={2024}
}

@article{adnan2024keyformer,
  title={Keyformer: Kv cache reduction through key tokens selection for efficient generative inference},
  author={Adnan, Muhammad and Arunkumar, Akhil and Jain, Gaurav and Nair, Prashant J and Soloveychik, Ilya and Kamath, Purushotham},
  journal={Proceedings of Machine Learning and Systems},
  volume={6},
  pages={114--127},
  year={2024}
}

@article{cai2024pyramidkv,
  title={Pyramidkv: Dynamic kv cache compression based on pyramidal information funneling},
  author={Cai, Zefan and Zhang, Yichi and Gao, Bofei and Liu, Yuliang and Li, Yucheng and Liu, Tianyu and Lu, Keming and Xiong, Wayne and Dong, Yue and Hu, Junjie and others},
  journal={arXiv preprint arXiv:2406.02069},
  year={2024}
}

@article{liu2024minicache,
  title={Minicache: Kv cache compression in depth dimension for large language models},
  author={Liu, Akide and Liu, Jing and Pan, Zizheng and He, Yefei and Haffari, Reza and Zhuang, Bohan},
  journal={Advances in Neural Information Processing Systems},
  volume={37},
  pages={139997--140031},
  year={2024}
}

@article{gu2024survey,
  title={A survey on llm-as-a-judge},
  author={Gu, Jiawei and Jiang, Xuhui and Shi, Zhichao and Tan, Hexiang and Zhai, Xuehao and Xu, Chengjin and Li, Wei and Shen, Yinghan and Ma, Shengjie and Liu, Honghao and others},
  journal={arXiv preprint arXiv:2411.15594},
  year={2024}
}

@article{zheng2023judging,
  title={Judging llm-as-a-judge with mt-bench and chatbot arena},
  author={Zheng, Lianmin and Chiang, Wei-Lin and Sheng, Ying and Zhuang, Siyuan and Wu, Zhanghao and Zhuang, Yonghao and Lin, Zi and Li, Zhuohan and Li, Dacheng and Xing, Eric and others},
  journal={Advances in Neural Information Processing Systems},
  volume={36},
  pages={46595--46623},
  year={2023}
}

@article{huang2024transform,
  title={Transform Table to Database Using Large Language Models},
  author={Huang, Zezhou and Guo, Jia and Wu, Eugene},
  journal={Proceedings of the VLDB Endowment. ISSN},
  volume={2150},
  pages={8097},
  year={2024}
}

@inproceedings{buss2023generating,
  title={Generating Data Augmentation Queries Using Large Language Models.},
  author={Buss, Christopher and Mosavi, Jasmin and Tokarev, Mikhail and Termehchy, Arash and Maier, David and Lee, Stefan},
  booktitle={VLDB Workshops},
  year={2023}
}

@article{zhuang2023toolqa,
  title={Toolqa: A dataset for llm question answering with external tools},
  author={Zhuang, Yuchen and Yu, Yue and Wang, Kuan and Sun, Haotian and Zhang, Chao},
  journal={Advances in Neural Information Processing Systems},
  volume={36},
  pages={50117--50143},
  year={2023}
}

@article{zhu2024autotqa,
  title={Autotqa: Towards autonomous tabular question answering through multi-agent large language models},
  author={Zhu, Jun-Peng and Cai, Peng and Xu, Kai and Li, Li and Sun, Yishen and Zhou, Shuai and Su, Haihuang and Tang, Liu and Liu, Qi},
  journal={Proceedings of the VLDB Endowment},
  volume={17},
  number={12},
  pages={3920--3933},
  year={2024},
  publisher={VLDB Endowment}
}

@article{caspari2024beyond,
  title={Beyond benchmarks: Evaluating embedding model similarity for retrieval augmented generation systems},
  author={Caspari, Laura and Dastidar, Kanishka Ghosh and Zerhoudi, Saber and Mitrovic, Jelena and Granitzer, Michael},
  journal={arXiv preprint arXiv:2407.08275},
  year={2024}
}

@misc{databricks-llm,
  author       = {Patrick Wendell and Eric Peter and Nicolas Pelaez and Jianwei Xie
                  and Vinny Vijeyakumaar and Linhong Liu and Shitao Li},
  title        = {{Introducing AI Functions: Integrating Large Language Models with Databricks SQL}},
  year         = {2023},
  month        = apr,
  day          = {17},
  howpublished = {\url{https://www.databricks.com/blog/2023/04/18/introducing-ai-functions-integrating-large-language-models-databricks-sql.html}},
  note         = {Accessed: 2025-06-22},
}

@misc{snowflake-llm,
  author       = {Arun Agarwal and Renee Huang},
  title        = {{Introducing Cortex AISQL: Reimagining SQL into AI Query Language for Multimodal Data}},
  year         = {2025},
  month        = June,
  day          = {3},
  howpublished = {\url{https://www.snowflake.com/en/blog/ai-sql-query-language/
}},
  note         = {Accessed: 2025-06-22},
}

@misc{duckdb-llm,
  author       = {Till Döhmen},
  title        = {{Introducing the prompt() Function: Use the Power of LLMs with SQL!}},
  year         = {2024},
  month        = {oct},
  day          = {17},
  howpublished = {\url{https://motherduck.com/blog/sql-llm-prompt-function-gpt-models/}},
  note         = {Accessed: 2025-06-22},
}

@misc{alloydb-llm,
  title        = {Perform intelligent SQL queries using AlloyDB AI query engine},
  author       = {{Google Cloud}},
  howpublished = {\url{https://cloud.google.com/alloydb/docs/ai/evaluate-semantic-queries-ai-operators}},
  note         = {Accessed: 2025-06-22; Last updated: 2025-06-11},
  year         = {2025},
}

\sigmod{\appendixtext{\newpage\input{appendix}}}

\end{document}